\newcommand{\vN}{\mathbf{N}}
\newcommand{\cE}{\mathcal{E}}
\newcommand{\bE}{\mathbb{E}}
\newcommand{\R}{\mathbb{R}}
\newcommand{\pr}{\Pr}
\newcommand{\E}{\bE}      
\newcommand{\TV}{\mathop{\bf TV\/}}
\DeclareMathOperator{\diag}{diag}
\newtheorem{theorem}{Theorem}[section]
\newtheorem{lemma}[theorem]{Lemma}
\newtheorem{corollary}[theorem]{Corollary}
\newtheorem{proposition}[theorem]{Proposition}
\theoremstyle{definition}
\newtheorem{problem}{Problem}
\newtheorem{defn}{Definition}
\newtheorem{remark}{Remark}
\newtheorem{example}{Example}
\newcommand{\gnote}[1]{\textcolor{red}{[Govind: #1]}}
\newcommand{\ynote}[1]{\textcolor{blue}{[Youn: #1]}}
\title{How Many Subpopulations is Too Many? \\ Exponential Lower Bounds for Inferring Population Histories\footnote{\textit{Keywords}: Population History, Coalescent Theory, Hyperexponential/Exponential Mixture Distributions, Parameter Estimation, Sample Complexity, Data Requirement.}}
\author{
Younhun Kim\footnote{Massachusetts Institute of Technology. Department of Mathematics. Email: \url{younhun@mit.edu}.} \and 
Frederic Koehler\footnote{Massachusetts Institute of Technology. Department of Mathematics. Email: \url{fkoehler@mit.edu}. Research is partially supported by NSF Large CCF-1565235 and Ankur Moitra's David and Lucile Packard Fellowship.} \and 
Ankur Moitra\footnote{Massachusetts Institute of Technology. Department of Mathematics and CSAIL. Email: \url{moitra@mit.edu}. This work was supported in part by NSF CAREER Award CCF-1453261, NSF Large CCF-1565235, a David and Lucile Packard Fellowship, an Alfred P. Sloan Fellowship, and an ONR Young Investigator Award.} \and 
Elchanan Mossel\footnote{Massachusetts Institute of Technology. Department of Mathematics and IDSS. Email: \url{elmos@mit.edu}. Partially supported by awards ONR N00014-16-1-2227, NSF CCF1665252
and DMS-1737944.}  \and 
Govind Ramnarayan\footnote{Massachusetts Institute of Technology. CSAIL. Email: \url{govind@mit.edu}. Partially supported by awards NSF CCF1665252 and DMS-1737944.}}
\date{}
\begin{document}
\begin{titlepage}
\maketitle
\thispagestyle{empty}
\begin{abstract}
Reconstruction of population histories is a central problem in population genetics. Existing coalescent-based methods, like the seminal work of Li and Durbin (\emph{Nature}, 2011), attempt to solve this problem using sequence data but have no rigorous guarantees. 
Determining the amount of data needed to \emph{correctly reconstruct} population histories is a major challenge.
Using a variety of tools from information theory, the theory of extremal polynomials, and approximation theory, we prove new sharp information-theoretic lower bounds on the problem of reconstructing \emph{population structure} \----
the history of multiple subpopulations that merge, split and change sizes over time. Our lower bounds are {\em exponential} in the number of subpopulations, even when reconstructing recent histories. 
 We demonstrate the sharpness of our lower bounds by providing algorithms for distinguishing and learning population histories with matching dependence on the number of subpopulations. Along the way and of independent interest, we essentially
 determine the optimal number of samples needed to learn an exponential mixture distribution information-theoretically, proving the upper bound by analyzing natural (and efficient) algorithms for this problem.
\end{abstract}

\end{titlepage}

\section{Introduction} \label{sec:intro}

\subsection{Background: inference of population size history}
\label{sec:background1}


A central task in population genetics is to reconstruct a species' {\em effective population size} over time.
Coalescent theory \cite{nordborg2001coalescent} provides a mathematical framework for understanding the relationship between effective population size and genetic variability. 
In this framework, observations of present-day genetic variability \---- captured by DNA sequences of individuals \---- can be used to make inferences about changes in population size over time.




There are many existing methods for estimating the size history of a \emph{single population} from \emph{sequence data}. Some rely on Maximum Likelihood methods \cite{li2011inference, schiffels2014inferring, Sheehan2013, terhorst2017robust} and others utilize Bayesian inference \cite{Drummond2005, heled2008bayesian, Nielsen2000} along with a variety of simplifying assumptions.
A well-known work of Li and Durbin \cite{li2011inference} is based on using sequence data from just a single human (a single pair of haplotypes) and revolves around the assumption that coalescent trees of alleles along the genome satisfy a certain conditional independence property \cite{mcvean2005approximating}. 
By and large, methods such as these do not have any associated provable guarantees. For example, Expectation-Maximization (EM) is a popular heuristic for maximizing the likelihood but can get stuck in a local maximum. Similarly, Markov Chain Monte Carlo (MCMC) methods are able to sample from complex posterior distributions if they are run for a long enough time, but it is rare to have reasonable bounds on the mixing time. In the absence of provable guarantees, simulations are often used to give some sort of evidence of correctness.




Under what sorts of conditions is it possible to infer a single population history? Kim, Mossel, R{\'a}cz and Ross \cite{kim2015can} gave a strong lower bound on the number of samples needed even when one is given exact coalesence data. In particular, they showed that the number of samples must be at least exponential in the number of generations. Thus there are serious limitations to what kind of information we can hope to glean from (say) sequence data from a single human individual. In a sense, their work provides a quantitative answer to the question: {\em How far back into the past can we hope to reliably infer population size, using the data we currently have?} We emphasize that although they work in a highly idealized setting, this only makes their problem easier (e.g. assuming independent inheritance of loci along the genome and assuming that there are no phasing errors) and thus their lower bounds more worrisome.



\subsection{Our setting: inference of multiple subpopulation histories}

A more interesting and challenging task is the reconstruction of \textit{population structure}, which refers to the sub-division of a single population into several subpopulations that merge, split, and change sizes over time. There are two well-known works that attack this problem using coalescent-based approaches.
Both use sequence data to infer population histories where present-day subpopulations were formed via divergence events of a single, ancestral population in the distant past.
The first is Schiffels and Durbin \cite{schiffels2014inferring}, who used their method to infer the population structure of nine human subpopulations up to about 200,000 years into the past.
More recently, Terhorst, Kamm and Song \cite{terhorst2017robust} inferred population structures of up to three human subpopulations.
Just as in the single population case, these methods do not come with provable guarantees of correctness due to the simplifying assumptions they invoke and the heuristics they employ.

As for theoretical work, the lower bounds proven for a single population trivially carry over to the setting of inferring population structure. However, the lower bound in~\cite{kim2015can} only applies when we are trying to reconstruct events in the distant past, leading us to a natural question: can we infer \emph{recent} population structure, but, when there are multiple subpopulations?

In this paper, we establish strong limitations to inferring the population sizes of multiple subpopulation histories using pairwise coalescent trees.
We prove sample complexity lower bounds that are \emph{exponential in the number of subpopulations}, even for reconstructing recent histories. 
Our results provide a quantitative answer to the question, 
{\em Up to what granularity of dividing a population into multiple subpopulations, can we hope to reliably infer population structure?} 

Our methods incorporate tools from information theory,
approximation theory, and analysis (from \cite{turan1984new}).
To complement our lower bounds, we also give an algorithm for hypothesis testing based on the celebrated Nazarov-Tur\'an lemma~\cite{nazarov1993local}. Our upper and lower bounds match up to constant factors and establish sharp bounds for the number of samples needed to distinguish between two known population structures as a function of the number of subpopulations. 
Finally, for the more general problem of learning the population structure (as opposed to testing which of two given population structures is more accurate) we give an algorithm with provable guarantees based on the Matrix Pencil Method~\cite{MPM} from signal processing.
We elaborate on our results in Section~\ref{sec:results}.

\subsection{Modeling Assumptions}

Our results will apply under the following assumptions: 
(1) individuals are haploids\footnote{Alternatively, diploids whose phasing is provided.},
(2) the genome can be divided into \emph{known} allelic blocks that are inherited independently and
(3) for each pair of blocks, we are given the exact coalescence time.
Indeed, in practice, one must start with sequenced genomes \---- and in the context of recovering events in human history, (potentially unphased) genotypes of diploid individuals. 
The problem of recovering coalescence times from sequences provides a major challenge and often requires one to either know the population history beforehand, or leverage simultaneous recovery of history and coalescence times using various joint models that enable probabilistic inference.

But since the main message of our paper is a {\em lower bound} on the number of exact pairwise coalescent samples needed to recover population history, in practice it would only be harder. Even in our idealized setting, handling $7$ or $8$ subpopulations already requires more data than one could reasonably be assumed to possess. 
Thus, our work provides a rather direct challenge to empirical work in the area: Either results with $7$ or $8$ subpopulations are not to be trusted or there must be some biological reason why the types of population histories that arise in our lower bounds, that are information-theoretically impossible to distinguish from each other using too few samples, can be ruled out.





\subsubsection{The Multiple-Subpopulation Coalescent Model}
\label{sec:coalescent}
Consider a panmictic haploid\footnote{In a diploid population, the exponents are scaled by a constant factor $2$. This can be handled easily via scaling and therefore makes little difference in the analysis.} population, such that each subpopulation evolves according to the standard Wright-Fisher dynamics\footnote{The distinction between the Wright-Fisher and Moran models is of no consequence in this work, as the latter yields the same exponential model in the diffusion limit \cite{blythe2007stochastic}.} \---- we direct the reader to \cite{blythe2007stochastic} for an overview.
For simplicity, we assume no admixture between distinct subpopulations as long as they are separated in the model (i.e. they have not merged into a single population in the time period under consideration).

As a reminder, if one assumes that a single population has size $N$ which is large and constant throughout time, then the time to the most recent common ancestor (TMRCA) of two randomly sampled individuals closely follows the Kingman coalescent \cite{blythe2007stochastic} with exponential rate $N$:
\begin{equation}
\label{eq:coalescent-regular}
	\Pr( T > t ) = \exp(- t/N).
\end{equation}
where $T$, the coalescence time for two randomly chosen individuals, is measured in \textbf{generations}.
Henceforth, we will assume that this is the distribution of $T$ in the single-component case.

If instead we have a population which is partitioned into a collection of distinct subpopulations with non-constant sizes, let $\vN(t)$ be the function that describes the sub-population sizes over time.
As in \cite{kim2015can}, we will assume that the function $\vN(t)$ is piecewise constant with respect to some unknown collection of intervals $I_1, I_2, \ldots$ partitioning the real line.
In particular, for each $t \in I_k$, there is an associated vector of effective subpopulation sizes $\vN(t) = (N_1^{(k)}, \ldots, N_{D_k}^{(k)})$, indexed by the $D_k$ subpopulations present at time $t$.
The indexing need not be consistent across different intervals, as their semantic meaning will change as subpopulations merge and split.
For example, $N_1^{(k)}$ and $N_1^{(k+1)}$ need not always represent the sizes of the same subpopulation.

Consider the case where $\vN(t)$ is constant for all $t \in I = [a,b]$, where $0 < a < b$, with no admixture and no migration in-between subpopulations in the time interval $I$.
In this case, the coalescence time follows the law of a convex combination of exponential functions:
\begin{equation} \label{eq:coalescent-mixture}
	\Pr(T > a + t \;\big|\; T > a)
	=
	\sum_{\ell = 0}^{D} p_{\ell} e^{-\lambda_\ell t}
\end{equation}
where $p_0 + p_1 + \cdots + p_D = 1$, $\lambda_0 = 0$ and the other $\lambda_i$ are $\frac{1}{N_i}$ (refer to \cref{sec:model-derivation} for a more careful treatment). 

The population structure is assumed to undergo changes over time, where the positive direction points towards the past. 
The three possible changes are:
\begin{enumerate}
	\item (Split) One subpopulation at time $t^-$ becomes two subpopulations at time $t$ (i.e. $D_{k} = D_{k-1} + 1$).
	\item (Merge) Two subpopulations at time $t^-$ join to form one subpopulation at time $t$ (i.e. $D_{k} = D_{k-1} - 1$).
	\item (Change Size) An arbitrary number of subpopulations change size at time $t$. 
\end{enumerate}
Figure~\ref{fig:pophist} provides an illustrative example.
If an individual at time $t^{-}$ is from a subpopulation of size $M$ which splits into two subpopulations of sizes $M_1, M_2$ at time $t$, then its ancestral subpopulation is random: for $i \in \{1,2\}$, subpopulation $i$ is chosen with probability $M_i / M$.
In our model, we only allow at most one of these events at any particular time point.
For us, a ``split'' looking backward in time refers to a \emph{convergence} event of two subpopulations going \emph{forward} in time, while a ``merge'' refers to a divergence event.
This convention is chosen because we think of reconstruction as proceeding backwards in time from the present.

\begin{figure}
\vspace{-1mm}
\centering\includegraphics[scale=0.3]{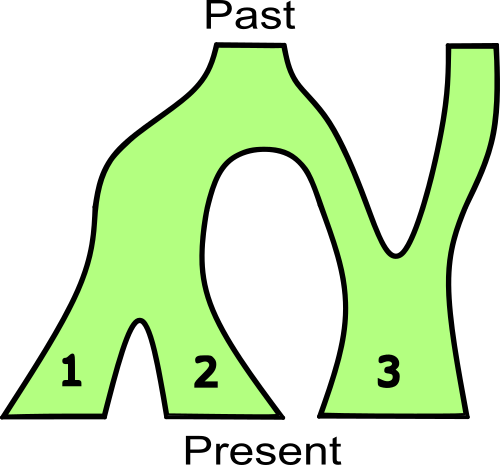}
\caption{An example of population structure history, illustrating merges and splits starting with three present-day subpopulations.}
\label{fig:pophist}
\end{figure}

\subsection{Our Results}
\label{sec:results}
The main theoretical contribution of this work is an essentially tight bound on the sample complexity of learning population history in the multiple-subpopulation model.
In particular, we show sample complexity lower bounds which are \emph{exponential in the number of subpopulations $k$}.
Here is an organized summary of our results: 
\begin{itemize}
\item First, we show a two-way relationship between the problem of learning a population history (in our simplified model) and the problem of learning a mixture of exponentials. Recall that when the effective subpopulation sizes are all constant, the distribution of coalescence times follows \cref{eq:coalescent-mixture} and thus is equivalent to learning the parameters $p_t$ and $\lambda_t$ in a mixture of exponentials. Conversely, we show how to use an algorithm for learning mixtures of exponentials to reconstruct the entire population history by locating the intervals where there are no genetic events and then learning the associated parameters in each, separately. (Section~\ref{sec:reductions} with details in \cref{sec:proof-equivalence} and \cref{sec:population-history}.)

\item \emph{(Main Result)} Using this equivalence, we show an information-theoretic lower bound on the sample complexity that applies regardless of what algorithm is being used. In particular, we construct a pair of population histories that have different parameters but which require $\Omega((1/\Delta)^{4k})$ samples in order to tell apart. This lower bound is \emph{exponential in the number of subpopulations $k$}. 
Here, $\Delta \le 1/k$ is the smallest gap between any pair of the $\lambda_t$'s. 
The proof of this result combines tools spanning information theory, extremal polynomials, and approximation theory. (\cref{sec:info-overview} with details in \cref{sec:info-lower-bound}.)

\item In the \emph{hypothesis testing} setting where we are given a pair of population histories that we would like to use coalescence statistics to distinguish between, we give an algorithm that succeeds with only $O((1/\Delta)^{4k})$ samples. The key to this result is a powerful tool from analysis, the Nazarov-Tur\'an Lemma \cite{nazarov1993local} which lower bounds the maximum absolute value of a sum of exponentials on a given interval in terms of various parameters. This result matches our lower bounds, thus resolving the sample complexity of hypothesis testing up to constant factors. (\cref{sec:results-nt} with details in \cref{sec:nazarov-turan})

\item In the \emph{parameter learning} setting when we want to directly estimate population history from coalescence times, we give an efficient algorithm which provably learns the parameters of a (possibly truncated) mixture of exponentials given only $O((1/\Delta)^{6k})$ samples. 
We accomplish this by analyzing the \emph{Matrix Pencil Method} \cite{MPM}, a classical tool from signal processing, in the real-exponent setting.
(Section~\ref{sec:matrix-pencil} with details in \cref{sec:matrix-pencil-analysis}.)

\item Finally, we demonstrate using simulated data that our sample complexity lower bounds really do place serious limitations on what can be done in practice. From our plots it is easy to see that the sample complexity grows exponentially in the number of subpopulations even in the optimistic case where the separation $\Delta = 1/k$ which minimizes our lower bounds. In particular, the number of samples we would need very quickly exceeds the number of functionally relevant genes (on the order of $10^{4}$) and even the number of SNPs available in the human genome (on the order of $10^{7}$).
In fact, through a direct numerical analysis of our chosen instances, we can give even stronger sample-complexity lower bounds (\cref{sec:reconstruction-sims}, with details in Appendix~\ref{sec:simulation-math}). 
\end{itemize}
\textbf{Discussion of Results: }
In summary, this work highlights some of the fundamental difficulties of reconstructing population histories from pairwise coalescence data.
Even for recent histories, the lower bounds grow exponentially in the number of subpopulations.
Empirically, and in the absence of provable guarantees, and even with much noisier data than we are assuming, many works suggest that it is possible to reconstruct population histories with as many as nine subpopulations. While testing out heuristics on real data and assessing the biological plausibility of what they find is important, so too is delineating sharp theoretical limitations. Thus we believe that our work is an important contribution to the discussion on reconstructing population histories. It points to the need for the methods that are applied in practice to be able to justify why their findings ought to be believed. Moreover they need to somehow preclude the types of population histories that arise in our lower bounds and are genuinely impossible to distinguish between given the finite amount of data we have access to. 


\subsection{Related works}

As mentioned in \cref{sec:background1}, existing methods that attempt to empirically estimate the population history of a \emph{single population} from \emph{sequence data} generally fall into one of two categories: Many are based on (approximately) maximizing the likelihood \cite{li2011inference, schiffels2014inferring, Sheehan2013, terhorst2017robust} and others perform Bayesian inference \cite{Drummond2005, heled2008bayesian, Nielsen2000}. 
Generally, they are designed to recover a \textit{piecewise constant} function $N(t)$ that describes the size of a population, with the goal of accurately summarizing divergence events, bottleneck events and growth rates throughout time.

Many notable methods that fall into the first category rely on Hidden Markov Models (HMMs), which implicitly make a Markovian assumption on the coalescent trees of alleles across the genome.
One notable work is Li and Durbin \cite{li2011inference}, which gave an HMM-based method (PSMC) that reconstructs the population history of a single population using the genome of a single diploid individual.
Later related works gave alternative HMMs that incorporate more than two haplotypes (diCal \cite{Sheehan2013} and MSMC \cite{schiffels2014inferring}) and improve robustness under phasing errors (SMC++ \cite{terhorst2017robust}).

Methods in the second category operate under an assumption about the probability distribution of coalescence events and the observed data.
For instance, Drummond \cite{Drummond2005} prescribes a prior for the distribution of coalescence trees and population sizes, under which MCMC techniques are used to compute both an output and a corresponding 95\% credibility interval.
However, given the highly idealized nature of their models and the limitations of their methodology (for example, there is no guarantee their MCMC method has actually mixed), it is unclear whether the ground truth actually lies in those credibility intervals.

In the \emph{multiple subpopulations} case, there are two major coalescent-based methods.
The first is Schiffels and Durbin \cite{schiffels2014inferring}, which introduced the MSMC model as an improvement over PSMC.
These authors used their method to infer the population history of \emph{nine human subpopulations up to about 200,000 years into the past}.
Terhorst, Kamm and Song \cite{terhorst2017robust} introduced a variant (SMC++) that was directly designed to work on genotypes with missing phase information.
In particular, they demonstrate the potential dangers of relying on phase information, by showing that MSMC is sensitive to such errors.
In an experiment, SMC++ was used to perform inference of population histories of various combinations of \emph{up to three human subpopulations}.
In these experiments, individuals are purposefully chosen from specific subpopulations. We emphasize that in our model, due to the presence of population merges \emph{and} splits, one does not always know what subpopulation an ancestral individual is from.



As a side remark, there are approaches that attempt to infer a (single-component) population history using different types of information.
We briefly touch upon some of these known works.
One alternative strategy is to use the site frequency spectrum (SFS), e.g. \cite{bhaskar2015efficient, Excoffier2013}.
The earliest theoretical result regarding SFS-based reconstruction is due to Myers, Fefferman and Patterson \cite{myers2008can}, who proved that \emph{generic} 1-component population histories suffer from unidentifiability issues.
Their lower bound constructions have a caveat: They are pathological examples of oscillating functions which are unlikely to be observed in a biological context.
Later works \cite{bhaskar2014descartes, terhorst2015fundamental} prove both identifiability and lower bounds for reconstructing \emph{piecewise constant} population histories using information from the SFS. (In contrast, as our algorithms show, reconstruction from coalescence data does not suffer from the same lack of identifiability issues.)

Most recently, Joseph and Pe'er \cite{joseph2018inference} developed a Bayesian time-series model that incorporates data from \emph{ancient DNA} to recover the history for multiple subpopulations only under size changes, without considering merges or splits.
While our analysis does not directly account for such data, the necessity of considering such models is consistent with our assertion: extra information about the ground truth, such as directly observable information about the past (e.g. ancestral DNA), is probably \emph{required} in order for the problem to even be information-theoretically feasible.
In addition, \cite{joseph2018inference} does not solve for subpopulation \emph{sizes}, but rather subpopulation \emph{proportions}, which contains less information than what we are after.

\section{Theoretical Discussion}

\subsection{Reductions between mixtures of exponentials and population history}\label{sec:reductions}
In the rest of our theoretical analysis, we will focus on the mixture of exponentials viewpoint of population history. 
To justify this, note that if we can learn truncated mixtures of exponentials, then we can easily learn population history.
Details are given in \cref{sec:population-history} including a concrete algorithm based on our analysis of the Matrix Pencil Method.
Conversely, we observe that an arbitrary mixture of exponentials can be embedded as a sub-mixture of a simple population history with two time periods, so that recovering the population history requires in particular learning the mixture of exponentials.
The following theorem makes this precise; its proof is delegated to \cref{sec:proof-equivalence}.
\begin{theorem}
\label{thm:equivalence}
Let $P$ with $P(T > t) = \sum_{i = 1}^k p_i e^{-\lambda_i t}$ be a the distribution of an arbitrary mixture of $k$ exponentials (over random variable $T$) with all $\lambda_i > 0$. Then for any $t_0 > 0$, there exists a two-period population history with $k$ populations which induces a distribution $Q$ on coalescence times such that
\[ Q(T > t + t_0 | T \ne \infty, T > t_0) = P(T > t). \]
\end{theorem}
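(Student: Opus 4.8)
The plan is to exhibit $P$ explicitly as the conditional coalescence law of a hand-crafted two-period history. Take the recent period $[0,t_0]$ to consist of a single population of arbitrary finite size $N_0$, and at time $t_0$ (looking backward into the past) let this population undergo a cascade of $k-1$ splits that produces $k$ subpopulations $S_1,\dots,S_k$, where $S_i$ has constant size $N_i=1/\lambda_i$ throughout the ancient period $(t_0,\infty)$, with no further events (no merges, size changes, or admixture). I would choose the split weights so that, going backward, a present-day lineage not yet coalesced by time $t_0$ is routed into $S_i$ with probability $r_i := \sqrt{p_i}\big/\sum_{j=1}^k\sqrt{p_j}$: the first split sends a lineage to $S_1$ with probability $r_1$ and to the remainder with probability $1-r_1$, the second sends it to $S_2$ with conditional probability $r_2/(1-r_1)$, and so on, which by an easy induction gives unconditional routing probability $r_i$. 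The $k-1$ split events may be placed at times in $(t_0,t_0+\delta)$ with $\delta$ arbitrarily small, and the transient intermediate subpopulations may be taken arbitrarily large (or one passes to the limit $\delta\to 0$), so that no coalescence occurs in these windows; sizes may then be reset to $N_i$ by a Change Size event, so the routing weights need not coincide with the final sizes. See \cref{sec:coalescent} for the relevant mechanics.

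Next I would condition on $\{T>t_0\}$ and invoke the memorylessness of the Kingman coalescent: given that the two sampled lineages have not coalesced by time $t_0$, the state at $t_0$ is simply ``two distinct lineages in the recent population,'' independently of $N_0$. Crossing into the ancient period, the two lineages are routed independently, so they both land in $S_i$ with probability $r_i^2$ and land in distinct subpopulations with the remaining probability $1-\sum_i r_i^2$. In the first case, since $S_i$ has constant size $N_i$ forever, the additional time to coalescence is $\mathrm{Exp}(\lambda_i)$; in the second case, the absence of admixture and of any merge means the two lineages never coalesce, i.e. $T=\infty$. Hence $Q(T=\infty\mid T>t_0)=1-\sum_i r_i^2$, and for every $t\ge 0$,
\[ Q(T>t_0+t \mid T\neq\infty,\, T>t_0) \;=\; \sum_{i=1}^k \frac{r_i^2}{\sum_{j} r_j^2}\, e^{-\lambda_i t}. \]

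Finally I would check the arithmetic of the weights: with $r_i=\sqrt{p_i}/\sum_j\sqrt{p_j}$ one has $\sum_i r_i = 1$ (so the routing is a genuine distribution) and $r_i^2\big/\sum_j r_j^2 = p_i\big/\sum_j p_j = p_i$, using $\sum_j p_j=1$. Substituting into the display yields $Q(T>t_0+t\mid T\neq\infty,\,T>t_0)=\sum_i p_i e^{-\lambda_i t}=P(T>t)$; the conditioning event has positive probability since $Q(T>t_0)=e^{-t_0/N_0}>0$ and $\sum_i r_i^2>0$ (this is where $\lambda_i>0$, equivalently finite $N_i$, is used). The only delicate point is the parenthetical in the first paragraph: encoding arbitrary routing probabilities $r_i$ with binary splits (whose branching probabilities are size ratios) without creating spurious coalescence, and justifying that this is still legitimately a ``two-period'' history. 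I expect this to be the main obstacle, but it is a matter of careful bookkeeping within the model's formalism — free choice of subpopulation sizes and arbitrarily closely spaced split times — rather than a genuine mathematical difficulty.
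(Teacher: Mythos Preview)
Your argument correctly produces a population history whose conditional coalescence law equals $P$, but it does not prove the theorem as stated: the cascade of $k-1$ splits, even compressed into $(t_0,t_0+\delta)$, yields a history with at least $k$ periods, not two, and the model explicitly allows at most one split per time point, so the limit $\delta\to 0$ is not a legal history. You flagged this tension yourself, but it is not mere bookkeeping --- within the paper's formalism there is no two-period history with a single population in the recent era and $k>1$ populations in the ancient era.

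The paper sidesteps the issue by keeping $k$ populations in \emph{both} periods, so the only event at $t_0$ is a simultaneous Change Size (which the model does permit). In the recent period $[0,t_0]$ population $i$ is given size $\sqrt{q_i}$, where $q_i$ is the unique positive solution of $p_i = q_i e^{-t_0/\sqrt{q_i}}$; in the ancient period its size is reset to $1/\lambda_i$. Two lineages sampled at time $0$ land together in population $i$ with probability proportional to $(\sqrt{q_i})^2 = q_i$, and conditioned on no coalescence by $t_0$ this becomes proportional to $q_i e^{-t_0/\sqrt{q_i}} = p_i$. From that point the computation is identical to yours. The idea you were missing is that the recent-period \emph{sizes} themselves, rather than split routing probabilities, can be tuned to make the conditional ``both lineages in population $i$'' weights equal $p_i$ --- this is what buys the two-period structure.
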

\begin{remark}\label{rmk:short-modern-era}
By choosing a small value for $t_0$, we ensure that very few coalesence times occur in the more recent period, so that the reconstruction algorithm must rely on the information from the second (less recent) period with our planted mixture of exponentials. 
\end{remark}
Additionally, we provide a more sophisticated version of this reduction which maps two mixtures of exponentials to different population histories simultaneously, while preserving statistical indistinguishability.
\begin{theorem}
\label{thm:pop-reduction}
Let $P$ with $P(T > t) = \sum_{i=1}^{k} p_i e^{- \lambda_i t}$ and $Q$ with $Q(T > t) = \sum_{j=1}^{\ell} q_j e^{- \mu_j t}$ be arbitrary mixtures of exponentials with all $\lambda_i, \mu_j > 0$. Then for all 
$t_0 > 0$ sufficiently small there exist two distinct 2-period population histories $R$ with $k + 2$ subpopulations and $S$ with $\ell + 2$ subpopulations such that:
\begin{enumerate}
\item $ R[T > t + t_0 | T\neq \infty, T > t_0] = P(T > t)$ and $ S[T > t + t_0 | T\neq \infty, T > t_0] = Q(T > t)$
\item $R[T = t_0] = S[T = t_0]$ and $R[T = \infty] = S[T = \infty]$. 
\end{enumerate}
\end{theorem}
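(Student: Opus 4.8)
The plan is to bootstrap \cref{thm:equivalence}. Its construction realizes a mixture $P$ as the conditional tail, past $t_0$, of a $2$-period history: one ancestral population of size $M$ on $[0,t_0)$ which at time $t_0$ breaks into $k$ permanently-separated subpopulations of sizes $1/\lambda_i$, with each present-day lineage landing in subpopulation $i$ with a probability $m_i$ tuned (near $\sqrt{p_i}$) so that $R[T>t+t_0 \mid T\ne\infty, T>t_0]$ is exactly $P(T>t)$. In that construction the other two functionals $R[T=t_0]$ and $R[T=\infty]$ are then pinned by $P$ (essentially $0$ and $1-1/(\sum_i\sqrt{p_i})^2$) with no freedom left, so they will not agree with the corresponding functionals of the analogous $Q$-history. \cref{thm:pop-reduction} buys the missing freedom by spending one extra subpopulation per functional — hence $k+2$ and $\ell+2$.

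Concretely, in $R$ I keep the single recent-era population and the $k$ ``mixture'' subpopulations and add two more permanently-separated ones: $A$, with enormous size in every period, so that a lineage pair with any representative in $A$ essentially never coalesces ($A$ diverts probability into $\{T=\infty\}$); and $B$, with enormous size on $[0,t_0)$ but negligible size afterwards, so that a pair landing entirely in $B$ survives the recent era and then coalesces essentially at time $t_0$ ($B$ diverts probability onto the atom at $t_0$). At the $t_0$-event each present-day lineage independently lands in subpopulation $i$, in $A$, or in $B$ with probabilities $(m_1,\dots,m_k,m_A,m_B)$ summing to $1$; these are the free parameters. The conditional tail depends only on the ratios among $m_1,\dots,m_k$ and is exactly $P$ once those ratios are fixed as in \cref{thm:equivalence}; the leftover scale, together with $m_A$ and $m_B$, then moves the pair $(R[T=t_0],R[T=\infty])$ without disturbing the conditional tail. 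Since the recent era is a single population of the same size $M$ in both $R$ and $S$, the two laws moreover agree \emph{exactly} on $[0,t_0)$.

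The core of the argument is then a feasibility statement: for every sufficiently small $t_0$, as the parameters range over $m_i\ge 0$, $\sum_i m_i+m_A+m_B=1$ with the ratios fixed, $R[T=t_0]$ sweeps an interval $[0,\delta_P]$ with $\delta_P>0$ and, essentially independently, $R[T=\infty]$ sweeps $[\beta_P,1)$ where $\beta_P=1-1/(\sum_i\sqrt{p_i})^2+o(1)$; symmetrically for $S$ with $[0,\delta_Q]$ and $[\beta_Q,1)$. As $[\beta_P,1)\cap[\beta_Q,1)$ is a nonempty interval with left endpoint below $1$, I can fix a common target $(a^\star,b^\star)$ with $a^\star>0$ small and $b^\star$ in that intersection but bounded away from $1$, so that $\rho:=1-a^\star-b^\star>0$ is a constant depending only on $P,Q$; tuning $R$ and $S$ to this common target gives item~2. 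Item~1 holds by construction, and $R\ne S$ since equating the atoms forces $R[t_0<T<\infty]=S[t_0<T<\infty]=\rho$, on which event $R$ has law $P$ and $S$ has law $Q\ne P$. This is also why the reduction preserves indistinguishability: $R$ and $S$ have identical laws on $[0,t_0)\cup\{t_0\}\cup\{\infty\}$, and a draw landing in $(t_0,\infty)$ (probability $\rho$, the same under both) is an honest sample from $P$ versus $Q$, so a lower bound of $N$ samples for $P$-vs-$Q$ yields one of $\Omega(N/\rho)$ samples for $R$-vs-$S$.

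The step I expect to be the main obstacle is exactly this feasibility/range analysis with only two knobs: one must simultaneously meet all nonnegativity constraints, keep the conditional tail \emph{exactly} $P$ — which pins the $m_i$-ratios and, because of the $e^{-t_0/N_i^{(1)}}$ weighting, forces a small $t_0$-dependent correction away from $m_i\propto\sqrt{p_i}$ that I would control by an implicit-function or monotonicity argument near $t_0=0$ — and check that the two $\{T=\infty\}$-ranges actually overlap, which is clean only because both abut $1$. The role of ``$t_0$ sufficiently small'' is to make the $O(t_0)$ recent-era corrections negligible so the feasible region converges to this limiting picture, after which a short continuity argument extends feasibility to all small $t_0>0$. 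A minor point is making $B$'s ``negligible size afterwards'' rigorous — carry an explicit small size and fold a short interval of coalescence times into the reported atom, or pass to a limit — so that $R[T=t_0]$ is genuinely an atom as the statement requires.
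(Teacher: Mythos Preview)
Your high-level plan is right — spend two extra subpopulations to gain two degrees of freedom — but the concrete construction rests on a misreading of \cref{thm:equivalence} that leads to a genuine gap. That construction does \emph{not} have a single recent-era population splitting at $t_0$: it has $k$ separate populations in \emph{both} periods, with recent-era sizes $\sqrt{q_i}$ and past-era sizes $1/\lambda_i$, connected by a single size-change event. The ``landing probabilities'' you call $m_i$ are really the sampling probabilities at time $0$, and the freedom to tune them near $\sqrt{p_i}$ comes from choosing the recent-era sizes; a one-population recent era eliminates exactly this freedom. Under your stated picture (one population on $[0,t_0)$ splitting into $k+2$ at $t_0$) the model ties the $m_i$ to the past-era sizes, which for the $k$ mixture components are $1/\lambda_i$ — so you cannot set $m_i\propto\sqrt{p_i}$ and the conditional tail will not be $P$. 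Independently, a $1\to(k+2)$ split is $k+1$ events, not one, so the resulting history is not $2$-period in the paper's model. Your population $A$ is also unnecessary: the event $\{T=\infty\}$ already arises whenever the two sampled individuals lie in different subpopulations, with no need for a population of infinite size.

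The paper's construction keeps $k+2$ separate subpopulations in \emph{both} periods, so only a size-change event occurs at $t_0$. The first $k$ are set exactly as in \cref{thm:equivalence}; \emph{both} of the two extra subpopulations shrink to size zero at $t_0$, so both contribute to the atom there — it is not ``one extra population per functional''. The two degrees of freedom are obtained by an $\ell_2$-versus-$\ell_1$ decoupling: one fixes $N_{k+1}^2+N_{k+2}^2$ (and likewise $M_{\ell+1}^2+M_{\ell+2}^2$) to a specific large multiple of $\sum_{i\le k}N_i^2$, which automatically equates the ratios $(N_{k+1}^2+N_{k+2}^2)/(M_{\ell+1}^2+M_{\ell+2}^2)$ and $\sum N_i^2/\sum M_j^2$; then, holding the sum of squares fixed, one slides along the arc to vary $N_{k+1}+N_{k+2}$ (or $M_{\ell+1}+M_{\ell+2}$), which moves $(\sum N_i)^2/(\sum M_j)^2$ and is matched to the common ratio by an intermediate-value argument. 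This simultaneous handling of the two equalities in $R[T=t_0]=S[T=t_0]$ and $R[T=\infty]=S[T=\infty]$ is the device your proposal is missing.
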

Again, if we take $t_0$ small enough, we ensure that any distinguishing algorithm must rely on information from the second (less recent) period, and hence because the probability of all other events match, must distinguish between the mixtures of exponentials $Q$ and $R$.


\subsection{Guaranteed recovery of exponential mixtures via the Matrix Pencil Method}
\label{sec:matrix-pencil}
Given samples from a hyperexponential distribution
\begin{equation}
	\label{eq:mixture-exponentials}
	\Pr(T > t) = \sum_{i=1}^{k} p_i e^{-\lambda_i t},
\end{equation}
can we learn the parameters $p_1, \ldots, p_k$, $\lambda_1, \ldots, \lambda_k$?
In \cref{sec:reductions}, we established the equivalence between solving this problem and learning population history.
Suppose for now that we are given access to the exact values of probabilities $v_t := \pr(T \geq t)$ for $t \in \mathbb{R}_{\geq 0}$, i.e.
$
	v_t 
    = \sum_{j = 1}^k p_i \alpha_i^t,
$
where $\alpha_i = e^{-\lambda_i}$. 
The \emph{Matrix Pencil Method} is the following linear-algebraic method, originating in the signal processing literature \cite{MPM}, which solves for the parameters $\{ p_i , \lambda_i \}_{i=1}^{k}$:



\begin{enumerate}[1.]
\item Let $A,B$ be $k \times k$ matrices 
where $A_{ij} = v_{i+j-1}$ and $B_{ij} = v_{i+j-2}$.
\item Solve the generalized eigenvalue equation $\det(A - \gamma B) = 0$ for the pair $(A,B)$. The $\gamma$ which solve $\det(A -\gamma B) = 0$ are the $\alpha$'s.
\item Finish by solving for the $p$'s in a linear system of equations $\vec{v} = V\vec{p}$, where $\vec{v} = (v_0, \ldots, v_{k-1})$, $V$ is the $k \times k$ Vandermonde matrix generated by $\alpha_1, \ldots, \alpha_k$ and $\vec{p}$ is the vector of unknowns $(p_1, \ldots, p_k)$.
\end{enumerate}
\noindent
To understand why the algorithm works in the noiseless setting, consider the decomposition $A = VD_p D_\alpha V^T$ and $B = V D_p V^T$ where $V = V_{k}(\alpha_1, \ldots, \alpha_k)$ is the $k \times k$ Vandermonde matrix whose $(i,j)$ entry is $\alpha^{i-1}_j$, $D_\alpha = \diag(\alpha_1, \ldots, \alpha_k)$ and $D_p = \diag(p_1, \ldots, p_k)$. Then it's clear that the $\alpha_i$ are indeed the generalized eigenvalues of the pair $(A,B)$.
However, in our setting, we do not have access to the exact measurements $v_t$, but instead have \emph{noisy} empirical measurements $\tilde{v}_t$; in practice, the output of the MPM can be very sensitive to noise. 

The Matrix Pencil Method is a close cousin of \emph{Prony's Method}~\cite{prony}.
Prior to this work, Feldmann et al. \cite{feldmann1998fitting} considered the strategy of using Prony's Method to \emph{fit} exponential mixtures to general long-tail distributions.
In the upcoming section, we provide a finite-sample guarantee of the MPM in the context of \emph{learning} exponential mixture distributions.
As it turns out (\cref{rmk:mpm-optimality-1} and \cref{rmk:mpm-optimality-2}), this algorithm is nearly optimal in terms of the number of samples required.




\subsubsection{Analysis of MPM under noise}
We now describe our analysis of the MPM in the more realistic setup where the CDF is estimated from sample data.
First note that the model (Equation~\ref{eq:mixture-exponentials}) is statistically unidentifiable if there exist two identical $\lambda$'s.
Indeed, the mixture $\frac{1}{2} e^{-\lambda t} + \frac{1}{2} e^{-\lambda t}$ is exactly same as the single-component model $e^{-\lambda t}$, as is any other re-weighting of the coefficients into arbitrarily many components with exponent $\lambda$.
Therefore it is natural to introduce a \textit{gap parameter} $\Delta := \min_{i \neq j} | \lambda_i - \lambda_j |$ which is required to be nonzero, as in the work on super-resolution (e.g. \cite{candes2013super, moitra14}). 

Without loss of generality, we also assume that:
(1) the components are sorted in decreasing order of exponents, so that $\lambda_1 > \cdots > \lambda_k > 0$, and
(2) time has been re-scaled\footnote{In practice, even if this scaling is unknown, this is easily handled by e.g. trying powers of 2 and picking the best result in CDF distance, for instance $\|F - G\|_\infty = \sup_t |F(t) - G(t)|$.} by a constant factor, so that $\lambda_i \in (0,1)$ for each $i$. 
Now we can state our guarantee for the MPM under noise:
\begin{theorem} \label{thm:matrix-pencil-sample-complexity}
Let $\Delta = \min_{i \neq j} |\lambda_i - \lambda_j|$ and let $p_{\min} = \min_{i} p_i$.
For all $\delta > 0$, there exists
$N_0 = O\left( \frac{ k^{10} }{ p_{\min}^4 } \left( \frac{2e}{\Delta} \right)^{6k} \log \frac{1}{\delta} \right)$
such that, with probability $1 - \delta$, using empirical estimates $\widetilde{v}_0, \ldots, \widetilde{v}_{2k-1}$ from $N \ge N_0$ samples, the matrix pencil method outputs $\{ (\widetilde{\lambda}_j, \widetilde{p}_j) \}_{j=1}^{k}$ satisfying
\begin{equation*}
	| \widetilde{\lambda}_{j} - \lambda_{j} |
    =
    O \left( \frac{k^{3.5}}{p_{\min}^2} \left( \frac{2e}{\Delta} \right)^{2k} \sqrt{\frac{1}{N} \log \frac{1}{\delta} } \right)
    \quad\text{and}\quad
    | \widetilde{p}_j - p_j | 
    =
    O \left( \frac{k^{5}}{p_{\min}^2} \left( \frac{2e}{\Delta} \right)^{3k} \sqrt{\frac{1}{N} \log \frac{1}{\delta} } \right)
\end{equation*}
for all $j$.
\end{theorem}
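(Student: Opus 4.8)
The plan is to carry out a standard perturbation-theoretic analysis of the Matrix Pencil Method, tracking how the empirical error in the $\widetilde v_t$'s propagates through each step (generalized eigenvalues, then the Vandermonde solve for the $p$'s). The starting point is a concentration bound: since each $\widetilde v_t$ is an empirical average of $\bone\{T \ge t\}$ over $N$ i.i.d.\ samples, Hoeffding (plus a union bound over the $2k$ values of $t$) gives $|\widetilde v_t - v_t| \le \eps$ for all $t \in \{0,\dots,2k-1\}$ with probability $1-\delta$, where $\eps = O(\sqrt{(1/N)\log(k/\delta)})$. So $\|\widetilde A - A\|, \|\widetilde B - B\| = O(k\eps)$ in operator norm.

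The core of the argument is to quantify the conditioning of the problem. Writing $A = V D_p D_\alpha V^T$ and $B = V D_p V^T$ with $V = V_k(\alpha_1,\dots,\alpha_k)$, $\alpha_i = e^{-\lambda_i}$, everything is controlled by (i) a lower bound on $\sigma_{\min}(V)$ and (ii) a lower bound on the separation of the generalized eigenvalues $\alpha_i$, together with $p_{\min}$. For (i), the key estimate is the classical bound on the smallest singular value of a real Vandermonde matrix with nodes in $(0,1)$ that are $\Delta$-separated in the exponents; this is where a factor of order $(\Delta/2e)^{k}$ (or its inverse, appearing in $\sigma_{\min}(V)^{-1}$) enters — one expects $\sigma_{\min}(V) \ge (c\Delta)^{k}$ type bounds, and I would cite or re-derive the relevant Vandermonde conditioning lemma. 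Given these, $\sigma_{\min}(B) \gtrsim p_{\min}\sigma_{\min}(V)^2$, so $B$ is invertible and $B^{-1}A$ is well-defined with the $\alpha_i$ as its eigenvalues; the eigenvector matrix is $V^{-T}$, whose condition number is again controlled by $\sigma_{\min}(V)$. Applying the Bauer--Fike theorem (or a more refined eigenvalue perturbation bound for diagonalizable matrices, since we want individual eigenvalue error not just set distance — here the eigenvalue separation $|\alpha_i - \alpha_j| \ge c\Delta e^{-1}$ lets us match perturbed to true eigenvalues one-to-one) yields $|\widetilde\alpha_i - \alpha_i| = O\big(\kappa(V)^2 \cdot k\eps / (p_{\min}\sigma_{\min}(V)^2)\big)$, which after bookkeeping gives the $(2e/\Delta)^{2k}$ and polynomial-in-$k$, $p_{\min}^{-2}$ factors in the statement. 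Converting $\widetilde\alpha_i$-error to $\widetilde\lambda_i$-error is immediate since $\lambda = -\log\alpha$ and $\alpha$ is bounded away from $0$ and $1$.

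For the $p$'s: step 3 solves $\widetilde V \widetilde p = \widetilde v$ where $\widetilde V$ is the Vandermonde matrix on the \emph{estimated} nodes $\widetilde\alpha_i$. One bounds $\|\widetilde p - p\|$ by a standard linear-system perturbation inequality, $\|\widetilde p - p\| \le \|\widetilde V^{-1}\|\big(\|\widetilde v - v\| + \|(\widetilde V - V)p\|\big)$; here $\|\widetilde V - V\|$ is controlled by the node error $|\widetilde\alpha_i - \alpha_i|$ already obtained (times a polynomial factor from differentiating powers), and $\|\widetilde V^{-1}\| \approx \sigma_{\min}(V)^{-1} \lesssim (2e/\Delta)^{k}$. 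Multiplying the node-error bound ($\sim (2e/\Delta)^{2k}$) by this extra $(2e/\Delta)^{k}$ produces the $(2e/\Delta)^{3k}$ in the $p$-error. Finally, choosing $N \ge N_0$ with $N_0 = O(k^{10} p_{\min}^{-4}(2e/\Delta)^{6k}\log(1/\delta))$ makes $\eps$ small enough that all the linearized perturbation bounds are valid (i.e.\ the perturbations are within the radius where Bauer--Fike / Vandermonde-inversion estimates apply, essentially requiring the error to be a small fraction of $\sigma_{\min}(B)$), and simultaneously forces the stated error bounds.

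The main obstacle is obtaining sharp, explicitly-exponential control on the conditioning of the real Vandermonde matrix $V_k(e^{-\lambda_1},\dots,e^{-\lambda_k})$ in terms of the exponent gap $\Delta$ — i.e.\ proving $\sigma_{\min}(V) \gtrsim (c\Delta)^{k}/\mathrm{poly}(k)$ with the right constant $c$ (here $1/2e$). Unlike the complex equispaced case (roots of unity) where $V$ is perfectly conditioned, real Vandermonde matrices with clustered nodes are notoriously ill-conditioned, and extracting the precise base of the exponential — rather than just ``exponentially small'' — requires a careful argument, plausibly via writing $\sigma_{\min}(V)^{-1}$ in terms of extremal polynomials / Lagrange interpolation coefficients $\sum_i \prod_{j\ne i}|1-\alpha_i\alpha_j|^{-1}\cdot(\text{norm terms})$ and bounding these using the node separation, or by invoking a known quantitative Vandermonde bound. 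A secondary subtlety is that Bauer--Fike only gives a bound on $\max_i \min_j|\widetilde\alpha_i - \alpha_j|$; to get a genuine one-to-one matching with per-eigenvalue bounds one must additionally verify that once $\eps$ is small enough (smaller than, say, $c\Delta e^{-1}/(2\kappa)$) the perturbed eigenvalues fall into disjoint disks around the true ones, which follows from the eigenvalue-separation lower bound but must be stated carefully to close the argument.
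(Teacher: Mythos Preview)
Your overall architecture---concentration for the $\widetilde v_t$, Vandermonde conditioning, eigenvalue perturbation, then linear-system perturbation for the $p$'s---matches the paper's. The Vandermonde step is handled in the paper via Gautschi's classical bound $\|V^{-1}\|_{\infty\to\infty} = \max_i \prod_{j\ne i}\frac{1+|\alpha_j|}{|\alpha_j-\alpha_i|}$, which after using $|\alpha_i-\alpha_j|\ge \Delta e^{-1}$ yields $\sigma_{\min}(V)^{-1}\le \sqrt{k}(2e/\Delta)^k$; this is exactly the kind of result you anticipate. The paper also uses DKW rather than Hoeffding plus a union bound, which only saves an inessential $\log k$.

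Where your proposal falls short is the eigenvalue-perturbation step. Your own formula $|\widetilde\alpha_i-\alpha_i|=O\big(\kappa(V)^2\cdot k\eps/(p_{\min}\sigma_{\min}(V)^2)\big)$ does \emph{not} give $(2e/\Delta)^{2k}$: since $\kappa(V)\asymp k^{3/2}(2e/\Delta)^k$ and $\sigma_{\min}(V)^{-2}\asymp k(2e/\Delta)^{2k}$, the expression is of order $(2e/\Delta)^{4k}$. More generally, a direct Bauer--Fike argument on $M=B^{-1}A$ versus $\widetilde M=\widetilde B^{-1}\widetilde A$ requires bounding $\|\widetilde M-M\|$ in the original basis, and this already picks up $\|B^{-1}\|^2\|A\|\|F\|\gtrsim (p_{\min}\sigma_{\min}(V)^2)^{-2}\cdot k^2\cdot k\eps$, i.e.\ a factor $(2e/\Delta)^{4k}$ before even multiplying by $\kappa(V)$. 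So the ``bookkeeping'' does not close, and the downstream bounds on $|\widetilde p_j-p_j|$ and $N_0$ would be off by corresponding powers.

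The paper avoids this loss by \emph{first} changing basis by $V^T$: since $B^{-1}A=(V^T)^{-1}D_\alpha V^T$, one sets $C:=D_B^{-1}D_A=D_\alpha$ and $\widetilde C:=(D_B+F')^{-1}(D_A+E')$ with $E'=V^{-1}E(V^T)^{-1}$, $F'=V^{-1}F(V^T)^{-1}$, and bounds $\|\mathcal E\|=\|C-\widetilde C\|$ directly. The only Vandermonde factor that enters is $\|V^{-1}\|^2\le k(2e/\Delta)^{2k}$ (once, from $\|E'\|,\|F'\|\le\|V^{-1}\|^2\max(\|E\|,\|F\|)$), and since $C$ is already diagonal one finishes with Gershgorin rather than Bauer--Fike, giving $|\alpha_j-\widetilde\alpha_j|\le 2k^{3/2}\|\mathcal E\|=O\big(k^{3.5}p_{\min}^{-2}(2e/\Delta)^{2k}\eps\big)$. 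This change-of-basis-then-Gershgorin maneuver is the missing idea in your sketch; once you have it, your analysis of step~3 (the Vandermonde solve for the $p$'s) goes through essentially as you wrote it and produces the extra $(2e/\Delta)^k$ factor as claimed.
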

\begin{remark}\label{rmk:mpm-assumptions}
Letting $\alpha_i$ denote $e^{-\lambda_i}$, we note that we can equivalently focus on learning the $\alpha_i$'s, and that guarantees for recovering $\lambda_i$ and $\alpha_i$ are equivalent up to constants:
$e^{-1}|\alpha_i - \widetilde{\alpha}_i| \le |\lambda_i - \widetilde{\lambda}_i| \le |\alpha_i - \widetilde{\alpha}_i |.$
since $e^{-x}$ is monotone decreasing on $[0,1]$ with derivative lying in $[-1,-1/e]$.
\end{remark}
The full proof of Theorem~\ref{thm:matrix-pencil-sample-complexity} is given in \cref{sec:matrix-pencil-analysis}. As in previous work analyzing the MPM in the \emph{super-resolution} setting with imaginary exponents \cite{moitra14}, we see that the stability of MPM ultimately comes down to analyzing the condition number of the corresponding Vandermonde matrix, which in our case is very well-understood \cite{gautschi1962inverses}.

\subsection{Strong information-theoretic lower bounds}\label{sec:info-overview}
In this section we describe our main results, strong information theoretic
lower bounds establishing the difficulty of learning mixtures of exponentials
(and hence, by our reductions, population histories). 
The full proofs of all results found in this section
are given in \cref{sec:info-lower-bound}.
First, we state a lower bound on learning the exponents $\lambda_j$,
which is an informal restatement of Corollary~\ref{cor:turan-tv}.

\begin{theorem}
\label{thm:turan-sample-complexity-intro}
For any $k > 1$, there exists an infinite family of parameters $a_1,\ldots,a_k,\lambda_1,\ldots,\lambda_k$ and $b_1,\ldots,b_k,\mu_1,\ldots,\mu_k$ parametrized by integers $m > 2(k-1)$ and $\alpha \in (0,\frac{1}{2})$ such that:
\begin{enumerate}
	\item  Each $\lambda_i$ and $\mu_j$ is in $(0,1]$, $\lambda_1 = \mu_1$, and the elements of $\{\lambda_i\}_{i=2}^k \cup \{ \mu_i\}_{i=2}^k$ are all distinct and separated by at least $\Delta = 1/(m+2k)$. Furthermore $\lambda_2,\mu_2 > \alpha/k$.
    \item Let $H_1$ and $H_2$ be hypotheses, under which the random variable $T$ respectively follows the distributions
    \[
	\Pr_{H_1}[T \geq t] = \sum_{i=1}^k a_i e^{-\lambda_i t}
    \quad and \quad
    \Pr_{H_2}[T \geq t] = \sum_{i=1}^k b_i e^{-\mu_i t}.
	\]
    If $N$ samples are observed from either $H_1$ or $H_2$, each with prior probability $1/2$, then the Bayes error rate for any classifier that distinguishes $H_1$ from $H_2$ is at least $\frac{1 - \delta}{2}$, where
    \[ \delta = \frac{\alpha \sqrt{2N}}{2k-3} [\Delta(2k-3)]^{2k-4}. \]
\end{enumerate}
\end{theorem}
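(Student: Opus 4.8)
The plan is to reduce the Bayes-error lower bound to a total variation bound between the $N$-sample product distributions under $H_1$ and $H_2$, and then bound that total variation by controlling a single scalar quantity: the difference in the tail functions $\bar F_1(t) - \bar F_2(t) = \sum_i a_i e^{-\lambda_i t} - \sum_j b_j e^{-\mu_j t}$. The standard reduction says that the optimal Bayes error for distinguishing two equiprobable hypotheses from $N$ i.i.d.\ samples is $\tfrac12(1 - \TV(H_1^{\otimes N}, H_2^{\otimes N}))$, and subadditivity/tensorization of total variation (via Pinsker or via the Hellinger–TV chain) gives $\TV(H_1^{\otimes N}, H_2^{\otimes N}) \le \sqrt{N}\,\sqrt{\tfrac12 \chi^2}$-type bounds, but more simply it suffices to use $\TV(H_1^{\otimes N}, H_2^{\otimes N}) \le \sqrt{N}\cdot c \cdot \TV(H_1, H_2)$ or a Hellinger-based inequality of the same flavor — the cleanest route here is $\TV(P^{\otimes N}, Q^{\otimes N})^2 \le N \cdot \mathrm{something}$ where the per-sample quantity is essentially $\TV(H_1, H_2)$ or its square. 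So the target inequality becomes: show $\TV(H_1, H_2) = O(\alpha \cdot [\Delta(2k-3)]^{2k-4}/(2k-3))$, i.e.\ up to the $\sqrt{N}$ factor the claimed $\delta$.

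The heart of the matter is therefore to \emph{construct} the parameters $a_i, \lambda_i, b_j, \mu_j$ so that the two tail functions are pointwise extremely close on $[0,\infty)$. The natural idea is to place the $2k-1$ distinct exponents $\lambda_1 = \mu_1, \lambda_2,\dots,\lambda_k,\mu_2,\dots,\mu_k$ on an arithmetic-progression-like grid with spacing $\Delta$ (this is where the $\Delta = 1/(m+2k)$ and the condition $m > 2(k-1)$ come from — we need $2k-1$ grid points inside $(0,1]$, plus the constraint $\lambda_2,\mu_2 > \alpha/k$). Writing $x = e^{-t} \in (0,1]$, both tails become polynomials (or power functions $x^{\lambda}$), and $\bar F_1(t) - \bar F_2(t)$ is a generalized exponential polynomial $\sum_{\ell} c_\ell e^{-\nu_\ell t}$ with $2k-1$ terms (since $\lambda_1=\mu_1$ the leading terms can cancel in the coefficient but not the exponent set). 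The key estimate is a Turán-type / divided-difference bound: if the exponents are equally spaced with gap $\Delta$ and the coefficients are the signed binomial coefficients of a finite-difference operator of order $2k-2$ applied to a fixed base function, then $\sum_\ell c_\ell e^{-\nu_\ell t}$ is a $(2k-2)$-th order finite difference and hence bounded by $\sup |(d/d\lambda)^{2k-2} e^{-\lambda t}| \cdot \Delta^{2k-2} = \sup |t^{2k-2} e^{-\lambda t}| \cdot \Delta^{2k-2}$. On $(0,\infty)$ with $\lambda \ge \lambda_2 > \alpha/k$, the function $t^{2k-2}e^{-\lambda t}$ is maximized at $t = (2k-2)/\lambda$ giving a bound of order $((2k-2)/(e\lambda))^{2k-2}$; combined with $\Delta^{2k-2}$ and $\lambda \gtrsim \alpha/k$ one reads off the $[\Delta(2k-3)]^{2k-4}$ scaling after absorbing constants (and the factor $\alpha$ enters through the normalization needed to make the $a_i, b_j$ legitimate — nonnegative and summing to $1$ — which forces an overall scaling by something like $\alpha/k$ relative to the raw finite-difference weights, since those weights sum to $0$).

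I expect the main obstacle to be the \emph{normalization and feasibility} bookkeeping: the raw finite-difference coefficients are signed and sum to zero, so they are not themselves a pair of mixture weight vectors. One must split the signed combination into its positive and negative parts, identify the positive part with (a scaling of) the $a_i$ on exponents $\{\lambda_i\}$ and the negative part with the $b_j$ on exponents $\{\mu_j\}$, and then verify (i) nonnegativity of all $a_i, b_j$, (ii) that $\sum a_i = \sum b_j$ so that both tails agree at $t=0$ (both equal, say, $1$ after adding a common mass or by arranging $\bar F(0)=1$), and (iii) that the common leading exponent $\lambda_1=\mu_1$ can be used to soak up any residual to make both genuine survival functions with $\Pr[T \ge 0] = 1$; the parameter $\alpha \in (0,\tfrac12)$ is the slack we use to guarantee all of this stays feasible while keeping $\lambda_2, \mu_2 > \alpha/k$. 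Once the construction is pinned down, the analytic bound on the exponential polynomial is the finite-difference estimate above (this is the "extremal polynomials / approximation theory" input alluded to in the introduction, essentially a quantitative version of the fact that divided differences of a smooth function are small), and the information-theoretic conclusion follows by the tensorization argument. So concretely the steps are: (1) fix the exponent grid and the order-$(2k-2)$ finite-difference weights; (2) prove the pointwise bound $|\bar F_1 - \bar F_2|(t) \le \alpha\,[\Delta(2k-3)]^{2k-4}/(2k-3)$ via the finite-difference / $\sup_t t^{2k-2}e^{-\lambda t}$ estimate; (3) check feasibility of the resulting mixtures using the $\alpha$-slack; (4) integrate against the worst-case test / apply $\TV(H_1^{\otimes N},H_2^{\otimes N}) \le \sqrt{N}\cdot(\text{per-sample divergence})$ and the Bayes-error identity to get error $\ge (1-\delta)/2$.
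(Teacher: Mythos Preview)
Your high-level architecture is right and matches the paper: reduce the Bayes error to a product-measure TV bound, tensorize via Hellinger, and build the hard pair by splitting a signed exponential sum into its positive and negative parts, with a shared dominant component of mass $1-\alpha$ at $\lambda_1=\mu_1$ to guarantee feasibility. The paper does exactly this.

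Where your plan diverges, and where there is a real gap, is in \emph{what} gets bounded and \emph{how} the $\sqrt{N}$ appears. You propose to control $|\bar F_1(t)-\bar F_2(t)|$ pointwise via a $(2k{-}2)$th-order finite difference in $\lambda$, and then in step (4) invoke ``$\TV(H_1^{\otimes N},H_2^{\otimes N})\le\sqrt{N}\cdot(\text{per-sample divergence})$''. But a sup bound on the CDF difference is only Kolmogorov distance, and there is no inequality of the form $\TV^{\otimes N}\lesssim\sqrt{N}\cdot\TV$ or $\sqrt{N}\cdot\|\bar F_1-\bar F_2\|_\infty$. The $\sqrt{N}$ comes from $\TV^{\otimes N}\le\sqrt{2N\,H^2}$ together with $H^2\le\tfrac12\chi^2$, and $\chi^2=\int(p_1-p_2)^2/p_2$ requires uniform control of the \emph{density} difference relative to $p_2$, not of the CDF difference. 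Your finite-difference bound on $\bar F_1-\bar F_2$ does not deliver that, and chaining through $H^2\le\TV$ degrades the $\Delta$-exponent by a factor of two, missing the stated $[\Delta(2k{-}3)]^{2k-4}$.

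The paper fixes this with a step you are missing: it does not write the CDF difference directly, but instead starts from a polynomial $Q_{m,2n}(x)$ (the Tur\'an polynomial $(1-x)^{2n}\sigma_m(x)$, where $\sigma_m$ truncates the power series of $(1-x)^{-2n}$) which is bounded in $[0,1]$ on $x\in[0,1]$ yet has leading coefficient $\ge\big(\tfrac{m+2n-1}{2n-1}\big)^{2n-1}$. It then recenters and \emph{integrates} $Q$ to get the CDF-difference polynomial $S$, and normalizes by the sum $C'$ of positive coefficients of $S$. The point of the integration is that now $f'(x)=\alpha R(x)/C'$ with $|R|\le 1$, so in the substitution $x=e^{-t}$ one gets $|p_1(t)-p_2(t)|=e^{-t}|f'(e^{-t})|\le e^{-t}\alpha/C'$ uniformly, while the shared component gives $p_2(t)\ge\tfrac12 e^{-t}$. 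This makes $\chi^2\le 2(\alpha/C')^2$ and hence $H^2\le(\alpha/C')^2$ directly, yielding exactly $\delta=\sqrt{2N}\,\alpha/C'$ with $1/C'\lesssim(2k{-}3)[\Delta(2k{-}3)]^{2k-4}$. Your binomial/finite-difference weights are a cousin of this (they correspond to $x^{m}(1-x)^{2n-1}$ rather than the Tur\'an polynomial), but without the ``integrate a uniformly bounded density polynomial'' step you will not get a clean $\chi^2$ bound, and your step (4) as written does not go through.
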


\begin{remark}\label{rmk:mpm-optimality-1}
From the square-root dependence of $N$ in 
Theorem~\ref{thm:matrix-pencil-sample-complexity}, the required number of samples $N_0$ has rate $4k$ in the exponent of $\frac{2e}{\Delta}$ if one just wants to learn the $\lambda$'s, and
Theorem~\ref{thm:turan-sample-complexity-intro} confirms that the exponent $4k$ is tight for learning the $\lambda$'s. 
\end{remark}

Next we state an additional information-theoretic lower bound showing that the information-theoretic (minimax) rate is necessarily of the form $\frac{1}{\sqrt{N}} \Delta^{-O(k)}$ up to lower order terms, \emph{even if} all of the $\lambda_i$ are already known and we are only asked to reconstruct the mixing weights $p_j$.

\begin{theorem}
\label{thm:minimax-lower-bound-intro}
Let $m$, $k$ be positive integers such that $m > k > 3$ and let $\Delta = 1/(m + k - 1)$.
There exists a fixed choice of $\lambda_1,\ldots,\lambda_{k}$ which are $\Delta$-separated such that
\begin{equation} \label{eq:minimax-bound}
	\inf_{\hat{p}} \max_{p} \E_{p} \|p - \hat{p}\|_1 
	\ge 
	\frac{1}{4} \min\left( 
    	1,
        \frac{k-3}{\sqrt{2N}} \left( \frac{1}{\Delta(k-3)} \right)^{k-4}
    \right)
\end{equation}
where the max is taken over feasible choices of $p$, and the infimum is taken over possible estimators $\hat{p}$ from $N$ samples of the mixture of exponentials with CDF $F(t) = 1 - \sum_j p_j e^{-\lambda_j t}$.
\end{theorem}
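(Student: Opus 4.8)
The plan is to run Le Cam's two‑point method. Fix the $\Delta$-separated exponents $\lambda_1>\cdots>\lambda_k$ as an arithmetic progression packed just below $1$, say $\lambda_j=1-(j-1)\Delta$ (all in $(0,1]$ since $\Delta<1/(k-1)$, which holds as $\Delta=1/(m+k-1)$ with $m>k$). It suffices to produce two feasible weight vectors $p^{(0)}$ and $p^{(1)}$, inducing mixtures of exponentials $H_0,H_1$ with densities $f_0,f_1$, such that (i) $\norm{p^{(0)}-p^{(1)}}_1$ matches the claimed bound and (ii) $\TV(H_0^{\otimes N},H_1^{\otimes N})\le 1/2$. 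The standard reduction from estimation to testing then gives $\inf_{\hat p}\max_p\E_p\norm{p-\hat p}_1\ge \tfrac14\norm{p^{(0)}-p^{(1)}}_1$, and restricting the maximum to $\{p^{(0)},p^{(1)}\}$ proves the theorem. I would take $p^{(1)}=p^{(0)}+2s\,\eta$ for a perturbation direction $\eta$ with $\sum_j\eta_j=0$ and a scale $s\in(0,\tfrac12]$ to be tuned, so $\norm{p^{(0)}-p^{(1)}}_1=2s\norm{\eta}_1$.

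The heart of the argument is the choice of $\eta$. I would let $\eta$ be the normalized divided‑difference annihilator $\eta_j\propto 1/\prod_{i\ne j}(\lambda_j-\lambda_i)$, scaled so that $\norm{\eta}_1=1$; over the arithmetic progression this is simply $\eta_j=(-1)^{j-1}\binom{k-1}{j-1}/2^{k-1}$. Two facts drive everything. First, since the $(k-1)$‑st divided difference of a constant vanishes, $\sum_j\eta_j=0$ automatically, so feasibility of $p^{(0)}$ is preserved up to positivity. Second — this is the approximation‑theoretic input — the exponential sum $G(t):=\sum_j\eta_j e^{-\lambda_j t}$ is super‑exponentially small: by the mean‑value form of the divided difference, $\sum_j\frac{e^{-\lambda_j t}}{\prod_{i\ne j}(\lambda_j-\lambda_i)}=\frac{(-t)^{k-1}}{(k-1)!}e^{-\xi(t)t}$ for some $\xi(t)\in[\lambda_k,\lambda_1]$, so $\norm{G}_{\infty,[0,\infty)}\le \frac{\sup_{t\ge 0}t^{k-1}e^{-\lambda_k t}}{(k-1)!\,\norm{c}_1}\lesssim\big(\tfrac{\Delta(k-1)}{2e}\big)^{k-1}=:\eps_0$, using $\lambda_k=1-(k-1)\Delta=\Theta(1)$ and $\norm{c}_1=2^{k-1}/(\Delta^{k-1}(k-1)!)$. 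This is exactly the near‑linear dependence of $\{e^{-\lambda_j t}\}$ when all exponents sit in a window of length $(k-1)\Delta\ll1$, paralleling the construction behind Theorem~\ref{thm:turan-sample-complexity-intro}; running the annihilator on a sub‑collection of $k-3$ exponents and reserving the remaining three for the support of $p^{(0)}$ is what produces the clean exponent $k-4$ in the statement.

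For indistinguishability, take $p^{(0)}_j:=\abs{\eta_j}$ (feasible since $\sum_j\abs{\eta_j}=1$); then $f_0(t)=\sum_j\abs{\eta_j}\lambda_j e^{-\lambda_j t}\ge \lambda_k e^{-t}\sum_j\abs{\eta_j}\ge\tfrac12 e^{-t}$, using $\lambda_j\in[\tfrac12,1]$. The density difference is $f_1-f_0=2s\sum_j\eta_j\lambda_j e^{-\lambda_j t}=-2s\,G'(t)$, and differentiating the mean‑value identity shows $\abs{G'(t)}$ is also controlled by $\eps_0$ up to $\mathrm{poly}(k)$ factors. Hence $\chi^2(H_1\,\|\,H_0)=\int_0^\infty\frac{(f_1-f_0)^2}{f_0}\,dt\le 8s^2\int_0^\infty e^{t}G'(t)^2\,dt\le C\,s^2\,(2k)!\,\Delta^{2k-2}$, after evaluating the resulting $\Gamma$-integral, which converges precisely because $1/\Delta>2k-2$. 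By additivity of KL over products and $\KL\le\log(1+\chi^2)\le\chi^2$, $\TV(H_0^{\otimes N},H_1^{\otimes N})\le\sqrt{\tfrac N2\chi^2(H_1\,\|\,H_0)}\le\sqrt{CNs^2(2k)!\,\Delta^{2k-2}}$. Choosing $s=\min\!\big(\tfrac12,\,(2\sqrt{CN(2k)!}\,\Delta^{k-1})^{-1}\big)$ makes this $\le\tfrac12$, and then the Le Cam bound $\tfrac14\norm{p^{(0)}-p^{(1)}}_1=\tfrac s2$ equals $\tfrac14\min\!\big(1,\,(\sqrt{CN(2k)!}\,\Delta^{k-1})^{-1}\big)$; plugging in Stirling and then weakening the constants (harmless for a lower bound) yields the stated $\tfrac14\min\!\big(1,\ \tfrac{k-3}{\sqrt{2N}}(\Delta(k-3))^{-(k-4)}\big)$, the $\min(1,\cdot)$ being exactly the regime where the unconstrained optimum for $s$ would exceed $\tfrac12$.

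The main obstacle is the second step: getting, with explicit constants, both the super‑exponential smallness $\norm{G}_{\infty,[0,\infty)}\lesssim(\Delta k)^{\Theta(k)}$ of a $\Delta$-separated unit‑$\ell_1$ exponential sum \emph{and} a matching $\mathrm{poly}(k)\cdot\eps_0$ bound on $\norm{G'}$, so that the control passes from the CDF down to the density (i.e.\ into $\chi^2$), together with a clean evaluation of the $\Gamma$-integral in the range $1/\Delta>2k-2$. The mean‑value form of the divided difference makes all of this concrete, but carefully tracking the $\mathrm{poly}(k)$ and factorial factors — and deciding how many exponents to spend on the annihilator versus the support of $p^{(0)}$ so as to land on exactly the exponent $k-4$ — is the technical heart; the Le Cam reduction, the $\chi^2$ tensorization, and the $s\le\tfrac12$ bookkeeping are routine.
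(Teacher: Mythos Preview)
Your overall strategy---Le Cam two-point, a signed perturbation $\eta$ annihilating low-order moments, then $\chi^2$ tensorization---is sound and is a genuine alternative to the paper's route. The paper does \emph{not} use the divided-difference annihilator on a single arithmetic progression. Instead it recycles the Tur\'an-polynomial construction of \cref{thm:turan-lower-bound-h2}: it fixes the $\lambda$'s to be the \emph{union} of two interlaced exponent sets coming from the Tur\'an construction (one outlier near $\Delta$ and a cluster near $1$), so that the two hypotheses $P_1,P_2$ place their non-anchor mass on \emph{disjoint} halves of the exponent set; the $\ell_1$ distance in weight space is then automatically $\approx 2\alpha$, and the $H^2$ bound is inherited verbatim from \cref{thm:turan-lower-bound-h2}. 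What makes that $H^2$ bound clean is that the density difference is $e^{-t}f'(e^{-t})$ with $\|f'\|_\infty\le \alpha/C'$ (Property~4 of \cref{lem:turan-polynomials}), so numerator and denominator in the $\chi^2$ integrand share the \emph{same} tail $e^{-t}$ and the integral is just a constant.

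This is exactly where your proposal has a real gap. With $\lambda_j=1-(j-1)\Delta$ one computes explicitly $G(t)=\pm 2^{-(k-1)}e^{-t}(e^{\Delta t}-1)^{k-1}$, so $G'(t)^2$ has slowest decay $e^{-2\lambda_k t}$ while your lower bound $f_0(t)\ge\tfrac12 e^{-t}=\tfrac12 e^{-\lambda_1 t}$ uses the \emph{fastest} exponent. The resulting $\Gamma$-integral therefore carries a factor $(2\lambda_k-\lambda_1)^{-(2k-1)}=(1-2(k-1)\Delta)^{-(2k-1)}$, which for $m$ close to $k$ (the hypothesis only gives $m>k$) is of order $k^{2k-1}$. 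That is \emph{not} $\mathrm{poly}(k)$, and it destroys the final ``weaken by Stirling'' step: you would need $(e/2)^k\gtrsim k^{k+O(1)}$, which fails for large $k$. Your claimed bound $\chi^2\le Cs^2(2k)!\,\Delta^{2k-2}$ simply drops this factor.

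The fix is easy once you see the mismatch: either (i) anchor the exponents so that $2\lambda_k-\lambda_1$ is an absolute constant---e.g.\ take $\lambda_k=1$, $\lambda_1=1+(k-1)\Delta$ (the theorem does not require $\lambda_j\le 1$), in which case the Beta-integral gives exactly $(2k-2)!\,\Delta^{2k-2}$ up to constants---or (ii) reserve constant mass in $p^{(0)}$ on the \emph{smallest} exponent $\lambda_k$ so that $f_0\gtrsim e^{-\lambda_k t}$ matches the tail of $G'(t)^2$. Option (ii) is morally what the paper does with its anchor weight $1-\alpha$ at $\nu$. Separately, your write-up oscillates between running the annihilator on all $k$ exponents and on a $k{-}3$ sub-collection; pick one and carry the feasibility constraint $p^{(1)}\ge 0$ through consistently.
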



\begin{remark}\label{rmk:mpm-optimality-2}
Recall that in Theorem~\ref{thm:matrix-pencil-sample-complexity}, 
the number of samples needed was exponential in $4k$ when learning just the $\lambda$'s and in $6k$ 
for learning both the $\lambda$'s and the $p$'s.
The exponent of $2k$ in \cref{thm:minimax-lower-bound-intro}
suggests that the discrepancy of $2k$ for MPM in Theorem~\ref{thm:matrix-pencil-sample-complexity} is tight.
\end{remark}



As expected, our lower bounds show that the learning problem becomes harder as $\Delta$ approaches 0.
The ``easiest" case, then, ought to be when $\Delta$ is as large as possible, so that 
the $\lambda_i$ are equally spaced apart in the unit interval. This raises the following question: as $\Delta$ grows, does the sample complexity remains exponential in $k$, 
or is there a phase transition (as in super-resolution \cite{moitra14}) where the problem becomes easier?
In the Appendix, we completely resolve this question: the sample complexity still grows exponentially in $4k$ (\cref{thm:chebyshev-1}) when $\Delta$ is maximally large.

\subsection{A tight upper bound: Nazarov-Tur\'an-based hypothesis testing} \label{sec:results-nt}
As an alternative to the learning problem that the Matrix Pencil Method solves, we also consider the hypothesis testing scenario in which we want to test if the sampled data matches a hypothesized mixture distribution. In this case, we can give guarantees from weaker assumptions and requiring smaller numbers of samples. To state our guarantee, we need the following additional notation: for $P$ a mixture of exponentials, let $p_{\lambda}(P)$ denote the coefficient of $e^{-\lambda t}$, which is 0 if this component is not present in the mixture. We study the following simple-versus-composite hypothesis testing problem using $N$ samples:

\begin{problem}\label{problem:testing2}
Fix $k_0,k_1,\delta,\Delta > 0$ and let $P$ be a known mixture of $k_0$ exponentials. 
\begin{itemize}
\item $H_0$: The sampled data is drawn from $P$.

\item $H_1$: The sampled data is drawn from a different, unknown mixture of at most $k_1$ exponentials $Q$. Let $\nu_1 := \max \{ \lambda : p_{\lambda}(P) > p_{\lambda}(Q) \}$ and $\nu_2 := \max \{ \lambda : p_{\lambda}(Q) > p_{\lambda}(P) \}$. 
We assume that $\min \{ |p_{\nu_1}(P) - p_{\nu_1}(Q)|, |p_{\nu_2}(P) - p_{\nu_2}(Q)| \} \ge \delta$ and $|\nu_1 - \nu_2| \ge \Delta$.
\end{itemize}
\end{problem}
Henceforth, we will refer to $H_0$ as the \emph{null hypothesis} and $H_1$ as the \emph{alternative hypothesis} (note that $H_1$ is a composite hypothesis).
To solve 
this hypothesis testing problem, we propose a finite-sample variant of the Kolmogorov-Smirnov test:
\vspace{-4mm}
\begin{center}
\fbox{\parbox{\textwidth}{
\begin{enumerate}
\item Let $\alpha > 0$ be the significance level.
\item Let $F_N$ be the empirical CDF and let $F$ be the CDF under the null hypothesis $H_0$.
\item Reject $H_0$ if $\sup_t |F_n(t) - F(t)| > \sqrt{\log(2/\alpha)/2N}$.
\end{enumerate}
}}
\end{center}
We show 
that this test comes with a provable finite-sample guarantee.

\begin{theorem}  \label{thm:nt-hypothesis-testing}
Consider the problem setup as in \cref{problem:testing2} 
and fix a significance level $\alpha > 0$. Let $k := \frac{k_0 + k_1}{2}$ and $c_{\Delta} = 8e^2/\min(1/\Delta, 2k - 1)$. Then:
\begin{enumerate}
\item (Type I Error) Under the null hypothesis, the above test rejects $H_0$ with probability at most $\alpha$. 
\item (Type II Error) There exists 
$N_0(\alpha) = O((c_{\Delta}/\Delta)^{4k - 2} \log(2/\alpha)/\delta^2)$
such that if $N \ge N_0$, then
the power of the test at significance level $\alpha$ is at least:
\begin{equation} \label{eq:nt-error}
	\Pr_{Q \in H_1}[\text{Reject } H_0] 
    \ge 
    1 - 2\exp\left(-N\delta^2 (\Delta/c_{\Delta})^{4k - 2}/8\right).
\end{equation}
\end{enumerate}
\end{theorem}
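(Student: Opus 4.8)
The two error bounds are treated separately. The Type I bound is an immediate consequence of the Dvoretzky--Kiefer--Wolfowitz (DKW) inequality; the Type II bound reduces, via a triangle inequality and a second application of DKW, to a \emph{deterministic} lower bound on the Kolmogorov distance $\|F_P - F_Q\|_\infty$ between the null and alternative CDFs. That deterministic estimate is the real content of the theorem, and it is precisely where the Nazarov--Tur\'an lemma enters.

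\textbf{Type I error.} Under $H_0$ the samples are i.i.d.\ with CDF $F = F_P$, so DKW gives $\Pr(\sup_t |F_N(t)-F(t)| > s) \le 2e^{-2Ns^2}$ for every $s>0$. Substituting $s = \sqrt{\log(2/\alpha)/2N}$ makes the right-hand side equal to $\alpha$, which is exactly the claimed Type I guarantee.

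\textbf{The separation lemma.} The key claim is that, under the hypotheses of \cref{problem:testing2},
\begin{equation*}
\|F_P-F_Q\|_\infty \;=\; \sup_{t\ge 0}\Bigl|\sum_{\lambda}\bigl(p_\lambda(Q)-p_\lambda(P)\bigr)e^{-\lambda t}\Bigr| \;\ge\; \frac{\delta}{2}\,\Bigl(\frac{\Delta}{c_\Delta}\Bigr)^{2k-1}.
\end{equation*}
To prove it, write $g(t) := F_P(t)-F_Q(t) = \sum_{\lambda\in\Lambda} c_\lambda e^{-\lambda t}$, where $\Lambda$ is the set of exponents appearing in $P$ or $Q$ (so $|\Lambda|\le k_0+k_1 = 2k$) and $c_\lambda := p_\lambda(Q)-p_\lambda(P)$. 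By definition of $\nu_1$ and $\nu_2$, the sum $g$ contains two exponents that are $\Delta$-separated and whose coefficients both have absolute value at least $\delta$; in particular $\max_\lambda|c_\lambda|\ge\delta$ and the number of nonzero coefficients lies in $[2,2k]$. Since $g$ is a real-exponent Dirichlet sum, one applies the Nazarov--Tur\'an lemma \cite{nazarov1993local} to $g$ restricted to an interval whose length is calibrated to $\min(1/\Delta,\,2k-1)$ --- long enough to ``resolve'' the gap $\Delta$ between the two dominant exponents, but short enough that the overall $e^{-\lambda t}$ decay costs only a harmless $(1/e)^{2k-1}$ factor --- obtaining a lower bound on $\sup|g|$ of the shape $(\text{absolute constant})^{\#\text{terms}-1}\cdot\max_\lambda|c_\lambda|$; tracking the absolute constant produces $c_\Delta = 8e^2/\min(1/\Delta,2k-1)$, and bounding $\#\text{terms}-1\le 2k-1$ together with $\max_\lambda|c_\lambda|\ge\delta$ yields the displayed inequality. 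Intuitively, the $\Delta$-separation of $\nu_1,\nu_2$ is what forbids the dominant terms of $g$ from nearly cancelling, which is the obstruction the Nazarov--Tur\'an estimate quantitatively rules out; the bookkeeping needed to pass from the standard trigonometric statement of the lemma to this real-exponent, half-line setting is carried out in \cref{sec:nazarov-turan}.

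\textbf{Type II error.} Suppose the data are i.i.d.\ from some $Q\in H_1$ with CDF $F_Q$, and set $m := \|F_P-F_Q\|_\infty$ and $\eps := \sqrt{\log(2/\alpha)/2N}$ (the rejection threshold). The triangle inequality gives $\sup_t|F_N(t)-F_P(t)| \ge m - \sup_t|F_N(t)-F_Q(t)|$, so the test rejects whenever $\sup_t|F_N(t)-F_Q(t)| < m-\eps$; by DKW applied to the $Q$-sample this fails with probability at most $2e^{-2N(m-\eps)^2}$ (valid once $m>\eps$). Taking $N_0(\alpha)$ so that $N\ge N_0$ forces $\eps\le m/2$ --- that is, $N_0 = O(\log(2/\alpha)/m^2) = O\bigl((c_\Delta/\Delta)^{4k-2}\log(2/\alpha)/\delta^2\bigr)$ by the separation lemma --- the rejection probability is at least $1-2e^{-2N(m/2)^2} = 1-2e^{-Nm^2/2} \ge 1 - 2\exp(-N\delta^2(\Delta/c_\Delta)^{4k-2}/8)$, which is exactly \cref{eq:nt-error}.

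\textbf{Main obstacle.} The Type I and Type II steps are routine applications of DKW; the real work is the separation lemma. The delicate points are: (a) $g$ vanishes both at $t=0$ and as $t\to\infty$, so its maximum is attained in the interior and one must select the correct finite window before invoking the lemma; (b) the up to $2k-2$ exponents other than $\nu_1,\nu_2$ may be arbitrarily clustered, and this must be controlled using that all mixing weights --- hence all $|c_\lambda|$ --- are at most $1$; and (c) one must chase constants carefully through the real-exponent version of Nazarov--Tur\'an to land on exactly $c_\Delta=8e^2/\min(1/\Delta,2k-1)$ with exponent $2k-1$.
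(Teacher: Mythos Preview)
Your overall architecture is correct and matches the paper exactly: DKW for Type~I, a deterministic separation lemma $\|F_P-F_Q\|_\infty\ge \tfrac{\delta}{2}(\Delta/c_\Delta)^{2k-1}$, then triangle inequality plus DKW for Type~II. The arithmetic in your Type~II paragraph (choosing $N_0$ so that $\eps\le m/2$, etc.) is also the same as the paper's.

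The gap is in your sketch of the separation lemma. You describe applying Nazarov--Tur\'an to $g$ on a positive interval of length $\min(1/\Delta,2k-1)$ and reading off a bound of the form $\sup|g|\gtrsim(\text{const})^{-(2k-1)}\max_\lambda|c_\lambda|$. But the Nazarov--Tur\'an inequality as stated compares $\sup|g|$ on two nested intervals; it does not by itself extract a coefficient. The paper's mechanism for getting $\max_\lambda|c_\lambda|$ into the estimate is \emph{analytic continuation to negative time}: with $\nu_1>\nu_2$ (WLOG), one chooses $t_2=\log(2/\delta)/(\nu_1-\nu_2)$ so that at $t=-t_2$ the single term $c_{\nu_1}e^{\nu_1 t_2}$ dominates the entire opposite-sign part $\sum_{c_\lambda>0}c_\lambda e^{\lambda t_2}$ (since every such $\lambda\le\nu_2<\nu_1$). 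This gives $|g(-t_2)|\ge\tfrac12|c_{\nu_1}|e^{\nu_1 t_2}$, and \emph{then} Nazarov--Tur\'an with $t_1=2k-1$ transfers this to a lower bound on $\max_{[0,t_1]}|g|$. The $\Delta$-dependence enters through the ratio $(t_1+t_2)/t_1\approx 1+\tfrac{1}{(2k-1)\Delta}$, not through the length of a positive window.

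This matters because your proposed route---bounding a coefficient directly from $\sup_I|g|$ on a positive interval---would typically require a minimum separation among \emph{all} exponents, whereas \cref{problem:testing2} only separates the two leading ones $\nu_1,\nu_2$; the remaining $2k-2$ exponents may cluster arbitrarily. The negative-time trick sidesteps this because it only needs $\nu_1$ to beat the largest opposite-sign exponent $\nu_2$. So your points~(b) and~(c) under ``Main obstacle'' are real, but the device that handles them is not the one you describe.
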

The full proof of Theorem~\ref{thm:nt-hypothesis-testing} is given in \cref{sec:nazarov-turan}. The key step in the proof is a careful application of the celebrated Nazarov-Tur\'an Lemma \cite{nazarov1993local}.

\begin{remark}
This improves upon the Matrix Pencil Method upper bound (\cref{thm:matrix-pencil-sample-complexity}), in terms of the exponent found above $\Delta$ ($\Delta^{-6k}$ versus $\Delta^{-4k}$) and above the mixing weights ($p_{\min}^4$ versus $\delta^2$).
Even when the alternative $Q$ is fixed and known, we see from Theorem~\ref{thm:turan-sample-complexity-intro} that $\Omega((1/\delta^2) (1/\Delta)^{4k})$ many samples are information-theoretically required, which matches \cref{thm:nt-hypothesis-testing}.
\end{remark}

\section{Simulations and Indistinguishability in Simple Examples}
\label{sec:reconstruction-sims}
Our theoretical analysis rigorously establishes the worst-case
dependence on the number of samples needed in order to learn the parameters
of a single period of population history under our model -- recall the construction
of \cref{thm:turan-sample-complexity-intro} of two hard-to-distinguish mixtures of exponentials
and the result \cref{thm:pop-reduction} converting these to population histories.

In our simulations, we will analyze both the performance and information-theoretic difficulty of learning not a specially constructed worst-case instance, but instead an \emph{extremely simple population history} with $k$ populations.  More precisely we consider the following instance:

\vspace{.20cm}
\noindent
\textbf{Simulation Instance($k$)}:
\begin{itemize}
\item[] \textbf{Population history description: } We consider reconstructing a single period model with $k$ populations in which the ratio of the population sizes is $1:2:\cdots:k$ and the relative probability of tracing back to each of these populations (i.e. $\Pr(\mathcal{E}_{i,i} | T > t_0)$ from \cref{sec:model-derivation}) are all equal to $1/k^2$. This can easily be realized as a one period of a 2-period population history model, in which in the second (more recent) era all populations are the same size\footnote{As in Remark~\ref{rmk:short-modern-era}, we can optionally make the more recent era short so that almost all samples will be from the earlier period.} 
\item[] \textbf{Mixture of exponentials description: } We consider the following mixture of exponentials:
\[ \Pr(T > t) = (1 - 1/k) + \sum_{i = 1}^k (1/k^2) e^{-t/k}. \]
The constant term represents atomic mass at $\infty$ and corresponds to no coalescence. When $k = 1$ this is a standard exponential distribution, otherwise it is a mixture of $k + 1$ exponentials, counting the degenerate constant term.  
\end{itemize}

We do not believe that this is an unusually difficult instance of a mixture of exponentials on $k$ components.
If anything, the situation is likely the opposite:
our worst-case analysis (Theorems~\ref{thm:minimax-lower-bound-intro}, ~\ref{thm:matrix-pencil-sample-complexity}) suggests that this is comparatively \emph{easy} as the gap parameter $\Delta$ is maximally large. 

In order to evaluate the error in parameter space from the result of the learning algorithm, we adopted a natural metric, the well-known \emph{Earthmover's distance}. 
Informally, this measures the minimum distance (weighted by $p_i$ and recovered $\tilde{p}_i$) that the recovered exponents must be moved to agree with the ground truth; we give the precise definition in Appendix~\ref{sec:simulation-math}. 

For a point of comparison to MPM, we also tested a natural convex programming formulation which essentially minimizes $\|\int e^{-\lambda t} d\mu(\lambda) - (1 - \tilde{F}(t))\|_{\infty}$ over probability measures $\mu$ on $\mathbb{R}_{\ge 0}$, where $\tilde{F}$ is the empirical CDF -- refer to \cref{sec:convex-program-desc} for details. 
The results of running both the convex program 
and the MPM are shown in Figure~\ref{fig:components-vs-logsamples} (blue and green lines) plotted on a log (base 10) scale; details of the setup are provided in Appendix~\ref{sec:simulation-math}. 
As expected based on our theoretical analysis, the number of samples needed scaled exponentially in $k$, the number of populations in our instance. 
Details of the setup are provided in Appendix~\ref{sec:simulation-math}; due to limitations of machine precision, the convex program could not reliably reconstruct at $5$ components with any noise level and so this point is omitted. 

Besides showing the performance of the algorithms, we were able to deduce \emph{rigorous, unconditional} lower bounds on the information-theoretic difficulty of these particular instances. Each point on the red line corresponds to the existence of a different mixture of exponentials (found by examining the output of the convex program), with a comparable number of mixture components\footnote{The alternative hypothesis had no more than a few additional mixture components.
A byproduct of this analysis is that even estimating the number of populations is in these examples requires a very large number of samples.}, which is far in parameter space\footnote{More precisely, with Earthmover's distance in parameter space greater than $0.01$. For comparison, an estimator which only gets the (easy) constant component correct already has Earthmover distance at most $1/k$.} from the ground truth and yet the distribution of $N$ samples from this model (where $N = 10^y$ and $y$ is the $y$-coordinate in the plot) has total-variation (TV) distance at most $0.5$ from the distribution of $N$ samples from the true distribution. 
By the Neyman-Pearson Lemma, this implies that if the prior distribution is $\left( \frac{1}{2}, \frac{1}{2} \right)$ between these two distributions, then we cannot successfully distinguish them with greater than $75\%$ probability. We describe the mathematical derivation of the TV bound in Appendix~\ref{sec:simulation-math}, and illustrate such a hard-to-distinguish pair in Example~\ref{example:hard-mixture}. Recall that by Theorem~\ref{thm:pop-reduction}, such a hard to distinguish pair of mixtures can automatically be converted into a pair of hard-to-distinguish population histories.

Notably, the lower bound shows that reliably learning the underlying parameters 
in this simple model with $5$ components necessarily requires at least 10 trillion samples from the true coalescence distribution. In reality, since we do not truly have access to clean i.i.d. samples from the distribution, this is likely a significant underestimate. 

\begin{figure}\label{fig:components-vs-logsamples}
\centering
\vspace{-10mm}
\includegraphics[width=0.65\textwidth]{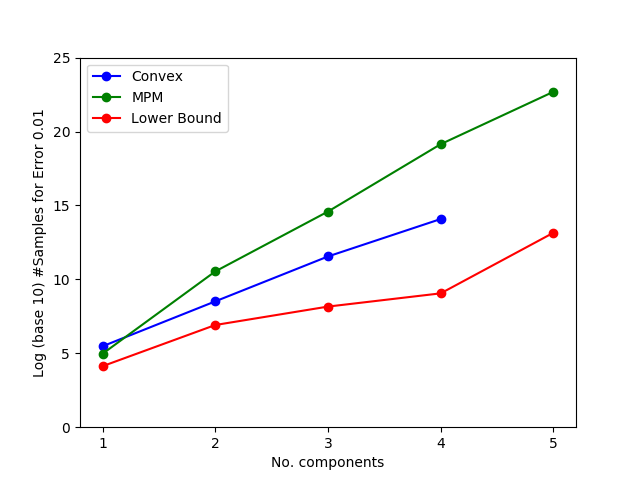}
\caption{Plot of \#components versus log (base 10) number of samples needed for accurate reconstruction (parameters within Earthmover's distance $0.01$). Below the red line, it is mathematically impossible for \emph{any method} to distinguish with greater than 75\% success between the ground truth and a fixed alternative instance which has significantly different parameters.}
\end{figure}

\begin{example}\label{example:hard-mixture}
Consider the mixtures of exponentials with CDFs $F(t)$ and $G(t)$, where
$1 - F(t) = 0.5 + 0.25e^{-0.5t} + 0.25e^{-t}$ and 
\[1 - G(t) = 0.49975946 + 0.15359557e^{-0.45t} + 0.30642727 e^{-0.81t} + 0.0402177e^{-1.55t}. \]
Despite being very different in parameter space, their $H^2$ distance is $7.9727 \cdot 10^{-6}$ so
any learning algorithm requires at least $15660$ 
samples to distinguish them with better than 75\% success rate. 

As a remark, we point out that the CDFs $F$ and $G$ in this example have exponents that are interlaced. 
Observe that this is a characteristic also shared by the information-theoretic obstructions referenced in \cref{sec:info-overview} and \cref{sec:info-lower-bound}.
This likely illustrates a major source of difficulty of most reasonable-looking instances: ``averaging" adjacent exponents of an exponential mixture may produce a different mixture with a similar distribution whose components interlace with the original.
\end{example}

\newpage
\bibliographystyle{plain}
\bibliography{bib}

\begin{thebibliography}{10}

\bibitem{bhaskar2014descartes}
Anand Bhaskar and Yun~S Song.
\newblock Descartes’ rule of signs and the identifiability of population
  demographic models from genomic variation data.
\newblock {\em Annals of statistics}, 42(6):2469, 2014.

\bibitem{bhaskar2015efficient}
Anand Bhaskar, YX~Rachel Wang, and Yun~S Song.
\newblock Efficient inference of population size histories and locus-specific
  mutation rates from large-sample genomic variation data.
\newblock {\em Genome research}, pages gr--178756, 2015.

\bibitem{blythe2007stochastic}
Richard~A Blythe and Alan~J McKane.
\newblock Stochastic models of evolution in genetics, ecology and linguistics.
\newblock {\em Journal of Statistical Mechanics: Theory and Experiment},
  2007(07):P07018, 2007.

\bibitem{candes2013super}
Emmanuel~J Cand{\`e}s and Carlos Fernandez-Granda.
\newblock Super-resolution from noisy data.
\newblock {\em Journal of Fourier Analysis and Applications}, 19(6):1229--1254,
  2013.

\bibitem{Drummond2005}
AJ~Drummond, A~Rambaut, B~Shapiro, and OG~Pybus.
\newblock Bayesian coalescent inference of past population dynamics from
  molecular sequences.
\newblock {\em Mol. Biol. Evol.}, 22(5):1185--1192, 2005.

\bibitem{dvoretzky1956asymptotic}
Aryeh Dvoretzky, Jack Kiefer, and Jacob Wolfowitz.
\newblock Asymptotic minimax character of the sample distribution function and
  of the classical multinomial estimator.
\newblock {\em The Annals of Mathematical Statistics}, pages 642--669, 1956.

\bibitem{Excoffier2013}
Laurent Excoffier, Isabelle Dupanloup, Emilia Huerta-S{\'a}nchez, Vitor~C
  Sousa, and Matthieu Foll.
\newblock {Robust Demographic Inference from Genomic and SNP Data}.
\newblock {\em {PLoS Genetics}}, 9(10):e1003905, 2013.

\bibitem{feldmann1998fitting}
Anja Feldmann and Ward Whitt.
\newblock Fitting mixtures of exponentials to long-tail distributions to
  analyze network performance models.
\newblock {\em Performance evaluation}, 31(3-4):245--279, 1998.

\bibitem{gautschi1962inverses}
Walter Gautschi.
\newblock On inverses of vandermonde and confluent vandermonde matrices.
\newblock {\em Numerische Mathematik}, 4(1):117--123, 1962.

\bibitem{gautschi1990stable}
Walter Gautschi.
\newblock How (un) stable are vandermonde systems.
\newblock 1990.

\bibitem{heled2008bayesian}
Joseph Heled and Alexei Drummond.
\newblock Bayesian inference of population size history from multiple loci.
\newblock {\em BMC Evolutionary Biology}, 8(1):289, 2008.

\bibitem{MPM}
Yingbo Hua and Tapan~K Sarkar.
\newblock Matrix pencil method for estimating parameters of exponentially
  damped/undamped sinusoids in noise.
\newblock {\em IEEE Transactions on Acoustics, Speech, and Signal Processing},
  38(5):814--824, 1990.

\bibitem{joseph2018inference}
Tyler~A Joseph and Itsik Pe’er.
\newblock Inference of population structure from ancient dna.
\newblock In {\em International Conference on Research in Computational
  Molecular Biology}, pages 90--104. Springer, 2018.

\bibitem{kim2015can}
Junhyong Kim, Elchanan Mossel, Mikl{\'o}s~Z R{\'a}cz, and Nathan Ross.
\newblock Can one hear the shape of a population history?
\newblock {\em Theoretical population biology}, 100:26--38, 2015.

\bibitem{li2011inference}
Heng Li and Richard Durbin.
\newblock Inference of human population history from individual whole-genome
  sequences.
\newblock {\em Nature}, 475(7357):493, 2011.

\bibitem{massart1990tight}
Pascal Massart.
\newblock The tight constant in the dvoretzky-kiefer-wolfowitz inequality.
\newblock {\em The Annals of Probability}, pages 1269--1283, 1990.

\bibitem{mcvean2005approximating}
Gilean~AT McVean and Niall~J Cardin.
\newblock Approximating the coalescent with recombination.
\newblock {\em Philosophical Transactions of the Royal Society of London B:
  Biological Sciences}, 360(1459):1387--1393, 2005.

\bibitem{moitra14}
Ankur Moitra.
\newblock Super-resolution, extremal functions and the condition number of
  vandermonde matrices.
\newblock In {\em Proceedings of the Forty-seventh Annual ACM Symposium on
  Theory of Computing}, STOC '15, pages 821--830, New York, NY, USA, 2015. ACM.

\bibitem{myers2008can}
Simon Myers, Charles Fefferman, and Nick Patterson.
\newblock Can one learn history from the allelic spectrum?
\newblock {\em Theoretical population biology}, 73(3):342--348, 2008.

\bibitem{nazarov1993local}
Fedor~L'vovich Nazarov.
\newblock Local estimates for exponential polynomials and their applications to
  inequalities of the uncertainty principle type.
\newblock {\em Algebra i analiz}, 5(4):3--66, 1993.

\bibitem{Nielsen2000}
Rasmus Nielsen.
\newblock Estimation of population parameters and recombination rates from
  single nucleotide polymorphisms.
\newblock {\em Genetics}, 154(2):931--942, 2000.

\bibitem{nordborg2001coalescent}
Magnus Nordborg.
\newblock Coalescent theory.
\newblock {\em Handbook of statistical genetics}, 2:843--877, 2001.

\bibitem{prony}
R~Prony.
\newblock Essai éxperimental et analytique: sur les lois de la dilatabilité
  de uides élastique et sur celles de la force expansive de la vapeur de
  l'alkool, à diérentes températures.
\newblock {\em Journal de l’Ecole Polytechnique}, 2, 1795.

\bibitem{remez1936propriete}
EJ~Remez.
\newblock Sur une propri{\'e}t{\'e} des polyn{\^o}mes de tchebycheff, comm.
  l’inst.
\newblock {\em Sci., Kharkow}, 13:93--95, 1936.

\bibitem{schiffels2014inferring}
Stephan Schiffels and Richard Durbin.
\newblock Inferring human population size and separation history from multiple
  genome sequences.
\newblock {\em Nature genetics}, 46(8):919, 2014.

\bibitem{Sheehan2013}
Sara Sheehan, Kelley Harris, and Yuns~S. Song.
\newblock {Estimating Variable Effective Population Sizes from Multiple
  Genomes: A Sequentially Markov Conditional Sampling Distribution Approach}.
\newblock {\em Genetics}, 194:647--662, 2013.

\bibitem{terhorst2017robust}
Jonathan Terhorst, John~A Kamm, and Yun~S Song.
\newblock Robust and scalable inference of population history from hundreds of
  unphased whole genomes.
\newblock {\em Nature genetics}, 49(2):303, 2017.

\bibitem{terhorst2015fundamental}
Jonathan Terhorst and Yun~S Song.
\newblock Fundamental limits on the accuracy of demographic inference based on
  the sample frequency spectrum.
\newblock {\em Proceedings of the National Academy of Sciences},
  112(25):7677--7682, 2015.

\bibitem{turan1984new}
Paul Tur{\'a}n.
\newblock {\em On a new method of analysis and its applications}.
\newblock Wiley New York, 1984.

\end{thebibliography}

\newpage
\appendix
\section{Derivation of the multiple-subpopulation coalescent model}
\label{sec:model-derivation}
For $b > a > 0$, let $I = [a,b]$ be an interval such that the population structure $\vN$ is constant over $I$.
Then \cref{eq:coalescent-regular}, together with the Markov property of Kingman's coalescent model tells us that the coalescence time $T$ of two randomly sampled individuals in the $i$th sub-population is given, for any $t \in [0,b-a]$, by
\begin{equation}
\label{eq:coalescent-single}
	\Pr\left( T > a + t \;\big|\; \cE_{i,i} \land \{ T > a \} \right) = \exp \left( - \frac{1}{N_i} t \right).
\end{equation}
Here, $\cE_{i,j}$ represents the event where the ancestry of one of the individuals traces back to subpopulation $i$, and the other traces back to $j$.

Let $D$ be the number of subpopulations restricted to the interval $I$.
By the law of total probability, the random variable $T$ satisfies, again for any $t \in [0,b-a]$,
\[
	\Pr(T > a + t \;\big|\; T > a)
	= 
	\sum_{i < j} \Pr(\cE_{i,j} \;\big|\; T > a)
	+ \sum_{i = 1}^{D} \Pr(\cE_{i,i} \;\big|\; T > a) \Pr( T > a+t \;\big|\; \cE_{i,i} \land \{T > a\})
\]
The first summation over $i < j$ uses the fact that $\Pr(T > a + t \;\big|\; \cE_{i,j} \land \{T > a\}) = 1$, via the ``no admixture'' assumption; whenever the two individuals' lineages at time $a$ lie in distinct subpopulations, they do not coalesce anywhere in $I$.
Via \cref{eq:coalescent-single}, the right hand side can be re-written as seen in \cref{eq:coalescent-mixture}, i.e.
\[
	\Pr(T > a + t \;\big|\; T > a)
	=
	\sum_{\ell = 0}^{D} p_{\ell} e^{-\lambda_\ell t}.
\]
\section{Reduction from Mixtures of Exponentials to Population History}
\label{sec:proof-equivalence}
\subsection{Proof of \cref{thm:equivalence}}
\begin{proof}
We consider the following population history:
\begin{itemize}
\item In the (more recent) period $[0,t_0]$ there are $k$ populations and population
$i$ has size $\sqrt{q_i}$, where $q_i$ is the (unique) nonnegative solution to
\[ p_i = q_i e^{-t_0/\sqrt{q_i}}. \]
To see that the solution exists and is unique, observe that the rhs of this equation is a strictly increasing function in $q_i$ which maps $(0,\infty)$ to $(0,\infty)$. 
\item In the (less recent) period $[t_0, \infty)$ each of the $k$ populations changes to
size $1/\lambda_i$. 
\end{itemize}
By construction, the probability that two independently sampled individuals being in the same population is proportional to $q_i$, and conditioned on no coalescence before time $t_0$ this probability
is proportional to $p_i$. Therefore the distribution $Q$ of coalescence times satisfies
\begin{equation} 
\label{eqn:equivalence}
Q(T > t + t_0 | T > t_0) = q_0 + \frac{1}{1 - q_0} \sum_{i = 1}^k p_i e^{-\lambda _i t} 
\end{equation}
where $q_0$ is the probability that the two individuals sampled were in different populations.
\end{proof}

\subsection{Hardness for Distinguishing Population Histories}

Theorem~\ref{thm:equivalence} shows that any arbitrary mixture of exponentials can be embedded as a sub-mixture of a simple population history with two time periods. We can leverage this equivalence to reduce distinguishing two mixtures of exponentials to distinguishing two population histories, and hence conclude from Theorem~\ref{thm:turan-sample-complexity-intro} that distinguishing two population histories is exponentially hard in the number of subpopulations.

The high-level idea is to take the reduction from Theorem~\ref{thm:equivalence} and apply it to two arbitrary mixtures of exponentials, and argue that the problem of distinguishing the resulting two population histories is at least as hard as distinguishing the two mixtures they came from. The problem that arises is that the constant term ($q_0$ in Equation~\ref{eqn:equivalence}) is exactly the probability of no coalescence, which is fixed by the subpopulation sizes, and therefore fixed by the desired mixture. Therefore, if we are not careful, the probability of no coalescence will be significantly different between our two population histories, making them easily distinguishable. 

\begin{proof}[Proof of Theorem~\ref{thm:pop-reduction}]
Consider the following population history for $R$:
\begin{itemize}
\item The first $k$ populations of $R$ are constructed according to the proof of Theorem~\ref{thm:equivalence} from the mixture $P$. Let $N_i$ denote the size if the $i^{th}$ population in the more recent period $[0, t_0)$.
\item The $(k+1)^{th}$ and $(k+2)^{th}$ populations of $R$ will have sizes $N_{k+1}, N_{k+2}$ in the period $[0, t_0)$, such that $N_{k+1}^2 + N_{k+2}^2 = 64 \max(k^2, \ell^2) \sum_{i=1}^{k} N_i^2$. The exact sizes of $N_{k+1}$ and $N_{k+2}$ will be set later.
\item Every pair of samples in populations $k+1$ and $k+2$ coalesce in the closed interval $[0, t_0]$. This can be achieved by having populations $k+1$ and $k+2$ shrink to size ``zero'' at time $t_0$, so any pair of individuals in these populations who have not coalesced in the interval $[0, t_0)$ will coalesce at time $t_0$.
\end{itemize}
Similarly, let the population history $S$ be constructed from the mixture $Q$ in the same way (substituting $Q$ for $P$ and $\ell$ for $k$ in what appears above), and let $M_j$ denote the size of $j^{th}$ population of the population history $S$ in the (most recent) period $[0, t_0)$ (where $j$ ranges from 1 to $\ell + 2$). Recall that we want to establish the following four properties:
\begin{enumerate}
\item $R$ has $k+2$ subpopulations and $S$ has $\ell + 2$ subpopulations.
\item $ R[T > t + t_0 | T\neq \infty, T > t_0] = P(T > t)$ and $ S[T > t + t_0 | T\neq \infty, T > t_0] = Q(T > t)$
\item $R[T = t_0] = S[T = t_0]$. 
\item $R[T = \infty] = S[T = \infty]$.
\end{enumerate}

Note that the coalescence probability for the population history $R$ at exactly $t_0$ is 
\[R[T = t_0] = \frac{1}{(\sum_{i = 1}^{k + 2} N_i)^2} \left(N_{k+1}^2 e^{-t_0 / N_{k+1}} + N_{k+2}^2 e^{-t_0 / N_{k+2}}\right). \]
First, we analyze what happens in the limit $t_0 \to 0$, disregarding the term that comes from the probability of coalescing before time $t_0$, so the equations simplify to 
\begin{equation}
\label{eq:t0-irrelevant-1}
R[T = t_0] = \frac{N_{k+1}^2 + N_{k+2}^2}{(\sum_{i = 1}^{k + 2} N_i)^2}
\end{equation}
and similarly
\begin{equation}
\label{eq:t0-irrelevant-2}
S[T = t_0] = \frac{M_{k+1}^2 + M_{k+2}^2}{(\sum_{i = 1}^{\ell + 2} M_i)^2}
\end{equation}
We do this mainly for simplicity, and 
at the end show how to handle the general case, supposing $t_0$ is sufficiently small. 

It is clear that $R$ and $S$ satisfy Property (1) of Theorem~\ref{thm:pop-reduction} by construction, as they have the correct numbers of subpopulations. Property (2) is similarly satisfied by construction and by an application to Theorem~\ref{thm:equivalence}. Indeed, when we sample from the history $R$ conditioned on coalescence before time $\infty$ and no coalescence at or before time $t_0$, we can see that the two individuals sampled must be from one of subpopulations $1$ to $k$ (since any pair of individuals sampled from subpopulations $k+1$ or $k+2$ coalesce in the interval $[0,t_0]$). Repeating this argument for $S$ establishes property (2). 

It remains to show that Properties (3) and (4) hold, and for this we will need to set $N_{k+1}, N_{k+2}, M_{\ell+1}$, and $M_{\ell + 2}$ appropriately. In order to do this, we first rewrite Properties (3) and (4) using Equations~\ref{eq:t0-irrelevant-1} and~\ref{eq:t0-irrelevant-2}, to see what conditions we need to satisfy with $N_{k+1}, N_{k+2}, M_{\ell+1}$, and $M_{\ell + 2}$. 

Noting that the probability of coalescence at \emph{exactly} time $t_0$ from population history $R$ (resp. $S$) is exactly the probability of sampling a pair of individuals from $N_{k+1}$ or $N_{k+2}$ (resp. $M_{\ell+1}$ or $M_{\ell+2}$), we get that $R$ and $S$ will satisfy Property (3) if and only if the population sizes satisfy the following (desired) equation:
\begin{equation}
\label{eqn:pop-reduction-1}
\frac{N_{k+1}^2 + N_{k+2}^2}{\left(\sum\limits_{i=1}^{k+2}N_i\right)^2} = 
\frac{M_{\ell+1}^2 + M_{\ell+2}^2}{\left(\sum\limits_{j=1}^{\ell+2}M_j\right)^2}
\end{equation}
Similarly, $R$ and $S$ will satisfy Property (4) if and only if the population sizes satisfy the following (desired) equation:
\begin{equation}
\frac{\sum\limits_{j=1}^{\ell+2} M_j^2}{\left(\sum\limits_{j=1}^{\ell+2} M_j\right)^2} = \frac{\sum\limits_{i=1}^{k+2} N_i^2}{\left(\sum\limits_{i=1}^{k+2} N_i\right)^2} \label{eqn:pop-reduction-2}
\end{equation}

So it suffices to describe how to set $N_{k+1}, N_{k+2}, M_{\ell+1}$, and $M_{\ell+2}$ such that Equations~\ref{eqn:pop-reduction-1} and~\ref{eqn:pop-reduction-2} are satisfied. First, we rearrange Equations~\ref{eqn:pop-reduction-1} and~\ref{eqn:pop-reduction-2} to get the equivalent (desired) set of equalities
\begin{equation}
\label{eqn:pop-reduction-3}
\frac{\left(\sum\limits_{i=1}^{k+2}N_i\right)^2}{\left(\sum\limits_{j=1}^{\ell+2}M_j\right)^2} = \frac{N_{k+1}^2 + N_{k+2}^2}{M_{\ell+1}^2 + M_{\ell+2}^2} = \frac{\sum\limits_{i=1}^{k+2} N_i^2}{\sum\limits_{j=1}^{\ell+2} M_j^2}
\end{equation}
Note that the second equality is trivially satisfied by our stipulations that $N_{k+1}^2 + N_{k+2}^2 = 64 \max(k^2, \ell^2) \sum_{i=1}^{k} N_i^2$ and $M_{\ell+1}^2 + M_{\ell+2}^2 = 64 \max(k^2, \ell^2) \sum_{i=1}^{\ell} M_j^2$. 

Now we move on to describing how to set population sizes to satisfy the first equality. Note that we can write 

\begin{equation} 
\label{eqn:pop-reduction-4}
\left(\sum_{j=1}^{\ell+2}M_j\right)^2 = ((M_{\ell+1} + M_{\ell+2}) + \sum_{j=1}^{\ell}M_j)^2
\end{equation}
Note that the RHS of Equation~\ref{eqn:pop-reduction-4} is a continuous and strictly increasing function of $M_{\ell+1} + M_{\ell+2}$, and furthermore that we can set $M_{\ell+1} + M_{\ell+2}$ to be anywhere in the range \\$\left[\sqrt{M_{\ell+1}^2 + M_{\ell+2}^2}, \sqrt{2} \cdot \sqrt{M_{\ell+1}^2 + M_{\ell+2}^2}\right]$ while keeping $M_{\ell+1}^2 + M_{\ell+2}^2$ constant. Furthermore, note that 

\begin{align}
\left(\sum_{j=1}^{\ell+2}M_j\right)^2 &= ((M_{\ell+1} + N_{\ell+2}) + \sum_{j=1}^{\ell}M_j)^2 \nonumber \\
&\leq ((9/8)(M_{\ell+1} + M_{\ell+2}))^2 \nonumber \\
&= \frac{81}{64}(M_{\ell+1} + M_{\ell+2})^2 \label{eqn:pop-lb-M}
\end{align}
where the second line uses the fact that we set $M_{\ell+1}^2 + M_{\ell+2}^2$ sufficiently large such that
\[ M_{\ell+1} + M_{\ell+2} \geq \sqrt{64\ell^2 \sum_{j=1}^{\ell}M_j^2} \geq \sqrt{64 \left(\sum_{j=1}^{\ell} M_j\right)^2} \geq 8\sum_{j=1}^{\ell} M_j\]

We can do the same bound for $\left(\sum_{i=1}^{k+2}N_i\right)^2$, and hence we get that the LHS of Equation~\ref{eqn:pop-reduction-3} can be bounded on both sides as follows
\begin{equation} 
\label{eq:pop-reduction-5}
\frac{64}{81}\frac{(N_{k+1} + N_{k+2})^2}{(M_{\ell+1} + M_{\ell+2})^2} \leq \frac{\left(\sum\limits_{i=1}^{k+2}N_i\right)^2}{\left(\sum\limits_{j=1}^{\ell+2}M_j\right)^2} \leq \frac{81}{64}\frac{(N_{k+1} + N_{k+2})^2}{(M_{\ell+1} + M_{\ell+2})^2}
\end{equation}

Now suppose that we initially set $N_{k+2} = M_{\ell+2} = 0$, and suppose that this gives us that
\begin{equation} 
\label{eq:pop-reduction-6}
\frac{N_{k+1}^2 + N_{k+2}^2}{M_{\ell+1}^2 + M_{\ell+2}^2} = \alpha \cdot \frac{\left(\sum\limits_{i=1}^{k+2}N_i\right)^2}{\left(\sum\limits_{j=1}^{\ell+2}M_j\right)^2} 
\end{equation}
for some $\alpha < 1$. We know by $\alpha \geq \frac{64}{81}$ by noting that $N_{k+2} = M_{\ell+2} = 0$ and applying the upper bound of Equation~\ref{eq:pop-reduction-5}. 

We can continuously increase $M_{\ell+1} + M_{\ell+2}$ while keeping $M_{\ell+1}^2 + M_{\ell+2}^2$ constant by moving $M_{\ell+1}$ and $M_{\ell+2}$ relatively closer together, until we satisfy Equation~\ref{eq:pop-reduction-6} with $\alpha=1$, satisfying the first equality in Equation~\ref{eqn:pop-reduction-3}. This is because we can increase $(M_{\ell+1} + M_{\ell+2})^2$ by a factor of up to 2 overall in this manner, and using the fact that 
\[ (M_{\ell+1} + M_{\ell+2})^2 \leq \left(\sum_{j=1}^{\ell+2}M_j\right)^2 \leq \frac{81}{64}(M_{\ell+1} + M_{\ell+2})^2\]
we conclude that we can increase $\left(\sum_{j=1}^{\ell+2}M_j\right)^2$ by at least a multiplicative factor of $2 \cdot (64/81) \geq (81/64)$ in this fashion. Hence, there exists a setting of $M_{\ell+1}$ and $M_{\ell+2}$ that satisfies both equalities in Equation~\ref{eqn:pop-reduction-3}.

To handle the case where $1 < \alpha \leq 81/64$, we instead continuously increase $N_{k+1} + N_{k+2}$ while keeping $N_{k+1}^2 + N_{k+2}^2$ fixed, and the argument goes through mutatis mutandis.

Finally, we describe how to handle the general case where $t_0 > 0$ and $t_0$ is sufficiently small. We can ensure that property (2) by using the same rescaling method as in the proof of Theorem~\ref{thm:equivalence}. It remains to show that properties (3) and (4) hold by modifying $N_{k + 1}, N_{k + 2}, M_{\ell + 1}, M_{\ell + 2}$ appropriately. Recall that in this case, $R[T = t_0] = \frac{1}{(\sum_{i = 1}^{k + 2} N_{i})^2}\left( N_{k+1}^2 e^{-t_0 / N_{k+1}} + N_{k+2}^2 e^{-t_0 / N_{k+2}}\right)$ and similarly for $S$.
As before, property (3) holds by fixing the ratio of $N_{k + 1}^2 e^{-t_0 / N_{k + 1}} + N_{k + 2}^2 e^{-t_0/N_{k + 2}}$ and $M_{k + 1}^2 e^{-t_0/M_{k + 1}} + M_{k + 2}^2 e^{-t_0/M_{k + 2}}$. As before, this leaves a degree of freedom in the value of $M_{k + 1} + M_{k + 2}$ which we use to guarantee property (4) holds assuming $t_0$ is sufficiently small.
\end{proof}

\section{Analysis of the Matrix Pencil Method} \label{sec:matrix-pencil-analysis}

In this section, $\| A \|$ and $\| A \|_F$ respectively denotes the operator norm and the Frobenius norm of matrices.
For a vector $x$, $\|x\|$ is its Euclidean norm, and more generally $\| x \|_p$ denotes its $\ell_p$-norm.

\subsection{The condition number of Vandermonde matrices}
In 1962, Gautschi \cite{gautschi1962inverses} observed an exact formula
for the $\ell_{\infty} \to \ell_{\infty}$ condition number of a real
Vandermonde matrix, which we now recall:
\begin{defn} \label{def:infty-infty-norm}
The $\ell_{\infty} \to \ell_{\infty}$ norm of a matrix is defined by\footnote{The second equality follows from Holder's inequality (and its equality case, where one takes $x$ to be the appropriate sign vector).}
\[ \|A\|_{\infty \to \infty} :=  \sup_{x : \|x\|_{\infty} = 1} \|A x\|_{\infty} = \max_i \sum_j |A_{ij}|. \]
\end{defn}
\begin{theorem}[\cite{gautschi1962inverses},\cite{gautschi1990stable}]
\label{thm:gautschi}
Suppose $V = V_{n}(\alpha_1, \ldots, \alpha_n)$ such that $\alpha_1,\ldots,\alpha_n$ are all distinct. 
Then
\[ \max_i \prod_{j : j \ne i} \frac{\max(1,|\alpha_j|)}{|\alpha_i - \alpha_j|} \le 
\|V^{-1}\|_{\infty \to \infty} \le \max_{i} \prod_{j : j \ne i} \frac{1 + |\alpha_j|}{|\alpha_j - \alpha_i|} \]
and furthermore equality is attained in the upper bound whenever
$\alpha_j = |\alpha_j| e^{i \theta}$ for $\theta$ independent of $j$.
\end{theorem}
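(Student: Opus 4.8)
The plan is to use the classical description of the inverse of a Vandermonde matrix via Lagrange interpolation, and then reduce each of the two bounds to elementary facts (plus one classical inequality) about the coefficients of polynomials with prescribed roots.

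First I would recall that the rows of $V^{-1}$ are exactly the coefficient vectors of the Lagrange basis polynomials: writing $L_i(x) = \prod_{j \ne i}\frac{x - \alpha_j}{\alpha_i - \alpha_j}$, one has $(V^{-1})_{ij} = [x^{j-1}]\, L_i(x)$. This follows because $V^{T}$ sends the coefficient vector of a polynomial $p$ of degree $<n$ to $(p(\alpha_1),\ldots,p(\alpha_n))$, so $(V^T)^{-1} = (V^{-1})^T$ performs interpolation, whose output, expressed in the Lagrange representation $p_y = \sum_i y_i L_i$, has precisely these coefficients. Combining this with $\|A\|_{\infty \to \infty} = \max_i \sum_j |A_{ij}|$ from Definition~\ref{def:infty-infty-norm}, and expanding $\prod_{j\ne i}(x-\alpha_j) = \sum_{k=0}^{n-1}(-1)^k e_k(\{\alpha_j\}_{j\ne i})\, x^{n-1-k}$ in elementary symmetric polynomials $e_k$, I get
\[ \|V^{-1}\|_{\infty\to\infty} = \max_i \frac{1}{\prod_{j\ne i}|\alpha_i - \alpha_j|}\sum_{k=0}^{n-1}\bigl| e_k\bigl(\{\alpha_j\}_{j\ne i}\bigr)\bigr|. \]

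For the upper bound I would apply the triangle inequality termwise to each symmetric sum, $|e_k(\{\alpha_j\}_{j\ne i})| \le e_k(\{|\alpha_j|\}_{j\ne i})$, and sum over $k$ to obtain $\sum_k |e_k(\{\alpha_j\}_{j\ne i})| \le \prod_{j\ne i}(1+|\alpha_j|)$; dividing and taking the max over $i$ yields the stated bound. The equality case falls out of the same argument: if all $\alpha_j = |\alpha_j|e^{i\theta}$ share a common argument $\theta$, then for every $i$ and every $k$ the monomials comprising $e_k(\{\alpha_j\}_{j\ne i})$ all have argument $k\theta$, so no cancellation occurs, every triangle inequality is tight, and hence every row sum — in particular the maximal one — attains the upper bound. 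For the lower bound I would set $q_i(x) := \prod_{j\ne i}(x - \alpha_j)$, monic of degree $n-1$, and observe that $\sum_k |e_k(\{\alpha_j\}_{j\ne i})|$ is exactly $\|q_i\|_{\ell_1}$, the $\ell_1$ norm of its coefficient vector. Then the chain $\|q_i\|_{\ell_1} \ge \|q_i\|_{\ell_2} \ge M(q_i) = \prod_{j\ne i}\max(1,|\alpha_j|)$ — where $M$ is the Mahler measure and the middle step is the classical consequence of Jensen's formula that $M(q)\le\|q\|_{\ell_2}$ for monic $q$ (via $\log M(q) = \frac{1}{2\pi}\int_0^{2\pi}\log|q(e^{i\theta})|\,d\theta$ and concavity of $\log$) — completes the bound after dividing by $\prod_{j\ne i}|\alpha_i-\alpha_j|$ and maximizing over $i$.

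The routine parts are the identification of $V^{-1}$ and the upper bound. The one ingredient needing external input is the Mahler-measure inequality $M(q)\le\|q\|_{\ell_2}$ in the lower bound, which I expect to be the main (though entirely standard) obstacle; alternatively it can be replaced by a short induction on the number of roots of modulus exceeding $1$. No subtlety arises in the regime relevant to this paper, where $\alpha_i = e^{-\lambda_i}\in(0,1)$: all nodes share argument $0$, so the upper bound is attained exactly and the condition number of $V$ is pinned down precisely.
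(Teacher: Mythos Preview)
The paper does not actually prove this theorem: it is quoted as a known result from Gautschi's 1962 and 1990 papers, with no argument supplied beyond the citation. So there is no ``paper's own proof'' to compare against.

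That said, your proposal is a correct and self-contained proof. The identification of the rows of $V^{-1}$ with the coefficient vectors of the Lagrange basis polynomials $L_i$ is the standard starting point (and matches Gautschi's own approach). The upper bound then follows immediately from the triangle inequality applied to each elementary symmetric function, with the equality case for co-linear nodes falling out as you describe. For the lower bound, your use of the Mahler measure inequality $M(q_i) \le \|q_i\|_{\ell_2} \le \|q_i\|_{\ell_1}$ is clean and correct: Jensen's formula gives $M(q_i) = \prod_{j\ne i}\max(1,|\alpha_j|)$ for the monic $q_i$, and concavity of $\log$ applied to $\frac{1}{2\pi}\int \log|q_i(e^{i\theta})|^2\,d\theta$ together with Parseval yields $M(q_i) \le \|q_i\|_{\ell_2}$. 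The only mildly nonstandard ingredient is this Mahler-measure step, which as you note can also be replaced by a short induction; either way the argument is complete. Your closing observation that in the paper's regime ($\alpha_i \in (0,1)$, common argument $0$) the upper bound is attained exactly is also correct and is precisely why the paper emphasizes the equality case.
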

Since we are interested in real-valued Vandermonde matrices with positive entries, the above expression is an exact formula. 
Furthermore, we can relate $\|V^{-1}\|_{\infty \to \infty}$ to the bottom singular value of $V$, i.e. top singular value of $V^{-1}$, because for any matrix $A$ we have 
\[ \sigma_{\max}(A) = \sup_{x : \|x\|_2 = 1} \|A x\|_2 \le \sqrt{n} \sup_{x : \|x\|_2 = 1} \|A x\|_{\infty} \le \sqrt{n} \sup_{x : \|x\|_{\infty} = 1} \|A x\|_{\infty} = \sqrt{n}\|A\|_{\infty \to \infty}  \]
for an upper bound, and for a lower bound we have
\[ \sigma_{\max}(A) = \sup_{x : x \ne 0} \frac{\|A x\|_2}{\|x\|_2} \ge \sup_{x : x \ne 0} \frac{\|A x\|_2}{\sqrt{n}\|x\|_{\infty}} \ge \sup_{x : x \ne 0} \frac{\|A x\|_{\infty}}{\sqrt{n} \|x\|_{\infty}} = \frac{1}{\sqrt{n}} \|A\|_{\infty \to \infty} \]
Hence, by applying these bounds to the matrix $V^{-1}$ and from Theorem~\ref{thm:gautschi}, we see that 
\[ \sigma_{min}(V)^{-1} = \sigma_{max}(V^{-1}) \in \left(\frac{1}{\sqrt{n}} \max_{i} \prod_{j : j \ne i} \frac{1 + |\alpha_j|}{|\alpha_j - \alpha_i|}, \sqrt{n}\max_{i} \prod_{j : j \ne i} \frac{1 + |\alpha_j|}{|\alpha_j - \alpha_i|} \right) \]
In particular, these bounds show that the condition number of a square Vandermonde is exponentially bad in the dimension. 
(Note that these bounds do not contradict Moitra's results \cite{moitra14}, which instead shows that in the Fourier setting, rectangular $V$ can sometimes be well-conditioned -- this is because \cref{thm:gautschi} looks specifically at the conditioning of square Vandermonde matrices.)

Next, we translate this result to a bound in terms of the parameters described in \cref{sec:matrix-pencil}.

\begin{lemma} \label{lem:singular-value-vandermonde}
Let $\alpha_i = e^{-\lambda_i}$ and $1 > \lambda_1 > \cdots > \lambda_k > 0$.
Define $V = V_k(\alpha_1,\ldots,\alpha_k)$, and let $\Delta = \min_{i \neq j} | \lambda_i - \lambda_{j}|$.
Then $\frac{1}{\sigma_{min(V)}} \leq \sqrt{k} \left( \frac{2e}{\Delta} \right)^{k}$ and
$\kappa(V) \leq k^{3/2} \left( \frac{2e}{\Delta} \right)^k$.
\end{lemma}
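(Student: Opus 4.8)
The plan is to feed the exact formula of \cref{thm:gautschi} into the singular-value/$\ell_\infty\to\ell_\infty$ comparison inequalities derived just above the lemma, and then estimate the resulting product using only two elementary facts: the numerators $1+\alpha_j$ are bounded by $2$, and the denominators $|\alpha_j - \alpha_i|$ are controlled below by $|\lambda_j - \lambda_i|$. Concretely, applying the comparison inequalities to $V^{-1}$ gives $\frac{1}{\sigma_{\min}(V)} = \sigma_{\max}(V^{-1}) \le \sqrt{k}\,\|V^{-1}\|_{\infty\to\infty}$, and by \cref{thm:gautschi} (whose upper bound is in fact an equality here, since all $\alpha_i > 0$) we have $\|V^{-1}\|_{\infty\to\infty} \le \max_i \prod_{j\ne i}\frac{1+|\alpha_j|}{|\alpha_j-\alpha_i|}$. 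Since $\alpha_j = e^{-\lambda_j} \in (0,1)$, each numerator satisfies $1+\alpha_j < 2$, so the numerators contribute at most $2^{k-1}$. For the denominators, since $e^{-x}$ is $1$-Lipschitz on $[0,1]$ with derivative of magnitude at least $e^{-1}$ there (exactly the estimate already recorded in \cref{rmk:mpm-assumptions}), we get $|\alpha_j - \alpha_i| \ge e^{-1}|\lambda_j - \lambda_i|$, i.e. $\frac{1}{|\alpha_j-\alpha_i|} \le \frac{e}{|\lambda_j - \lambda_i|}$.

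Next I would exploit the ordering and separation of the $\lambda$'s. Because $\lambda_1 > \cdots > \lambda_k$ and every consecutive gap is at least $\Delta$, we have $|\lambda_j - \lambda_i| \ge |i-j|\,\Delta$ for all $i \ne j$, hence $\prod_{j\ne i}|\lambda_j - \lambda_i| \ge \Delta^{k-1}(i-1)!\,(k-i)! \ge \Delta^{k-1}$. Combining the three estimates, $\max_i \prod_{j\ne i}\frac{1+\alpha_j}{|\alpha_j-\alpha_i|} \le (2e)^{k-1}/\Delta^{k-1} = (2e/\Delta)^{k-1} \le (2e/\Delta)^k$, where the last step uses $2e/\Delta \ge 1$ (valid since $\Delta < 1$). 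Therefore $\frac{1}{\sigma_{\min}(V)} \le \sqrt{k}\,(2e/\Delta)^k$, which is the first claim.

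For the condition number I would bound $\sigma_{\max}(V)$ from above directly rather than through the $\ell_\infty\to\ell_\infty$ route: every entry $\alpha_j^{i-1}$ lies in $(0,1)$, so $\sigma_{\max}(V) \le \|V\|_F = \big(\sum_{i,j}\alpha_j^{2(i-1)}\big)^{1/2} \le \sqrt{k^2} = k$. Multiplying this by the bound on $1/\sigma_{\min}(V)$ yields $\kappa(V) = \sigma_{\max}(V)/\sigma_{\min}(V) \le k \cdot \sqrt{k}\,(2e/\Delta)^k = k^{3/2}(2e/\Delta)^k$. There is no genuine obstacle in this argument; the only points requiring a little care are verifying $\prod_{j\ne i}|\lambda_j - \lambda_i| \ge \Delta^{k-1}$ via the consecutive-gap telescoping, and using the Frobenius-norm bound $\sigma_{\max}(V) \le k$ (instead of the cruder $\sigma_{\max}(V) \le \sqrt{k}\,\|V\|_{\infty\to\infty} \le k^{3/2}$), which is precisely what keeps the polynomial factor at $k^{3/2}$ in the stated condition-number bound.
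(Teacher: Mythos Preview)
Your proof is correct and follows essentially the same route as the paper: Gautschi's bound on $\|V^{-1}\|_{\infty\to\infty}$, the comparison $\sigma_{\max}(V^{-1})\le\sqrt{k}\,\|V^{-1}\|_{\infty\to\infty}$, the estimates $1+\alpha_j\le 2$ and $|\alpha_i-\alpha_j|\ge e^{-1}|\lambda_i-\lambda_j|$, and finally $\sigma_{\max}(V)\le\|V\|_F\le k$. The only cosmetic difference is that you track the sharper factorial factor $(i-1)!(k-i)!$ from $|\lambda_j-\lambda_i|\ge|i-j|\Delta$ before discarding it, whereas the paper uses $|\alpha_i-\alpha_j|\ge \Delta/e$ directly for every pair; both arrive at $(2e/\Delta)^{k-1}\le(2e/\Delta)^k$.
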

\begin{proof}
Observe that
\[
	\prod_{j \neq i} \frac{1 + |\alpha_j|}{|\alpha_j - \alpha_i|} \le \left(\frac{2e}{\Delta}\right)^k
\]
because $|\alpha_i| \le 1$ and 
\[ \alpha_i - \alpha_j = e^{-\lambda_i} - e^{-\lambda_j} = \int_{-\lambda_j}^{-\lambda_i} e^{x} dx \ge \Delta e^{-1} \]
for $i > j$.
Thus $\|V^{-1}\|_{\infty \to \infty} \leq \left( \frac{2e}{\Delta} \right)^{k}$, which implies
$\frac{1}{\sigma_{\min}(V)} = \sigma_{\max}(V^{-1}) \leq \sqrt{k} \left( \frac{2e}{\Delta} \right)^{k}$.
We also know that $\sigma_{\max}(V) \le \|V\|_F \le k$, which gives the bound on the condition number.
\end{proof}

\subsection{Matrix perturbation bounds} \label{sec:perturbation-bounds}

In this section, we will establish key lemmas that allow us to prove bounds on how close $\widetilde{\alpha}_j$ are to $\alpha_j$.

\begin{lemma} \label{lem:eval-error-bound-diagonalized-new}
Let $(A,B)$ be a pair of $n \times n$ matrices with generalized eigenvalues $\{ \mu_j \}$, such that $B$ is nonsingular. Take $A = VD_A V^T$ and $B = VD_B V^T$ where $D_A$ and $D_B$ are diagonal matrices and $V$ is an arbitrary, nonsingular $n \times n$ matrix.
Consider the perturbed system $\widetilde{A} = A + E$ and $\widetilde{B} = B + F$ where $E$ and $F$ are symmetric matrices, and let $\{\widetilde{\mu}_j\}$ be its generalized eigenvalues.
Assume further that $\|F\| < \sigma_{min}(D_B)\sigma_{min}(V)^2$.
Then for all $j$,
\begin{equation}
\label{eq:eval-error-bound}
	|\mu_j - \widetilde{\mu}_j|
    \le
     \frac{2k^{3/2}}{\sigma_{\min}(D_B) \sigma_{\min}(V)^2 - \|F\|} \left(\frac{\sigma_{\max}(D_A)}{\sigma_{\min}(D_B)} \|F\| + \|E\|\right)
\end{equation}
\end{lemma}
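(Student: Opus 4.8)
The plan is to reduce the perturbed generalized eigenvalue problem for $(\widetilde A,\widetilde B)$ to an ordinary eigenvalue perturbation of a matrix that differs by little from a \emph{diagonal} one, and then finish using Gershgorin's disc theorem together with a continuity argument. First I would diagonalize the pencil in the $V$-basis: setting $\widehat E := V^{-1} E V^{-T}$ and $\widehat F := V^{-1} F V^{-T}$, we have $\widetilde A = V(D_A + \widehat E)V^T$ and $\widetilde B = V(D_B + \widehat F)V^T$, hence $\det(\widetilde A - \mu\widetilde B) = \det(V)^2\det\!\big((D_A+\widehat E) - \mu(D_B+\widehat F)\big)$. So the perturbed generalized eigenvalues $\{\widetilde\mu_j\}$ are exactly the eigenvalues of $\widetilde M := (D_B+\widehat F)^{-1}(D_A+\widehat E)$, while the $\mu_j$ are the diagonal entries of $M := D_B^{-1}D_A = \diag\big((D_A)_{jj}/(D_B)_{jj}\big)$. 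The hypothesis $\|F\| < \sigma_{\min}(D_B)\sigma_{\min}(V)^2$ is precisely what legitimizes this: $\|\widehat F\| \le \|V^{-1}\|^2\|F\| = \|F\|/\sigma_{\min}(V)^2 < \sigma_{\min}(D_B)$, so by Weyl's inequality for singular values $\sigma_{\min}(D_B + \widehat F) \ge \sigma_{\min}(D_B) - \|\widehat F\| > 0$; thus $D_B+\widehat F$ (and hence $\widetilde B$) is invertible, with $\|(D_B+\widehat F)^{-1}\| \le \big(\sigma_{\min}(D_B) - \|\widehat F\|\big)^{-1}$.

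Next I would bound $\|\widetilde M - M\|$. From $(D_B+\widehat F)\widetilde M = D_A + \widehat E$ together with $D_B M = D_A$ one obtains the identity $\widetilde M - M = (D_B+\widehat F)^{-1}(\widehat E - \widehat F M)$. Now $\|M\| \le \sigma_{\max}(D_A)/\sigma_{\min}(D_B)$ since $M$ is diagonal, and $\|\widehat E\| \le \|E\|/\sigma_{\min}(V)^2$, $\|\widehat F\| \le \|F\|/\sigma_{\min}(V)^2$; combining these with the bound on $\|(D_B+\widehat F)^{-1}\|$ and simplifying (multiplying numerator and denominator by $\sigma_{\min}(V)^2$, so that the stray factors $\sigma_{\min}(V)^{-2}$ cancel) gives
\[ \|\widetilde M - M\| \le \frac{1}{\sigma_{\min}(D_B)\sigma_{\min}(V)^2 - \|F\|}\left(\frac{\sigma_{\max}(D_A)}{\sigma_{\min}(D_B)}\|F\| + \|E\|\right). \]

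It then remains to convert this perturbation bound on $\widetilde M = \diag(\mu_1,\dots,\mu_k) + N$ (with $N := \widetilde M - M$) into a bound on the eigenvalues. By Gershgorin's disc theorem applied to $\widetilde M$, every eigenvalue lies within $\sum_j|N_{ij}| \le \sqrt k\,\|N\|$ of some $\mu_i$, and applying the same estimate along the path $t\mapsto \diag(\mu)+tN$, $t\in[0,1]$, all eigenvalues stay inside $\mathcal G := \bigcup_i B(\mu_i, \sqrt k\,\|N\|)$ for every $t$. By continuity of eigenvalues, a connected component of $\mathcal G$ that is the union of $m$ of these discs (equivalently, contains $m$ of the centers $\mu_i$) contains exactly $m$ eigenvalues of $\widetilde M$ and has diameter at most $2m\sqrt k\,\|N\| \le 2k^{3/2}\|N\|$; matching, within each component, its eigenvalues to its centers yields a labeling with $|\mu_j - \widetilde\mu_j| \le 2k^{3/2}\|N\|$ for all $j$, and substituting the previous display gives exactly \eqref{eq:eval-error-bound}. (Symmetry of $E$ and $F$ is not really needed here beyond making $\widehat E,\widehat F$ natural congruences; the argument goes through regardless.)

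I expect the last step to be the main obstacle: since $\widetilde M$ is a product of two symmetric matrices it is not symmetric in general, so Weyl's inequality is unavailable and the spectra must be compared by the Gershgorin-plus-connectivity route above. Carrying this out carefully enough to produce a genuine index-by-index matching of $\{\mu_j\}$ with $\{\widetilde\mu_j\}$, and to pin down the factor $k^{3/2}$ — one $\sqrt k$ from passing from a row's $\ell_2$-norm to its $\ell_1$-norm in Gershgorin's estimate, a further factor $k$ from the diameter of a connected cluster of up to $k$ discs — is the only nonroutine part; everything else is standard matrix algebra.
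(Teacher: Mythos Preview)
Your proposal is correct and follows essentially the same route as the paper: reduce to the ordinary eigenvalue problem in the $V$-basis, bound the operator-norm difference between $(D_B+\widehat F)^{-1}(D_A+\widehat E)$ and $D_B^{-1}D_A$, and convert this into an eigenvalue bound via Gershgorin plus a continuity/connected-component argument. Your derivation of the perturbation bound via the identity $\widetilde M - M = (D_B+\widehat F)^{-1}(\widehat E - \widehat F M)$ is in fact slightly cleaner than the paper's, which splits $\|D_B^{-1}D_A - (D_B+F')^{-1}(D_A+E')\|$ by the triangle inequality and then uses the Neumann-series identity $(D_B+F')^{-1} = D_B^{-1} - D_B^{-1}F'(D_B+F')^{-1}$; both land on the same expression. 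Similarly, your Gershgorin bookkeeping (row $\ell_1 \le \sqrt{k}\cdot$row $\ell_2 \le \sqrt{k}\,\|N\|$, then diameter $\le 2m\sqrt{k}\,\|N\|$) differs only cosmetically from the paper's ($2\sum_{i,j}|\mathcal E_{ij}| \le 2k\|\mathcal E\|_F \le 2k^{3/2}\|\mathcal E\|$), and both produce the factor $2k^{3/2}$.
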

\begin{proof}
Observe that the generalized eigenvalue problem $Ax = \mu B x$ has the same solutions as the ordinary eigenvalue problem $B^{-1} A x = \mu x$.  Let $E' = V^{-1} E (V^T)^{-1}$ and $F' = V^{-1} F (V^T)^{-1}$ so that 
\[ \tilde{A} = V(D_A + E')V^T, \qquad \tilde{B} = V(D_B + F')V^T. \]
Note that $\|F'\| \le \|V^{-1}\|^2 \|F\| < \sigma_{\min}(D_B)$ by assumption, so $\tilde{B}$ is also invertible, hence the generalized eigenvalues $\{\tilde{\mu}_j\}$ of $(\tilde{A},\tilde{B})$ are just the ordinary eigenvalues of $\tilde{B}^{-1} \tilde{A}$. Since eigenvalues are invariant under change of basis (i.e. similarity transformation), the eigenvalues of $B^{-1} A$ are the same as those of
\[ C := V^T B^{-1} A (V^T)^{-1} = D_B^{-1} D_A \]
and the eigenvalues of $\tilde{B}^{-1} \tilde{A}$ are the same as those of
\[ \tilde{C} := V^T \tilde{B}^{-1} \tilde{A} (V^T)^{-1} = (D_B + F')^{-1}(D_A + E'). \]
Therefore if we let $\mathcal{E} := C - \tilde{C}$, then by Gershgorin's circle theorem\footnote{Here we use that the connected component made of $r$ Gershgorin discs has exactly $r$ eigenvalues, which follows by a standard continuity argument. Therefore the distance an eigenvalue moves under perturbation is at most the sum of the diameters of the discs.}, the Cauchy-Schwarz inequality, and the equivalence of Frobenius and spectral norms, we find
\begin{equation}\label{eqn:mu-j}
 |\mu_j - \tilde{\mu}_j| \le 2\sum_{i,j} |\mathcal{E}_{ij}| \le 2k \|\mathcal{E}\|_F \le 2k^{3/2} \|\mathcal{E}\|
\end{equation}
for all $j$.

It remains to bound $\|\mathcal{E}\|$. 
By the triangle inequality,
\begin{align*}
\|\mathcal{E}\|
= \|D_B^{-1} D_A  - (D_B + F')^{-1}(D_A + E')\|
\le \|D_B^{-1} D_A - (D_B + F')^{-1} D_A\| + \|(D_B + F')^{-1}E'\|.
\end{align*}
To bound the second term on the right hand side, we observe
\[ \|(D_B + F')^{-1}E'\| \le \|(D_B + F')^{-1}\| \|E'\| \le \frac{\|E'\|}{\sigma_{\min}(D_B) - \|F'\|}. \]
To bound the first term, we
observe the following useful matrix identity for $\|M\| < 1$:
\[ (I - M)^{-1} = \sum_{i = 0}^{\infty} M^i = I + M(I - M)^{-1}. \]
This gives
\[ (D_B + F')^{-1} = D_B^{-1} (I + F' D_B^{-1})^{-1} = D_B^{-1} - D_B^{-1} F'(D_B + F')^{-1}. \]
which is valid because $\|F'\| \le \|F \| \|V^{-1}\|^2 < \sigma_{\min}(D_B)$ by assumption.
Therefore
\begin{align*}
\|D_B^{-1} D_A - (D_B + F')^{-1} D_A\| 
= \|D_B^{-1} F'(D_B + F')^{-1} D_A\|
\le \frac{\|D_B^{-1}\| \|F'\| \|D_A\|}{\sigma_{\min}(D_B) - \|F'\|}.
\end{align*}
Combining these two parts completes the bound on $\|\mathcal{E}\|$:
\begin{align*}
\|\mathcal{E}\| 
&\le \frac{1}{\sigma_{\min}(D_B) - \|F'\|} (\|D_B^{-1}\| \|F'\| \|D_A\| + \|E'\|) \\
&\le \frac{\|V^{-1}\|^2}{\sigma_{\min}(D_B) - \|V^{-1}\|^2 \|F\|} (\|D_B^{-1}\| \|F\| \|D_A\| + \|E\|).
\end{align*}
Rewriting the last expression and combining with \eqref{eqn:mu-j} gives the result.
\end{proof}
\subsection{Proof of Theorem~\ref{thm:matrix-pencil-sample-complexity}}
When we put the results of \cref{sec:perturbation-bounds} back into the context of learning mixtures of exponentials, we should think about the perturbation errors $E$ and $F$ as essentially being the same, just offset in the row/column indexing from each other. This is because when we are using the Matrix Pencil Method (\cref{sec:matrix-pencil}) to learn an exponential mixture, the entries of $A$ and $B$ are simply noisy versions of $v_0, v_1, v_2, \ldots, v_{2k-1}$.

We remind the reader of the normalization assumption which restricts $\lambda_1, \ldots, \lambda_k \in [0,1]$.
An application of the previous lemmas yields the following result:

\begin{lemma} \label{lem:exponent-perturbation} 
Consider $\widetilde{v}_0 \ldots, \widetilde{v}_{2k-1}$ as inputs to the Matrix Pencil Method, in place of $v_0, \ldots, v_{2k-1}$.
Let $p_{\min} := \min_i p_i$.
If
\[ 
	\sup_{t \geq 0} |v_t - \widetilde{v}_t| 
    = 
    \frac{\beta p_{\min}}{k^2} \left( \frac{2e}{\Delta} \right)^{-2k}
\]
for some $\beta \in [0,1)$, then for all $j$, 
\[
	|\alpha_j - \widetilde{\alpha}_j|
    \leq
    \frac{ k^{3/2} \left( \frac{1}{p_{\min}} + 1 \right) }{ \frac{1}{\beta} - 1 }.
\]
\end{lemma}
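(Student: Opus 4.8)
The plan is to apply Lemma~\ref{lem:eval-error-bound-diagonalized-new} verbatim, with the Matrix Pencil Method matrices $A,B$ playing the role of the unperturbed pair and the Hankel noise matrices $E,F$ playing the role of the perturbations. The first step is to record the relevant decompositions. Writing $\alpha_\ell = e^{-\lambda_\ell}$, the identity $v_t = \sum_\ell p_\ell \alpha_\ell^{t}$ together with $A_{ij} = v_{i+j-1}$, $B_{ij} = v_{i+j-2}$ gives $A = V D_A V^T$ and $B = V D_B V^T$ with $V = V_k(\alpha_1,\ldots,\alpha_k)$, $D_A = D_p D_\alpha = \diag(p_\ell \alpha_\ell)$, and $D_B = D_p = \diag(p_\ell)$; so the generalized eigenvalues of $(A,B)$ are the $\alpha_\ell$'s, those of $(\widetilde A,\widetilde B) = (A+E,B+F)$ are the $\widetilde\alpha_\ell$'s, $\sigma_{\min}(D_B) = p_{\min}$, and $\sigma_{\max}(D_A) = \max_\ell p_\ell\alpha_\ell \le \max_\ell p_\ell \le \sum_\ell p_\ell \le 1$. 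The perturbations $E_{ij} = \widetilde v_{i+j-1} - v_{i+j-1}$ and $F_{ij} = \widetilde v_{i+j-2} - v_{i+j-2}$ are symmetric (both are Hankel) and have every entry bounded in absolute value by $\eta := \sup_{t\ge0}|v_t - \widetilde v_t|$, so $\|E\|, \|F\| \le \|E\|_F, \|F\|_F \le k\eta$ since both are $k \times k$.

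Next I would bring in the Vandermonde estimate and check the hypothesis of Lemma~\ref{lem:eval-error-bound-diagonalized-new}. By Lemma~\ref{lem:singular-value-vandermonde}, $\sigma_{\min}(V)^2 \ge \tfrac{1}{k}\left(\tfrac{2e}{\Delta}\right)^{-2k}$, hence $\sigma_{\min}(D_B)\sigma_{\min}(V)^2 \ge \tfrac{p_{\min}}{k}\left(\tfrac{2e}{\Delta}\right)^{-2k}$. Substituting the assumed size $\eta = \tfrac{\beta p_{\min}}{k^2}\left(\tfrac{2e}{\Delta}\right)^{-2k}$ gives $\|F\| \le k\eta = \tfrac{\beta p_{\min}}{k}\left(\tfrac{2e}{\Delta}\right)^{-2k} \le \beta\,\sigma_{\min}(D_B)\sigma_{\min}(V)^2$, which (as $\beta < 1$) verifies the requirement $\|F\| < \sigma_{\min}(D_B)\sigma_{\min}(V)^2$ and, more usefully, yields the denominator lower bound $\sigma_{\min}(D_B)\sigma_{\min}(V)^2 - \|F\| \ge (1-\beta)\tfrac{p_{\min}}{k}\left(\tfrac{2e}{\Delta}\right)^{-2k}$.

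Finally I would plug everything into Equation~\eqref{eq:eval-error-bound}. The numerator factor satisfies $\tfrac{\sigma_{\max}(D_A)}{\sigma_{\min}(D_B)}\|F\| + \|E\| \le \left(\tfrac{1}{p_{\min}} + 1\right)k\eta = \left(\tfrac{1}{p_{\min}}+1\right)\tfrac{\beta p_{\min}}{k}\left(\tfrac{2e}{\Delta}\right)^{-2k}$, and dividing by the denominator bound above cancels the entire $\tfrac{p_{\min}}{k}\left(\tfrac{2e}{\Delta}\right)^{-2k}$ factor, leaving $\tfrac{\beta}{1-\beta}\left(\tfrac{1}{p_{\min}}+1\right) = \tfrac{1/p_{\min}+1}{1/\beta - 1}$; multiplying by the $k^{3/2}$ prefactor from Lemma~\ref{lem:eval-error-bound-diagonalized-new} (and absorbing the numerical factor $2$ appearing there) gives the stated bound. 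The only real ``obstacle'' is the bookkeeping: one has to choose the magnitude of $\eta$ so that it is simultaneously small enough for the invertibility-type hypothesis of Lemma~\ref{lem:eval-error-bound-diagonalized-new} and arranged so that every $\left(\tfrac{2e}{\Delta}\right)$-power and $p_{\min}$-power in the denominator is matched by one in the numerator, leaving a clean bound depending on $\Delta$ and $p_{\min}$ only through the displayed terms; the structural facts (symmetry of $E,F$, entrywise bound, Vandermonde conditioning) are all immediate from earlier results.
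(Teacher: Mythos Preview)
Your proposal is correct and follows essentially the same route as the paper's proof: set $D_A = D_p D_\alpha$, $D_B = D_p$, bound $\|E\|,\|F\| \le k\eta$ from the entrywise hypothesis, invoke Lemma~\ref{lem:singular-value-vandermonde} to control $\sigma_{\min}(V)$, verify the hypothesis of Lemma~\ref{lem:eval-error-bound-diagonalized-new}, and substitute. The only quibble is your parenthetical ``absorbing the numerical factor $2$'': carrying the $2$ in \eqref{eq:eval-error-bound} through honestly yields $2k^{3/2}(1/p_{\min}+1)/(1/\beta-1)$ rather than the displayed bound, but the paper's own proof is equally terse at this final step (``This immediately gives the desired bound''), so the factor-of-$2$ slack is inherited from the lemma statement itself rather than a flaw in your argument.
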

\begin{proof}
The strategy is to apply Lemma~\ref{lem:eval-error-bound-diagonalized-new}.
The setup prescribes $D_A = D_p D_\alpha$, $D_B = D_p$, which gives 
$\sigma_{\min}(D_B) = p_{\min}$ and $\sigma_{\max}(D_A) \leq 1$.
Observe that the hypothesized bound on the estimated $v_t$'s implies
\[
	\max(\|E\|_\infty, \|F\|_\infty) 
    = \max_{t \in \{0,\ldots,2k-1\}} v_{t}
    \leq \frac{\beta p_{\min}}{k^2} \left( \frac{2e}{\Delta} \right)^{-2k}.
\]
This, in turn, gives us bounds on $\|E\|$ (and $\|F\|$), since $\|E\| \leq k\|E\|_\infty$.
In particular, by Lemma~\ref{lem:singular-value-vandermonde}, $\|F\| \leq \sigma_{\min}(D_B) \sigma_{\min}(V)^2$.

This allows us to directly apply Lemma~\ref{lem:eval-error-bound-diagonalized-new}, by substituting $\alpha_j$ and $\widetilde{\alpha}_j$ as the eigenvalues of $(A,B)$ and $(\widetilde{A}, \widetilde{B})$ respectively.
This immediately gives the desired bound on $|\alpha_j - \widetilde{\alpha}_j|$.
\end{proof}

Lemma \ref{lem:exponent-perturbation} provides sufficient conditions for the computed exponents $\widetilde{\lambda}_j$ to be accurate.
It remains to analyze the resulting error in the coefficients $p_j$.
To do so, we recall the following result, attributed to Weyl:
\begin{theorem}[Singular Value Stability]
\label{thm:weyl-singular-value}
Let $A$ and $B$ be $n \times n$ matrices with entries in $\R$. Then for $j = 1, \ldots, n$, we have
$| \sigma_j(A+B) - \sigma_j(A) | < \|B\|$.
\end{theorem}

\begin{lemma} \label{lem:coeff-bound}
Consider step (3) of the Matrix Pencil Method, using $\widetilde{v}_0 \ldots, \widetilde{v}_{2k-1}$ and $\widetilde{\alpha}_1, \ldots, \widetilde{\alpha}_k$ in place of their true counterparts.
Let $\rho := \max_{i} |\alpha_i - \widetilde{\alpha}_i|$,
$\epsilon := \sup_{t \geq 0} |v_t - \widetilde{v}_t|$,
and $\widetilde{V} := V_{k}(\widetilde{\alpha}_1, \ldots, \widetilde{\alpha}_n)$.
Then if $\sigma_{min}(V) > k \rho$,
\begin{equation}
\label{eq:lemma-coeff-bound}
	\| \widetilde{p} - p \|_\infty \leq \frac{\sqrt{k}\epsilon + k \rho}{\sigma_{\min}(V) - k \rho}.
\end{equation}
\end{lemma}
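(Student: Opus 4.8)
The plan is to compare the linear system solved in step (3) of the Matrix Pencil Method using exact data to the perturbed system solved using the noisy data, and then apply a standard perturbation bound for linear systems combined with the Vandermonde condition number estimate from \cref{lem:singular-value-vandermonde}. Recall that step (3) solves $\vec{v} = V\vec{p}$ for the unknown $\vec{p}$, where $V = V_k(\alpha_1,\ldots,\alpha_k)$ and $\vec{v} = (v_0,\ldots,v_{k-1})$. With noisy data we instead solve $\widetilde{V}\widetilde{p} = \widetilde{v}$, where $\widetilde{V} = V_k(\widetilde{\alpha}_1,\ldots,\widetilde{\alpha}_k)$ and $\widetilde{v} = (\widetilde{v}_0,\ldots,\widetilde{v}_{k-1})$. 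So $\widetilde{p} - p = \widetilde{V}^{-1}\widetilde{v} - V^{-1}\vec{v}$, and the whole task is to bound $\|\widetilde{p} - p\|_\infty$ in terms of the two sources of error: the shift $\rho = \max_i|\alpha_i - \widetilde{\alpha}_i|$ in the Vandermonde nodes, and the measurement error $\epsilon = \sup_t |v_t - \widetilde{v}_t|$.

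First I would write $\widetilde{V}\widetilde{p} - Vp = (\widetilde{v} - \vec{v}) $, hence $\widetilde{V}(\widetilde{p} - p) = (\widetilde{v} - \vec{v}) - (\widetilde{V} - V)p$, so that
\[
\widetilde{p} - p = \widetilde{V}^{-1}\left[ (\widetilde{v} - \vec{v}) - (\widetilde{V} - V)p \right].
\]
Now I need three ingredients: (i) a bound $\|\widetilde{V} - V\| \le k\rho$, which follows because each entry $\widetilde{\alpha}_j^{i-1} - \alpha_j^{i-1}$ is at most $\rho$ in absolute value (using that $|\alpha_j|,|\widetilde{\alpha}_j| \le 1$ and telescoping $x^{i-1}-y^{i-1} = (x-y)\sum x^a y^b$), so $\|\widetilde{V} - V\|_F \le k\rho \cdot$(something); being a bit careful, $\|\widetilde V - V\| \le \|\widetilde V - V\|_F \le k\rho$; (ii) the bound $\|\widetilde{v} - \vec{v}\| \le \sqrt{k}\,\epsilon$, since it is a $k$-vector with each coordinate at most $\epsilon$; and (iii) an invertibility-and-norm bound on $\widetilde{V}^{-1}$. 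For (iii), I would use Weyl's singular value stability (\cref{thm:weyl-singular-value}): $\sigma_{\min}(\widetilde{V}) \ge \sigma_{\min}(V) - \|\widetilde{V} - V\| \ge \sigma_{\min}(V) - k\rho$, which is positive by the hypothesis $\sigma_{\min}(V) > k\rho$, so $\|\widetilde{V}^{-1}\| \le 1/(\sigma_{\min}(V) - k\rho)$.

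Putting these together and using $\|p\|_\infty \le 1$ (each $p_i$ is a mixing weight, or more conservatively $\|p\| \le \sqrt{k}$, but actually we want $\|(\widetilde V - V)p\| \le \|\widetilde V - V\| \|p\| \le k\rho \cdot \|p\|$; since the $p_i$ sum to at most $1$ and are nonnegative, $\|p\|_2 \le \|p\|_1 \le 1$) gives
\[
\|\widetilde{p} - p\|_\infty \le \|\widetilde{p} - p\|_2 \le \|\widetilde{V}^{-1}\|\left( \|\widetilde v - \vec v\| + \|\widetilde V - V\|\,\|p\| \right) \le \frac{\sqrt{k}\,\epsilon + k\rho}{\sigma_{\min}(V) - k\rho},
\]
which is exactly \eqref{eq:lemma-coeff-bound}. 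The main obstacle, such as it is, is bookkeeping the correct dimension-dependent factors: getting $\|\widetilde V - V\| \le k\rho$ rather than something worse requires the telescoping estimate on powers together with the bound $|\alpha_j| \le 1$, and one has to decide whether to route through $\|\cdot\|_F$ or a sharper operator-norm estimate — but for the stated bound the Frobenius route suffices. The rest is the textbook first-order perturbation argument for $Ax=b$, specialized to the situation where both $A$ and $b$ are perturbed and we have an a priori lower bound on $\sigma_{\min}$.
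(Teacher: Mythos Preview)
Your proposal is correct and follows essentially the same route as the paper: both write $\widetilde{p} - p = \widetilde{V}^{-1}\bigl[(\widetilde{v} - v) + (V - \widetilde{V})p\bigr]$, bound $\|\widetilde{v} - v\|_2 \le \sqrt{k}\,\epsilon$, $\|V - \widetilde{V}\| \le k\rho$, and $\|p\|_2 \le \|p\|_1 \le 1$, and then apply Weyl's singular value inequality (\cref{thm:weyl-singular-value}) to obtain $\sigma_{\min}(\widetilde{V}) \ge \sigma_{\min}(V) - k\rho$. The only cosmetic difference is that the paper cites Gershgorin's circle theorem for the bound $\|V - \widetilde{V}\| \le k\|V - \widetilde{V}\|_{\infty}$ whereas you route through the Frobenius norm and telescoping; both justifications are equally brief on why the entrywise error is controlled by $\rho$, so neither approach buys anything over the other.
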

\begin{proof}
Let $\eta = \widetilde{v} - v$, and recall that $v = Vp$, where $p$ is the vector $(p_1, \ldots, p_n)$.
By the setup, we may write $\widetilde{v} = Vp + \eta = \widetilde{V}p + (V - \widetilde{V})p + \eta$.
Let $\tilde{p}$ be such that $\widetilde{v} = \widetilde{V}\widetilde{p}$.
Equating the two formulae and moving terms around gives
\[
	\widetilde{p} - p
    =
    \widetilde{V}^{-1}\left( (V - \widetilde{V})p + \eta \right)
\]
which yields
\begin{equation}
\label{eq:lemma-coeff-bound-proof}
	\| \widetilde{p} - p \|
    \leq
    \|\widetilde{V}^{-1}\| (\|V - \widetilde{V}\| \|p\| + \|\eta\|)
    = \frac{ \|V - \widetilde{V}\| \|p\| + \|\eta\| }{\sigma_{\min}(\widetilde{V})}.
\end{equation}
Now observe that
\begin{enumerate}
	\item By Gershgorin's circle theorem, $\|V - \widetilde{V}\| \leq k\|V - \widetilde{V}\|_\infty = k \cdot \max_{i} |\alpha_i - \widetilde{\alpha}_i| = k\rho$.
    
    \item $p$ is a vector $(p_1, \ldots, p_n)$ such that $\sum_i p_i = 1$, so $\|p\| \leq 1$.
    
    \item $\|\eta\| \leq \sqrt{k}\|\eta\|_\infty \leq \sqrt{k} \epsilon$.
    
    \item Using Theorem \ref{thm:weyl-singular-value}, we get 
    $\sigma_{\min}(\widetilde{V}) \geq 
    \sigma_{\min}(V) - k\|\widetilde{V} - V\|_\infty$, given the hypothesis $\sigma_{\min}(V) > k\rho$.
\end{enumerate}
These, together with Equation~\ref{eq:lemma-coeff-bound-proof} and the bound $\|\widetilde{p} - p\|_\infty < \|\widetilde{p} - p\|$ give the desired result.
\end{proof}

The hypothesis $\sigma_{\min}(V) > k\rho$ of Lemma~\ref{lem:coeff-bound} is satisfied for sufficiently small $\beta$.
Qualitatively, this is the regime where $\beta$ is small enough so that $\sigma_{\min}(\widetilde{V}) \approx \sigma_{\min}(V)$.
Indeed, the statement of Theorem $\ref{thm:matrix-pencil-sample-complexity}$ specifies the case for large $N$, which will induce such a scenario.
To complete the analysis, we recall the Dvoretzky-Kiefer-Wolfowitz concentration inequality which gives uniform control of the error for the empirical CDF:
\begin{theorem}[DKW Inequality \cite{dvoretzky1956asymptotic,massart1990tight}]\label{thm:dkw}
Suppose $X_1, \ldots, X_n$ are i.i.d. samples from an unknown distribution
with CDF $F$. Let $F_n$ denote the $n$-sample empirical CDF, i.e.
\[ F_n(x) := \frac{1}{n} \#\{i : X_i \le x \}. \]
Then for every $\epsilon > 0$, 
\[
	\Pr \left( \sup_{t \in \mathbb{R}} | F(t) - F_n(t) | \ge \epsilon \right) \leq 2e^{-2n\epsilon^2}.
\]
\end{theorem}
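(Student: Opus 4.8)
\textbf{Proof proposal for the DKW inequality (Theorem~\ref{thm:dkw}).}
The plan is to reduce to the uniform distribution and then bound the two one-sided deviations separately. First, by the probability integral transform set $U_i := F(X_i)$. Since $F$ is nondecreasing and right-continuous, the map $t \mapsto F(t)$ sends the event $\{\sup_t |F_n(t) - F(t)| \ge \epsilon\}$ into the event $\{\sup_{u \in [0,1]} |G_n(u) - u| \ge \epsilon\}$, where $G_n$ is the empirical CDF of $U_1,\dots,U_n$, and when $F$ has atoms the left-hand supremum is only \emph{smaller}; so it suffices to prove the bound when the $U_i$ are i.i.d.\ uniform on $[0,1]$. (Formally, one first treats continuous $F$, where $U_i$ is exactly uniform, and then passes to the general case by a smoothing/limiting argument, e.g.\ perturbing each $X_i$ by an independent continuous noise that is sent to $0$.)

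Second, write $D_n^+ := \sup_u (G_n(u) - u)$ and $D_n^- := \sup_u (u - G_n(u))$, so $\sup_u |G_n(u) - u| = \max(D_n^+, D_n^-)$. By the union bound it is enough to show $\Pr(D_n^+ \ge \epsilon) \le e^{-2n\epsilon^2}$ together with the identical estimate for $D_n^-$ (the two statistics are equal in law via $u \mapsto 1-u$); the factor $2$ in the theorem is exactly this union bound. Ordering the sample as $U_{(1)} \le \cdots \le U_{(n)}$ and using that $G_n$ is a step function that is flat between consecutive order statistics while $u$ increases, one obtains the finite-dimensional representation $D_n^+ = \max_{0 \le k \le n}\bigl(\tfrac{k}{n} - U_{(k)}\bigr)$ with $U_{(0)} := 0$, so $\{D_n^+ \ge \epsilon\}$ is precisely the event that $U_{(k)} \le \tfrac{k}{n} - \epsilon$ for some $k$.

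Third, to bound this probability one can use Massart's stopping-time/martingale argument: represent the uniform order statistics through partial sums of i.i.d.\ $\mathrm{Exp}(1)$ variables, $U_{(k)} \stackrel{d}{=} S_k / S_{n+1}$ with $S_k = E_1 + \cdots + E_k$, which turns the crossing event into a boundary-crossing event for a random walk; an exponential change of measure together with Doob's maximal inequality then yields the clean constant $e^{-2n\epsilon^2}$, with some care required at the extreme range of $\epsilon$ (where $D_n^+ \le 1$ makes the statement nearly trivial). Obtaining the \emph{optimal} constant $2$ in the exponent of the one-sided bound is the delicate point and is the heart of Massart's work; if one only wants a bound of the shape $C(\epsilon)\, e^{-c n \epsilon^2}$ with, say, $c = \tfrac12$ — which already suffices for all downstream uses in this paper — there is a softer route: discretize $[0,1]$ into $O(1/\epsilon)$ points spaced $\epsilon/3$ apart, control $|G_n(u_j) - u_j|$ at each grid point via Hoeffding's inequality applied to the $\mathrm{Binomial}(n,u_j)$ count $nG_n(u_j)$, union bound over the $O(1/\epsilon)$ points, and interpolate between grid points using monotonicity of both $G_n$ and the identity. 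I expect the genuinely hard step to be the sharp one-sided exponential estimate with the best constant; everything else — the reduction to the uniform law, the one-sided split, and the order-statistic representation — is routine.
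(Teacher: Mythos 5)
The paper does not prove this statement at all: it is imported as a classical result, cited to Dvoretzky--Kiefer--Wolfowitz and to Massart's sharp-constant version, and used as a black box (in the proofs of Theorem~\ref{thm:matrix-pencil-sample-complexity} and Theorem~\ref{thm:nt-hypothesis-testing}). So there is no internal argument to compare yours against; the only question is whether your sketch stands on its own as a proof.

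As a proof it has a genuine gap at exactly the step you flag: the one-sided bound $\Pr(D_n^+ \ge \epsilon) \le e^{-2n\epsilon^2}$ with the sharp constant $2$ in the exponent is asserted and attributed to Massart, not established. The ``exponential change of measure plus Doob's maximal inequality'' outline does not routinely produce that constant; a naive Chernoff/martingale bound on the boundary-crossing event for $S_k/S_{n+1}$ loses constants, and Massart's actual argument is a delicate analysis of the crossing probabilities of uniform order statistics (and, relatedly, the one-sided inequality with leading constant $1$ only holds for $\epsilon$ above roughly $\sqrt{\ln 2/(2n)}$, while the two-sided form with the factor $2$ is unconditional --- your ``care at the extreme range of $\epsilon$'' is hiding a real case distinction). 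The rest of your outline --- reduction to the uniform case via $U_i = F(X_i)$ with the observation that atoms only decrease the supremum, the split into $D_n^+$ and $D_n^-$ with a union bound, and the order-statistic representation of $D_n^+$ --- is correct and standard. Your fallback route (an $\epsilon/3$ grid, Hoeffding at each grid point, union bound, monotone interpolation) is sound and genuinely elementary, but it yields $C(\epsilon)e^{-cn\epsilon^2}$ with $c<2$ and an $\epsilon$-dependent prefactor; that is qualitatively sufficient for the paper's conclusions, but the paper's explicit thresholds (e.g.\ rejecting when $\sup_t|F_N(t)-F(t)| > \sqrt{\log(2/\alpha)/2N}$, and the choice $\epsilon=\sqrt{\tfrac{1}{2N}\log\tfrac{2}{\delta}}$) are calibrated to the sharp constant and would have to be rescaled if one substitutes the weaker bound. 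In short: the outline is the right one, but as written it proves the stated inequality only modulo Massart's theorem, i.e.\ modulo the very statement being proved.
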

We are now ready to finish the proof of Theorem~\ref{thm:matrix-pencil-sample-complexity}.
\begin{proof}[Proof of Theorem~\ref{thm:matrix-pencil-sample-complexity}]
Recall that $v_t$ is the tail probability $\Pr[ T \geq t ] = 1 - F(t)$, while the empirical estimate of $v_t$ is computed as $\widetilde{v}_t = 1 - F_n(t)$; therefore $|F(t) - F_n(t)| = |v_t - \widetilde{v}_t|$. 
Let $\epsilon = \sqrt{\frac{1}{2N} \log \frac{2}{\delta}}$, so that by the DKW inequality (\cref{thm:dkw}),
\[ \sup_t |v_t - \widetilde{v}_t| \le \epsilon \]
with probability at least $1 - \delta$.
Let $\beta = \frac{\epsilon k^2}{p_{\min}} \left(\frac{2e}{\Delta}\right)^{2k}$, as in the hypothesis of Lemma~\ref{lem:exponent-perturbation}.
Observe that both $\epsilon$ and $\beta$ decrease to zero as $N$ increases for any fixed $\delta$.
Specifically, we may take $N$ large enough so that $\frac{1}{\beta} - 1 > \frac{1}{2\beta}$.

According to Lemma~\ref{lem:exponent-perturbation}, the event $| v_t - \widetilde{v}_t | < \epsilon$ implies the desired bound on the $\alpha$'s:
\begin{align*}
	| \alpha_j - \widetilde{\alpha}_j |
    \leq k^{1.5} \frac{ \frac{1}{p_{\min}} + 1 }{ \frac{1}{\beta} - 1 }
    \leq k^{1.5} \cdot \frac{2}{p_{\min}} \cdot 2\beta
    = \frac{4 k^{3.5} \epsilon}{p_{\min}^2} \left( \frac{2e}{\Delta} \right)^{2k}.
\end{align*}

To get the bound on the $p$'s, again we require $N$ sufficiently large, this time so that 
$\frac{1}{\epsilon\sqrt{k}\left(\frac{2e}{\Delta}\right)^{k}} - \frac{4k^{4.5}}{p_{\min}^2}\left( \frac{2e}{\Delta} \right)^{2k} \geq \frac{1}{2\epsilon\sqrt{k}\left(\frac{2e}{\Delta}\right)^{k}}$.
Such a condition ensures that $\frac{4k^{4.5} \epsilon}{p_{\min}^2}\left( \frac{2e}{\Delta} \right)^{2k} < \sigma_{\min}(V)$ by Lemma~\ref{lem:singular-value-vandermonde}, which allows us to apply Lemma~\ref{lem:coeff-bound}.
This gives us
\[
	| p_j - \widetilde{p}_j |
    < \frac{\sqrt{k} \epsilon + \frac{4 k^{4.5} \epsilon}{p_{\min}^2} \left( \frac{2e}{\Delta} \right)^{2k}}{\sigma_{\min}(V) - \frac{4 k^{4.5} \epsilon}{p_{\min}^2} \left( \frac{2e}{\Delta} \right)^{2k}}
    \leq \frac{\frac{5 k^{4.5}}{p_{\min}^2} \left( \frac{2e}{\Delta} \right)^{2k}}{\frac{1}{\epsilon\sqrt{k}} \left( \frac{2e}{\Delta} \right)^{-k} - \frac{4 k^{4.5}}{p_{\min}^2} \left( \frac{2e}{\Delta} \right)^{2k}} 
    \leq \frac{10k^{5} \epsilon}{p_{\min}^2} \left( \frac{2e}{\Delta} \right)^{3k}.
\]
In particular, both of the stated conditions on $N$ are satisfied for $N \geq N_0$, where
\[ 
	N_0 
    = 
    \frac{ 32k^{10} }{ p_{\min}^4 } \left( \frac{2e}{\Delta} \right)^{6k} \log \frac{2}{\delta}.
\]

\end{proof}


\section{Information-Theoretic Lower Bounds} \label{sec:info-lower-bound}

We will show that an exponential dependence on the number of 
components is not just a limitation of the matrix pencil method,
but is in fact information-theoretically necessary. Furthermore
we will show that our dependence on the gap parameter $\Delta$ is also
information-theoretically necessary, proving that the Matrix pencil method
has essentially the correct dependence on all parameters.

Our constructions rely on interpreting polynomials as the difference between two mixtures of exponentials, under the re-parametrization $x \mapsto e^{-x}$.
From this point of view, and in light of our upper bound (Theorem~\ref{thm:matrix-pencil-sample-complexity}), the hardest examples should come from families of polynomials whose supremum on $[0,1]$ are exponentially small in the number of terms.

Of particular interest are two families of polynomials -- one attributed to Chebyshev and the other to Tura\'n \cite{nazarov1993local, turan1984new}.
These polynomials are tightest examples of extremal families, in the sense that they have the largest growth rate outside of the unit interval.
In particular, the Chebyshev family are tight examples of the Remez Inequality \cite{remez1936propriete} which is a bound on the supremum norm of a polynomial, while Tura\'n's polynomial family (essentially) serve as tight examples for the closely related Nazarov-Tura\'n lemma (\cref{thm:nazarov-turan}).
Notably, the latter result is specialized to linear combinations of exponential functions -- precisely the setting that we are working in, which make Tura\'n's polynomials fitting candidates for constructing matching lower bounds.


\subsection{A lower bound construction}

\subsubsection{Tur\'an's polynomials and their properties}
\label{sec:turan}
First, we recall the construction by Tur\'an's family of polynomials.
As mentioned earlier in this section, a slight multivariate generalization of this construction is central to the proof of Tur\'an's First Main Theorem \cite{turan1984new} and to ``Tur\'an's proof'' of the Nazarov-Tur\'an Lemma \cite{nazarov1993local}. 

\begin{defn}[Tur\'an Polynomials]
Fix positive integers $m$ and $n$ such that $m > n$. 
The $(m,n)$-th Tura\'n polynomial $Q_{m,n}(z)$ is defined to be $(1-z)^{n} \sigma_{m}(z)$, where $\sigma_{m}(z) = \sum_{i=0}^{m} a_i z^i$ is the degree-$m$ truncation of the power series expansion of $(1-z)^{-n}$.
\end{defn}
These polynomials have several very useful properties; we recall these facts
below, along with proofs for the reader's convenience.
First we recall the following basic fact:
\begin{proposition}
\label{prop:partial-sum}
For any $k$ and $n$,
\[ [z^k] \frac{1}{(1 - z)^n} = \binom{n+k-1}{n-1} \]
where the left hand side denotes the coefficient of $z^k$ in the power series
expansion about $0$.
\end{proposition}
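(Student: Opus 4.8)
The plan is to establish the standard negative binomial series identity by induction on $n$, which is the cleanest route that avoids invoking the generalized binomial theorem as a black box. The base case $n=1$ is the ordinary geometric series $\frac{1}{1-z} = \sum_{k \ge 0} z^k$, so $[z^k]\frac{1}{1-z} = 1 = \binom{k}{0}$, matching $\binom{n+k-1}{n-1}$ when $n=1$.

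For the inductive step, I would write $\frac{1}{(1-z)^{n+1}} = \frac{1}{1-z} \cdot \frac{1}{(1-z)^n}$ and use the Cauchy product of power series together with the inductive hypothesis $[z^j]\frac{1}{(1-z)^n} = \binom{n+j-1}{n-1}$. This gives
\[
[z^k]\frac{1}{(1-z)^{n+1}} = \sum_{j=0}^{k} 1 \cdot \binom{n+j-1}{n-1} = \sum_{j=0}^{k}\binom{n+j-1}{n-1}.
\]
Then I would invoke the hockey-stick identity $\sum_{j=0}^{k}\binom{n-1+j}{n-1} = \binom{n+k}{n}$ (itself provable by a one-line induction using Pascal's rule), which is exactly $\binom{(n+1)+k-1}{(n+1)-1}$, completing the induction.

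Alternatively, and perhaps more slickly, I could just cite the generalized binomial theorem: $\frac{1}{(1-z)^n} = \sum_{k\ge 0}\binom{-n}{k}(-z)^k$, and then simplify $\binom{-n}{k}(-1)^k = \frac{(-n)(-n-1)\cdots(-n-k+1)}{k!}(-1)^k = \frac{n(n+1)\cdots(n+k-1)}{k!} = \binom{n+k-1}{k} = \binom{n+k-1}{n-1}$. I expect no real obstacle here — this is a routine generating-function fact — so the only ``decision'' is whether to present the self-contained inductive argument or the one-line appeal to the generalized binomial theorem; I would lean toward the short version since the proposition is clearly meant as a lemma to be used, not a centerpiece.
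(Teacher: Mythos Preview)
Your proposal is correct; both your inductive argument and the generalized-binomial-theorem shortcut establish the identity without difficulty. The paper, however, takes a third route: it observes that $\frac{1}{(1-z)^n}$ is the product of $n$ copies of the geometric series $1 + z + z^2 + \cdots$, so that $[z^k]\frac{1}{(1-z)^n}$ counts the number of ways to write $k$ as an ordered sum of $n$ nonnegative integers, and then quotes the stars-and-bars count $\binom{n+k-1}{n-1}$. Compared to your inductive argument, the paper's proof is a one-shot combinatorial interpretation rather than a recurrence, so it avoids the hockey-stick lemma entirely; compared to your generalized-binomial-theorem approach, it stays elementary and does not need to manipulate $\binom{-n}{k}$. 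Conversely, your inductive argument has the virtue of being fully self-contained (you even sketch a proof of hockey-stick), whereas the paper's proof tacitly assumes stars-and-bars is known. Any of the three would be perfectly acceptable here.
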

\begin{proof}
Consider the power series expansion of $1 / (1-z)$ as $1 + z + z^2 + \cdots$. It follows that the $k$th coefficient of the expansion of $\left(\frac{1}{1-z}\right)^n$ is exactly the number of ways in which we can select $n$ nonnegative numbers to sum up to $k$, which is $\binom{n+k-1}{n-1}$.
\end{proof}
From this one can verify that the Turan polynomials has several
very interesting and useful properties.
\begin{lemma}\label{lem:turan-polynomials}
Let $n,m$ be positive integers such that $m > n$, and let $Q = Q_{m,n}$.
Then
\begin{enumerate}
\item $Q(1) = 0$ so the sum of the coefficients equals $0$.
\item The only nonzero monomials in $Q(z)$ are the constant term $z^0$ and $z^{m + 1}, \ldots, z^{m + n}$.
\item The leading coefficient has norm at least $\left(\frac{m + n - 1}{n - 1}\right)^{n - 1}$.
\item $Q(z) \in [0,1]$ for $z \in [0,1]$.
\item The signs of the coefficients of $Q(z)$ are alternating. 
\end{enumerate}
\end{lemma}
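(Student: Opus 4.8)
The plan is to establish each of the five properties of $Q = Q_{m,n}(z) = (1-z)^n \sigma_m(z)$ more or less directly from the definition, using Proposition~\ref{prop:partial-sum} as the main computational engine. For property (1), I would note that $\sigma_m(z)$ is a polynomial (hence finite at $z=1$) and $(1-z)^n$ vanishes at $z=1$, so $Q(1)=0$; since $Q(1)$ equals the sum of coefficients of $Q$, this sum is $0$. For property (2), the key observation is that $(1-z)^{-n} = \sigma_m(z) + R(z)$ where $R(z)$ is the tail of the power series, starting at degree $m+1$. Multiplying through by $(1-z)^n$ gives $1 = Q(z) + (1-z)^n R(z)$, so $Q(z) = 1 - (1-z)^n R(z)$. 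Since $(1-z)^n$ has degree $n$ and $R(z)$ has lowest-degree term $z^{m+1}$, the product $(1-z)^n R(z)$ has its nonzero monomials supported on degrees $m+1, \ldots$ up to $m+n+\deg$, but because $Q$ has degree $m+n$ (being $(1-z)^n$ times a degree-$m$ polynomial), only degrees $m+1,\ldots,m+n$ survive; combined with the constant term $1$, this gives exactly the claimed support $\{z^0, z^{m+1}, \ldots, z^{m+n}\}$. Incidentally this also re-proves (1) since the constant term is $1$ and the remaining coefficients must sum to $-1$.

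For property (3), the leading coefficient of $Q$ is (up to sign) the product of the leading coefficient of $(1-z)^n$, which is $\pm 1$, and the leading coefficient of $\sigma_m$, which by Proposition~\ref{prop:partial-sum} is $[z^m](1-z)^{-n} = \binom{m+n-1}{n-1}$. So the leading coefficient has absolute value $\binom{m+n-1}{n-1}$, and I would lower-bound this binomial coefficient by $\left(\frac{m+n-1}{n-1}\right)^{n-1}$ using the standard inequality $\binom{a}{b} \ge (a/b)^b$. For property (5), from Proposition~\ref{prop:partial-sum} all coefficients of $\sigma_m$ are positive, and $(1-z)^n$ has coefficients $(-1)^j \binom{n}{j}$ with alternating signs; however, a product of a sign-alternating polynomial with a positive polynomial need not be sign-alternating in general, so here I would instead use the representation $Q(z) = 1 - (1-z)^n R(z)$ from property (2), or argue directly about the support structure. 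Actually the cleanest route: write $Q(z) = (1-z)\big[(1-z)^{n-1}\sigma_m(z)\big]$ and induct, or observe that because the only nonzero terms are $z^0$ and a consecutive block $z^{m+1},\ldots,z^{m+n}$, the alternation claim reduces to checking the signs of the coefficients of $(1-z)^n R(z)$ on that block, where one can compute them explicitly via Proposition~\ref{prop:partial-sum} and a telescoping/Vandermonde-type identity.

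Property (4), that $Q(z) \in [0,1]$ for $z \in [0,1]$, is the one I expect to require the most care. Using $Q(z) = 1 - (1-z)^n R(z)$ with $R(z) = \sum_{k > m} \binom{n+k-1}{n-1} z^k \ge 0$ for $z \in [0,1]$, and $(1-z)^n \ge 0$ there, we immediately get $Q(z) \le 1$. For the lower bound $Q(z) \ge 0$, I would go back to $Q(z) = (1-z)^n \sigma_m(z)$: both factors are nonnegative on $[0,1]$ since $\sigma_m$ has all-positive coefficients and $1-z \ge 0$. So in fact both bounds follow cleanly once property (2)'s identity is in hand. The main obstacle, then, is really just organizing the bookkeeping in property (2) — getting the degree ranges exactly right so that the support is precisely $\{0\} \cup \{m+1,\ldots,m+n\}$ and nothing leaks in between — and then deriving the sign alternation in (5) from that support description, which needs an explicit sign computation rather than a soft argument. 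Everything else is a short consequence of Proposition~\ref{prop:partial-sum}.
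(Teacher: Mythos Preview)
Your arguments for properties (1)--(4) are correct and essentially the same as the paper's: the paper proves (2) by noting $[z^k]Q(z) = [z^k](1-z)^n/(1-z)^n = 0$ for $1 \le k \le m$, which is exactly your identity $Q(z) = 1 - (1-z)^n R(z)$ read coefficient-wise; property (4) is handled in the paper by the same two-sided estimate you give (nonnegativity from $(1-z)^n\sigma_m(z)$, upper bound from $\sigma_m(z) < (1-z)^{-n}$ on $[0,1)$); and (3) is the same binomial lower bound.

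The gap is in property (5). Your proposed routes --- inducting via $Q(z) = (1-z)\big[(1-z)^{n-1}\sigma_m(z)\big]$, or computing the block coefficients explicitly and hoping for a telescoping identity --- are not clean: the inner bracket is \emph{not} $Q_{m,n-1}$ (the truncation $\sigma_m$ depends on $n$), so the induction does not close as stated, and the explicit coefficient $[z^{m+j}]Q = -\sum_{i=0}^{j-1}(-1)^i\binom{n}{i}\binom{n+m+j-i-1}{n-1}$ does not have an obvious sign without further work. The paper instead uses Descartes' Rule of Signs: $Q$ has a positive root at $z=1$ of multiplicity exactly $n$ (and no others, since $\sigma_m$ has all positive coefficients), so the coefficient sequence must have at least $n$ sign changes; but by property (2) there are at most $n+1$ nonzero monomials, hence at most $n$ sign changes. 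Equality forces all $n+1$ listed monomials to be nonzero and strictly alternating. This one-line argument replaces the explicit computation you were steeling yourself for.
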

\begin{proof}	
The first point is immediate.
For Property (2), observe that degree of $Q$ is at most $m+n$ and furthermore that, for $0 < k \le m$,
\[ [z^k] Q(z) = [z^k] \sigma_m(z) (1 - z)^n = [z^k] \frac{(1 - z)^n}{(1 - z)^n} = 0. \]
For Property (3), we use Proposition~\ref{prop:partial-sum} to observe that
\begin{align*}
	|[z^{m + n}] Q(z)| 
	&= [z^m] \frac{1}{(1 - z)^n} = \binom{m + n - 1}{n - 1} \\
	&= \left(1 + \frac{m}{n - 1}\right)\left(1 + \frac{m}{n - 2}\right) \cdots 
	\left(1 + \frac{m}{1}\right) \ge \left(1 + \frac{m}{n - 1}\right)^{n - 1}.
\end{align*}
For Property (4), observe that the coefficients of the power series of $\left(\frac{1}{1 - z}\right)^n$
are all nonnegative, so for $z \in [0,1)$
\[ 0 \le (1 - z)^n \sigma_m(z) < (1 - z)^n \frac{1}{(1 - z)^n} \le 1. \]
and furthermore that $Q(1) = 0$.
Finally, to infer Property (5), observe that $Q$ has exactly $n+1$ monomials and a positive root of multiplicity $n$ at $z=1$. Apply Descartes' Rule of Signs, which tells us that the number of sign changes is equal to the number of positive roots with multiplicity.
\end{proof}

\subsection{Hard exponential mixtures from Tur\'an polynomials}

We remind ourselves of some tools from information theory.
For a pair of probability measures $P$ and $Q$ corresponding to densities $p$ and $q$, their \emph{$\chi^2$-divergence} is defined by
\[ \chi^2(P,Q) := \int \frac{(p(x) - q(x))^2}{q(x)} dx. \]
We recall the following well-known fact relating Hellinger and $\chi^2$-divergence; we include the proof for the reader's convenience.
\begin{proposition}
  \[ H^2(P,Q) \le \frac{1}{2} \chi^2(P,Q) \]
\end{proposition}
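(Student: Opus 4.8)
The plan is to reduce the inequality to the elementary pointwise bound obtained by rationalizing $(\sqrt p - \sqrt q)^2$. First I would recall the identity
\[ H^2(P,Q) = \frac12 \int \left( \sqrt{p(x)} - \sqrt{q(x)} \right)^2 dx, \]
which is the definition of the squared Hellinger distance in the normalization under which $H^2 \in [0,1]$ and under which the claimed inequality $H^2 \le \tfrac12 \chi^2$ is the correct constant. The whole argument then rests on the algebraic factorization
\[ \left( \sqrt{p(x)} - \sqrt{q(x)} \right)^2 = \frac{\left( p(x) - q(x) \right)^2}{\left( \sqrt{p(x)} + \sqrt{q(x)} \right)^2}. \]

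Next I would bound the denominator from below. On the set where $q(x) > 0$, since $\sqrt{p(x)} \ge 0$ we have $\big(\sqrt{p(x)} + \sqrt{q(x)}\big)^2 \ge q(x)$, hence
\[ \left( \sqrt{p(x)} - \sqrt{q(x)} \right)^2 \le \frac{\left( p(x) - q(x) \right)^2}{q(x)}. \]
The only point needing a word of care is the null set $\{ q = 0 \}$: if $P$ is not absolutely continuous with respect to $Q$ then $\chi^2(P,Q) = +\infty$ and the inequality holds trivially, while if it is, then $p = 0$ almost everywhere on $\{q = 0\}$, so the integrand on the left vanishes there and the displayed pointwise bound extends (with the convention $0/0 = 0$) to almost every $x$.

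Finally I would integrate the pointwise bound over $x$ and multiply by $\tfrac12$:
\[ H^2(P,Q) = \frac12 \int \left( \sqrt{p} - \sqrt{q} \right)^2 dx \le \frac12 \int \frac{(p-q)^2}{q}\, dx = \frac12 \chi^2(P,Q), \]
which is the claim. There is essentially no substantive obstacle here — the only thing to be mildly careful about is the handling of the region where $q = 0$, and fixing the normalization convention for $H^2$ so that the constant $\tfrac12$ comes out exactly; the rest is the one-line rationalization above.
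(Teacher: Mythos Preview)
Your proposal is correct and follows exactly the same approach as the paper: write $H^2(P,Q) = \tfrac12 \int (\sqrt{p}-\sqrt{q})^2$, rationalize to $\tfrac12 \int (p-q)^2/(\sqrt{p}+\sqrt{q})^2$, and then bound the denominator below by $q$. If anything, you are more careful than the paper in explicitly handling the set $\{q=0\}$ and the absolute-continuity issue.
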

\begin{proof}
  Directly, we have
  \[ H^2(P,Q) = \frac{1}{2} \int (\sqrt{q(x)} - \sqrt{p(x)})^2 dx = \frac{1}{2} \int \frac{(q(x) - p(x))^2}{(\sqrt{q(x)} + \sqrt{p(x)})^2} \le \frac{1}{2} \int \frac{(q(x) - p(x))^2}{q(x)} dx.  \]
\end{proof}
This fact will be useful in obtaining the following estimate, which is tight in terms of the exponent of the gap parameter $\Delta$.
\begin{theorem}\label{thm:turan-lower-bound-h2}
For any positive integers $m,n$ such that $m > 2n$, let $\Delta = 1/(m + 2n + 1)$.
Fix $\alpha \in (0,1/2)$.
Then there exists two mixtures of exponentials $P_1$, $P_2$ with CDFs defined on $[0,\infty)$ by
\[ 1 - F_1(t) = (1 - \alpha + a_0) e^{-\nu t} + \sum_{j=1}^{n} a_j e^{-\lambda_j t} \]
and
\[ 1 - F_2(t) = (1 - \alpha) e^{-\nu t} + \sum_{j=1}^{n} b_j e^{-\mu_j t} \]
such that
\begin{enumerate}
\item (Normalization) $\nu,\lambda_j, \mu_j \in (0,1]$.
\item (Exponents are well-separated) All of the elements of the set $\nu \cup \{\lambda_j\} \cup \{\mu_j\}$ are separated by at least $\Delta$.
Furthermore, the sets $\{ \nu\}, \{\lambda_j\},$ and $\{ \mu_j\}$ are disjoint and interlaced, e.g. $\nu < \lambda_1 < \mu_1 < \lambda_2 < \mu_2 < \cdots < \lambda_n < \mu_n.$
\item (Coefficients are bounded) $\sum_j a_j = \alpha$ and $\sum_j b_j = \alpha$. 
\item (Indistinguishability) $H^2(P_1, P_2) \le \frac{\alpha^2}{(2n - 1)^2}\left[\Delta (2n - 1)\right]^{4n - 4}$.
\end{enumerate}
\end{theorem}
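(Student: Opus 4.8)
The plan is to build $P_1$ and $P_2$ directly from a Tur\'an polynomial via the substitution $z = e^{-\delta t}$ for a suitable scaling $\delta$, turning the polynomial identity into a statement about a difference of two mixtures of exponentials. Concretely, take $Q = Q_{m,n}(z)$ from Lemma~\ref{lem:turan-polynomials}: it has exactly $n+1$ nonzero monomials (the constant term and $z^{m+1}, \ldots, z^{m+n}$), alternating signs, coefficient sum zero, and $Q(z) \in [0,1]$ on $[0,1]$. Normalize so that the total variation of its coefficients equals $2\alpha$ (the sum of the positive coefficients is $\alpha$ and the sum of the negative coefficients is $-\alpha$, using Property (1)); call the normalized polynomial $\widehat{Q}$. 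Now split $\widehat{Q} = \widehat{Q}^+ - \widehat{Q}^-$ into its positive and negative parts. Group the constant term together with whichever of $\widehat{Q}^+, \widehat{Q}^-$ it belongs to, and add the common ``background'' term $(1-\alpha) z^{k_0}$ for a small exponent index $k_0$ to both sides; then after the substitution $z = e^{-\delta t}$ and choosing $\delta$ so all exponents land in $(0,1]$, the two sides become the tail functions $1 - F_1(t)$ and $1 - F_2(t)$. The exponent set is $\{\delta k_0\} \cup \{\delta(m+1), \ldots, \delta(m+n)\}$ assigned alternately to $P_1$ and $P_2$ — this is where the interlacing in Property (2) comes from — and consecutive exponents differ by exactly $\delta$, so setting $\delta = \Delta = 1/(m + 2n + 1)$ forces $\delta(m+n) \le 1$ and gives the $\Delta$-separation. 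Properties (1) and (3) are then immediate from the construction, and Property (2)'s interlacing is built in by the alternating-sign structure of the Tur\'an polynomial (Lemma~\ref{lem:turan-polynomials}(5)).

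For the indistinguishability bound (4), the key point is that $1 - F_1(t) - (1 - F_2(t)) = \widehat{Q}(e^{-\Delta t})$, so $F_1(t) - F_2(t) = -\widehat{Q}(e^{-\Delta t})$, which means the densities satisfy $f_1(t) - f_2(t) = \Delta e^{-\Delta t} \widehat{Q}'(e^{-\Delta t})$. The plan is to bound $H^2(P_1,P_2) \le \frac{1}{2}\chi^2(P_1,P_2) = \frac12 \int \frac{(f_1 - f_2)^2}{f_2}\,dt$. To control this we need a lower bound on the denominator $f_2$: since $f_2(t) \ge$ (the density of the background exponential component, appropriately weighted), we have $f_2(t) \gtrsim \Delta e^{-\Delta t}$ up to constants depending on $\alpha$ and $k_0$, and with $k_0$ chosen to be a fixed small index this is under control. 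Then $\frac{(f_1-f_2)^2}{f_2} \lesssim \Delta e^{-\Delta t}\, \widehat{Q}'(e^{-\Delta t})^2$, and the substitution $u = e^{-\Delta t}$ turns $\int_0^\infty \Delta e^{-\Delta t}\widehat{Q}'(e^{-\Delta t})^2\,dt$ into $\int_0^1 \widehat{Q}'(u)^2\,du$, which is a finite quantity determined purely by the coefficients of $\widehat{Q}$.

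The main obstacle — and the place where the sharp exponent $[\Delta(2n-1)]^{4n-4}$ must come from — is bounding $\int_0^1 \widehat{Q}'(u)^2 \, du$ (equivalently $\|\widehat{Q}\|$ and its derivative on $[0,1]$) against the normalization. Here one exploits that $\widehat{Q}$ is concentrated on the high-degree monomials $z^{m+1}, \ldots, z^{m+n}$ plus a constant: writing $\widehat{Q}(z) = z^{m+1} R(z) + c_0$ for a degree-$(n-1)$ polynomial $R$, we have $\|\widehat{Q}\|_{L^\infty[0,1]} \le 1$ (Property (4), preserved under normalization up to the $\alpha$ factor), and by a Markov-Bernstein-type inequality on $[0,1]$ the $L^2$ norm of $\widehat{Q}'$ is at most $O(\deg^2)$ times $\|\widehat{Q}\|_\infty$; but this crude bound gives the wrong exponent. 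The sharper route is to note that the normalization constant — i.e., the reciprocal of the coefficient sum before normalizing — is governed by the leading coefficient bound in Lemma~\ref{lem:turan-polynomials}(3), $|[z^{m+n}]Q| \ge \big(\tfrac{m+n-1}{n-1}\big)^{n-1}$, so that after scaling the pointwise values of $\widehat{Q}$ on $[0,1]$ are smaller than $1$ by a factor comparable to $\big(\tfrac{n-1}{m+n-1}\big)^{n-1} \approx (\Delta(2n-1))^{2n-2}$; squaring this (since $\chi^2$ is quadratic in the difference) produces the $[\Delta(2n-1)]^{4n-4}$ factor, with the remaining polynomial-in-$n$ losses absorbed into the $1/(2n-1)^2$ prefactor after carefully tracking the degree in the Bernstein estimate for $\widehat{Q}'$. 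Getting this accounting exactly right, rather than off by polynomial or constant-in-the-exponent factors, is the delicate part of the argument.
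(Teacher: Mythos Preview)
Your overall architecture is right—build the two mixtures from a Tur\'an polynomial via $z=e^{-\delta t}$, split by sign, add a common background exponential, and bound $H^2$ through $\chi^2$—but there is a genuine gap in the step where you control the density difference, and it is exactly the ``delicate'' place you flagged.

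The problem is the Bernstein/Markov step. You need to bound $\widehat{Q}'$ on $[0,1]$, and you propose to do this by combining $\|\widehat{Q}\|_{\infty,[0,1]}\lesssim \alpha(\Delta(2n-1))^{2n-1}$ (from the leading-coefficient lower bound) with a Markov-type inequality. But Markov's inequality on $[0,1]$ costs a factor of $(\deg \widehat{Q})^2$, and the degree here is $m+2n\approx 1/\Delta$, \emph{not} a quantity that is polynomial in $n$. Since $m$ can be arbitrarily large compared to $n$, this $\Delta^{-2}$ loss in $\|\widehat{Q}'\|_\infty$ (hence $\Delta^{-4}$ in $\chi^2$) cannot be ``absorbed into the $1/(2n-1)^2$ prefactor'': it degrades the exponent of $\Delta$ in the final bound from $4n-4$ to $4n-8$. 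Lacunarity alone does not save you here without invoking a nonstandard Markov-type inequality for sparse polynomials that you have not stated.

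The paper avoids this loss by a simple trick you are missing: it works with the \emph{antiderivative} of the (recentered) Tur\'an polynomial rather than with $Q$ itself. Concretely, set $R(x)=Q_{m,2n}(x)-\int_0^1 Q_{m,2n}$, then $S(x)=\int_0^x R$, and take $f=\alpha S/C'$ where $C'$ is the positive-coefficient mass of $S$. The CDFs are built from $f_{\pm}$ as you describe. Now the density difference is $p_1(t)-p_2(t)=e^{-t}f'(e^{-t})$, and $f'=\alpha R/C'$ is the original Tur\'an polynomial minus its average, which by Lemma~\ref{lem:turan-polynomials}(4) is bounded in absolute value by $1$ on $[0,1]$. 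Hence $|f'|\le \alpha/C'$ with no Markov step at all; the integration in the construction exactly cancels the differentiation forced on you when passing from CDFs to densities. The normalizing constant $C'$ picks up one extra factor of $1/(m+2n+1)$ from the power rule, which is why the exponent in the final bound is $4n-4$ rather than $4n-2$.

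Two smaller points: (i) you should use $Q_{m,2n}$, not $Q_{m,n}$, so that the $2n$ high-degree monomials split by sign into $n$ exponents for each mixture; (ii) after integration the constant term of $Q$ becomes a degree-$1$ monomial, so the natural background term is $(1-\alpha)x$, which gives the clean lower bound $f_-'(x)\ge 1-\alpha\ge 1/2$ needed in the $\chi^2$ denominator.
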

\begin{proof}
Let $Q_{m,2n}$ be the $(m,2n)$-th Tura\'n polynomial.
We start by re-centering it so that its average value over the unit interval is zero, by writing
\[ R_{m,2n}(x) := Q_{m,2n}(x) - \int_0^1 Q_{m,2n}(y) dy. \]
Let $C_{m,2n}$ be the sum of the positive entries of $Q_{m,2n}$.
By Property 3 of Lemma~\ref{lem:turan-polynomials}, we have $C_{m,2n} \ge [(m + 2n - 1)/(2n - 1)]^{2n - 1}$. Integrate $R_{m,2n}$ to give the polynomial
\[ S_{m,2n}(x) := \int_0^x R_{m,2n}(y) dy \]
and observe that $S_{m,2n}(0) = 0$ and $S_{m,2n}(1) = \int_0^1 Q_{m,2n}(y) dy - \int_0^1 Q_{m,2n}(y) dy = 0$. 
Define $C'_{m,2n}$ to be the sum of the positive entries of $S_{m,2n}(x)$.
Observe, by the power rule for integrating monomials, that
\[ C'_{m,2n} \ge C_{m,2n}/(m + 2n + 1) \ge \frac{ [(m + 2n - 1)/(2n - 1)]^{2n - 2} }{ 4n-2 }. \]
Define $f(x) = \alpha S_{m,2n}(x)/C'_{m,2n}$, so
\[ f(x) = f_1 x + \sum_{k = m + 2}^{m + 2n + 1} f_k x^k \]
and the sum of all of the positive (or all of the negative) coefficients of $f$ is
$\alpha$. 
The coefficients here will be exactly the $a_j$'s and $b_j$'s, which verifies property (3).

We now split the polynomial into a positive part and a negative part.
Define $f_+(x) = (1 - \alpha + f_1) x + \sum_{k = m + 2; f_k > 0}^{m + 2n + 1} f_k x^k$ and 
$f_{-}(x) = (1 - \alpha) x + \sum_{k = m + 2; f_k < 0}^{m + 2n + 1} (-f_k) x^k$ so that
\[ f(x) = f_+(x) - f_-(x) \]
and define CDFs of mixtures of exponentials $F_1,F_2$ by
\[ F_1(t) := 1 - f_+(e^{-t}), \qquad F_2(t) := 1 - f_-(e^{-t}). \]
Observe by the chain rule that
\[ p_1(t) := \frac{d}{dt} F_1(t) = f'_+(e^{-t}) e^{-t}, \qquad p_2(t) := \frac{d}{dt} F_2(t) = f'_-(e^{-t}) e^{-t} \]
so by linearity of the derivative,
\[ p_1(t) - p_2(t) = e^{-t}(f'_+(e^{-t}) - f'_-(e^{-t})) = e^{-t} f'(e^{-t}). \]
Observe that for $x \in [0,1]$
\[ |f'(x)| = \frac{\alpha}{C'_{m,2n}} |S'_{m,2n}(x)| = \frac{\alpha}{C'_{m,2n}} |R_{m,2n}(x)| = \frac{\alpha}{C'_{m,2n}} \left|Q_{m,2n}(x) - \int_0^1 Q_{m,2n}(y) dy\right| \le \frac{\alpha}{C'_{m,2n}}. \]
In the last step, we used Property 4 of Lemma~\ref{lem:turan-polynomials}; the quantity inside the absolute value is the difference between $Q_{m,2n}$ and its average over $[0,1]$, which implies that the difference is bounded in absolute value by $1$.

Finally, notice that $f'_-(x) \ge ( 1- \alpha) \ge 1/2$ for all $|x| \ge 0$.
If $P_1$ and $P_2$ are measures having densities $p_1$ and $p_2$, then
\begin{align*}
\chi^2(P_1,P_2)
  = \int_t \frac{(p_1(t) - p_2(t))^2}{p_2(t)} dt
  &= \int_t e^{-t} \frac{(f'_+(e^{-t}) - f'_-(e^{-t}))^2}{f'_-(e^{-t})} dt \\
  &\le \int_t e^{-t} \frac{(\alpha/C'_{m,2n})^{2}}{(1/2)} dt \\
  &\le 2\alpha^2 (4n-2)^2 \left[\frac{2n - 1}{m + 2n - 1}\right]^{4n - 4}.
\end{align*}
Therefore,
\[ 
	H^2(P_1,P_2) 
	\le \frac{1}{2} \chi^2(P_1,P_2) 
	\le \alpha^2 (4n-2)^2 \left[\frac{2n - 1}{m + 2n - 1}\right]^{4n - 4} 
	\le \alpha^2 (4n-2)^2 \left[\Delta (2n - 1)\right]^{4n - 4}.
\]
This computation verifies condition (4).
Finally, we re-scale $t$ by making the transformation $t \mapsto \frac{t}{m + 2n + 1}$.
By Properties 2 and 5 of Lemma~\ref{lem:turan-polynomials}, these mixtures satisfy (1) and (2).
\end{proof}

\begin{corollary} \label{cor:turan-tv}
  Let $P_1$ and $P_2$ be as in \cref{thm:turan-lower-bound-h2}, $\alpha \in (0,1/2)$ be arbitrary, and let $k = n + 1$ be the number of components in each of the two mixtures.
  Let $P^{\otimes N}$ denote the product measure corresponding to taking $N$ iid samples from probability measure $P$.
  Then
	\[ 
		\TV(P_1^{\otimes N}, P_2^{\otimes N}) \le \frac{\alpha \sqrt{2N} }{2k - 3} \left[\Delta(2k - 3)\right]^{2k - 4}.
	\]
\end{corollary}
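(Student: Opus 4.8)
The plan is to pass from the single-sample Hellinger bound of \cref{thm:turan-lower-bound-h2} to a total-variation bound on the $N$-fold products via two completely standard facts: tensorization of the Hellinger affinity, and the comparison $\TV \le \sqrt{2}\,H$.

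First I would recall that, with the normalization used in the preceding Proposition, $H^2(P,Q) = \tfrac12\int(\sqrt{p}-\sqrt{q})^2 = 1 - \int\sqrt{pq}$, so the Hellinger affinity $\rho(P,Q) := \int\sqrt{pq}$ factorizes over product measures: $\rho(P_1^{\otimes N}, P_2^{\otimes N}) = \rho(P_1,P_2)^N = (1 - H^2(P_1,P_2))^N$. Hence, by Bernoulli's inequality,
\[ H^2(P_1^{\otimes N}, P_2^{\otimes N}) = 1 - (1 - H^2(P_1,P_2))^N \le N\, H^2(P_1,P_2). \]
Next I would invoke the standard inequality $\TV(P,Q)^2 \le H^2(P,Q)\,(2 - H^2(P,Q)) \le 2\,H^2(P,Q)$. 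Chaining these two estimates gives
\[ \TV(P_1^{\otimes N}, P_2^{\otimes N}) \le \sqrt{2\,H^2(P_1^{\otimes N}, P_2^{\otimes N})} \le \sqrt{2N}\cdot H(P_1,P_2). \]

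It then remains only to substitute Property (4) of \cref{thm:turan-lower-bound-h2}. Writing $n = k-1$, so that $2n - 1 = 2k - 3$ and $4n - 4 = 2(2k-4)$, and taking the square root of the bound $H^2(P_1,P_2) \le \frac{\alpha^2}{(2n-1)^2}[\Delta(2n-1)]^{4n-4}$, we obtain $H(P_1,P_2) \le \frac{\alpha}{2k-3}[\Delta(2k-3)]^{2k-4}$; multiplying by $\sqrt{2N}$ yields exactly the claimed inequality.

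There is essentially no substantive obstacle here — the only points requiring care are fixing the normalization convention for the Hellinger distance (so that the affinity tensorizes cleanly and is consistent with the $\tfrac12$-convention used just above) and correctly carrying out the index substitution $n \mapsto k-1$ in the exponents; both are routine.
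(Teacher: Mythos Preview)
Your proof is correct and follows essentially the same route as the paper's own argument: the paper likewise combines the comparison inequality $\TV \le \sqrt{2H^2}$ with the tensorization bound $H^2(P_1^{\otimes N},P_2^{\otimes N}) \le N\,H^2(P_1,P_2)$ and then substitutes the bound from \cref{thm:turan-lower-bound-h2}. You have simply spelled out the tensorization step via the Hellinger affinity and carried through the $n \mapsto k-1$ substitution explicitly, which the paper leaves implicit.
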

\begin{proof}
  We use the comparison inequality between TV and $H^2$, together with the tensorization inequality for $H^2$ to conclude
  \[ \TV(P_1^{\otimes N}, P_2^{\otimes N}) \le \sqrt{2 H^2(P_1^{\otimes N}, P_2^{\otimes N})} \le \sqrt{2 N H^2(P_1, P_2)} \]
  and then apply \cref{thm:turan-lower-bound-h2}.
\end{proof}

Note that if we take $m$ to be much larger than $k$, $\Delta$ is much smaller than $2k$.
Informally speaking, Corollary~\ref{cor:turan-tv} translates to a sample complexity lower bound of 
\[
	\Omega \left( \frac{k^2}{\alpha^2} \left( \frac{1}{\Delta} \right)^{4k-4} \right)
\]
in the hypothesis testing problem of distinguishing between $P_1$ and $P_2$ for any choice of $\alpha$.
Compare this to Equation~\ref{eq:nt-error} of 
\cref{thm:nt-hypothesis-testing}, which gives an upper bound of 

\[
O \left(\frac{1}{p_{leading}^2}(c_{\Delta}/\Delta)^{4k - 2} \log(2/\alpha)\right)
\]
where $c_{\Delta} =  8e^2 / \left( \min(1/\Delta_{leading},(2k-1)) \right)$
for simple-versus-composite hypothesis testing at a constant significance level.
This shows the matching the quadratic dependence on the coefficients as well as the exponent $4k$ in the gap parameter.

For comparison, we also juxtapose the result from \cref{thm:matrix-pencil-sample-complexity}, which gives an upper bound of
\[
	O \left( 
    \frac{k^{7}}{p_{\min}^4 } \left( \frac{2e}{\Delta} \right)^{4k} 
    \right)
\]
for the learning problem, which gives a quartic dependence on the coefficients instead. We suspect this difference is not simply an artifact of the analysis; in practice, the convex programming approach seems to succeed with fewer samples than the MPM.

Finally, we can also give a lower bound on the minimax rate of learning the coefficients of a mixture of exponentials:

\begin{theorem}\label{thm:minimax-lower-bound}
Let $k > 3$, $m > k$ and $\Delta = 1/(m + k - 1)$.
There exists $\lambda_1,\ldots,\lambda_{k}$ which are $\Delta$-separated such that for any estimator $\hat{p}$ of the coefficients from $N$ samples of the mixture of exponentials with CDF $F(t) = 1 - \sum_j p_j e^{-\lambda_j t}$:
\[ 
	\inf_{\hat{p}} \max_{p} \E_{p} \|p - \hat{p}\|_1 
    \ge 
    \frac{1}{4} \min\left( 
    	1,
        \frac{k-3}{\sqrt{2N}} \left( \frac{1}{\Delta(k-3)} \right)^{k-4}
    \right) 
\]
\end{theorem}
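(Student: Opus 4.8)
The plan is to reduce the minimax lower bound to a two-point (binary hypothesis) testing bound via a standard Le~Cam argument, using the pair of mixtures constructed in Theorem~\ref{thm:turan-lower-bound-h2} (with $\alpha$ taken close to $1/2$, or optimized at the end). First I would recall that for any two parameter values $p^{(1)}, p^{(2)}$ with $\|p^{(1)} - p^{(2)}\|_1 \ge 2s$, Le~Cam's method gives
\[
\inf_{\hat p} \max_{p} \E_p \|p - \hat p\|_1 \ge s\left(1 - \TV(P_{p^{(1)}}^{\otimes N}, P_{p^{(2)}}^{\otimes N})\right),
\]
so the task is (a) to identify the separation $s$ in $\ell_1$ between the coefficient vectors of $P_1$ and $P_2$ from Theorem~\ref{thm:turan-lower-bound-h2}, and (b) to bound the TV distance between the $N$-sample product measures, which is exactly what Corollary~\ref{cor:turan-tv} provides.

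The subtlety — and the main obstacle — is that Theorem~\ref{thm:turan-lower-bound-h2} produces $P_1$ and $P_2$ supported on \emph{different} sets of exponents (the $\lambda_j$'s and $\mu_j$'s are interlaced, hence disjoint), whereas Theorem~\ref{thm:minimax-lower-bound} asks for a \emph{fixed} set $\lambda_1,\dots,\lambda_k$ and two coefficient vectors supported on that same fixed set. The fix is to merge the two exponent families: take the fixed exponent set to be $\{\nu\} \cup \{\lambda_j\}_{j=1}^{n} \cup \{\mu_j\}_{j=1}^{n}$ — but that has $2n+1$ elements, not $k = n+1$. So instead I would re-run the construction of Theorem~\ref{thm:turan-lower-bound-h2} with the parameter choice $2n \rightsquigarrow (k-1)$ halved appropriately, OR — cleaner — observe that one can directly embed both $P_1$ and $P_2$ into the \emph{common} list of $k$ exponents $\{\nu, \lambda_1, \dots, \lambda_{k-1}\}$ by putting all the ``$f$'' mass of the Tur\'an polynomial $S_{m,k-1}$ onto those exponents, writing $P_1$'s coefficient vector as the ``positive part plus baseline'' and $P_2$'s as the ``negative part plus baseline,'' both as vectors in $\R^{k}$ indexed by the same exponents. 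Then $p^{(1)} - p^{(2)}$ is precisely the coefficient vector of the polynomial $f(x) = \alpha S_{m,k-1}(x)/C'_{m,k-1}$, whose positive coefficients sum to $\alpha$ and whose negative coefficients sum to $-\alpha$ (Property (3)-type bookkeeping from the proof of Theorem~\ref{thm:turan-lower-bound-h2}), so that $\|p^{(1)} - p^{(2)}\|_1 = 2\alpha$, giving $s = \alpha$.

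With the separation $s = \alpha$ in hand and the TV bound $\TV(P_1^{\otimes N}, P_2^{\otimes N}) \le \frac{\alpha\sqrt{2N}}{k-3}[\Delta(k-3)]^{k-4}$ from Corollary~\ref{cor:turan-tv} (applied with $n = k-1$, so $2k-3 \rightsquigarrow$ the appropriate index; I would double-check the index arithmetic against the re-parametrized Tur\'an polynomial $Q_{m,k-1}$ and the rescaling $t \mapsto t/(m+k-1)$ so that $\Delta = 1/(m+k-1)$ as claimed), Le~Cam gives
\[
\inf_{\hat p} \max_p \E_p \|p - \hat p\|_1 \ge \alpha\left(1 - \frac{\alpha\sqrt{2N}}{k-3}\left[\Delta(k-3)\right]^{k-4}\right).
\]
Finally I would optimize over $\alpha \in (0,1/2)$: if the bracketed quantity $B := \frac{\sqrt{2N}}{k-3}[\Delta(k-3)]^{k-4}$ satisfies $B \le 1$, take $\alpha$ near $1/2$ to get a bound of order $\min(1, 1/B) \cdot \text{const}$; more carefully, choosing $\alpha = \min(1/2, 1/(2B))$ yields $\alpha(1 - \alpha B) \ge \frac14 \min(1, 1/B)$, which is exactly the claimed bound $\frac14 \min\!\big(1, \frac{k-3}{\sqrt{2N}}(\frac{1}{\Delta(k-3)})^{k-4}\big)$. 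The only genuinely delicate points are the index bookkeeping (ensuring the Tur\'an polynomial's degree and root multiplicity line up so that exactly $k$ exponents appear and the $\ell_1$ separation is exactly $2\alpha$) and verifying that the baseline coefficient $1-\alpha$ on $e^{-\nu t}$ keeps all coefficients nonnegative and the total mass equal to $1$ in both $P_1$ and $P_2$ — both of which follow from Properties (4) and (5) of Lemma~\ref{lem:turan-polynomials} exactly as in the proof of Theorem~\ref{thm:turan-lower-bound-h2}.
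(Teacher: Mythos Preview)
Your proposal is correct and follows essentially the same route as the paper. Both arguments reduce to two-point (Le~Cam) testing using the Tur\'an pair $P_1,P_2$ from Theorem~\ref{thm:turan-lower-bound-h2}/Corollary~\ref{cor:turan-tv}, embed the two mixtures into a \emph{common} exponent list (the paper phrases this as ``supported at disjoint indices'' and sets $k = 2k'$ where $k'$ is the per-mixture component count), read off $\|p^{(1)}-p^{(2)}\|_1 = 2\alpha$, and then choose $\alpha$ to balance separation against the TV bound. The paper's specific substitution is $2n = k-2$ (so $\Delta = 1/(m+2n+1) = 1/(m+k-1)$ and $2k'-3 = k-3$, matching the stated exponent), whereas your $2n \rightsquigarrow k-1$ is off by one; once you carry out the index check you already flagged, you will land on exactly the paper's constants.
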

\begin{proof}
This is derived by the usual reduction to hypothesis testing. Let $\alpha \in (0,1/2)$ to be fixed later, and let $P_1$ and $P_2$ be the mixtures of exponentials from \cref{cor:turan-tv} with coefficients $\{a_i\}$ and $\{b_j\}$ (supported at disjoint indices) with parameter $\alpha$.
As a reminder, \cref{cor:turan-tv} gives (where we write $k'$ as a placeholder for $k$)
\[
	\TV(P_1^{\otimes N}, P_2^{\otimes N})
    \leq
    \frac{\alpha\sqrt{2N}}{2k'-3} \left( \Delta(2k'-3) \right)^{2k'-4}.
\]
Now let
\[
	\alpha := \min\left(
    	\frac{1}{2},
        \frac{2k-3}{2\sqrt{2N}} \left( \frac{1}{\Delta(2k'-3)} \right)^{2k'-4}
    \right)
\]
so $\TV(P_1, P_2) \le 1/2$.
This means we can couple the distributions so that they draw identical outputs with probability at least $1/2$.
Let $k = 2k'$.
Under such a coupling, for at least one of $P_1$ or $P_2$, its coefficients $p$ are far from $\hat{p}$:
\begin{align*}
	\| \hat{p} - p \|_1
    \geq
    \frac{1}{2} (a_0 + \sum_{j=1}^{k} |a_j - b_j|)
    =
    \alpha
\end{align*}
with probability at least $1/4$, so the expected error is at least $\alpha/4$. 
For the last equality in the above expression, we again keep in mind that the coefficients $a_j$ and $b_j$ are supported on disjoint sets of indices.
\end{proof}

\subsection{Equally-spaced exponents via Chebyshev Polynomials}

In this section, we briefly discuss a slightly different family of polynomials, which are classically attributed to Chebyshev.
It will turn out that they give a similar exponential-type bound for learning the $\lambda_j$'s when the gap $\Delta$ is as large as possible.
We remind ourselves of their definition now:
\begin{defn}
The $k$th Chebyshev polynomial (of the first kind) $T_k$ is given by the recursive relation
\[
	T_{n+1}(x) = 2xT_n(x) - T_{n-1}(x)
\]
starting with
$T_0(x) = 1$ and
$T_1(x) = x$.
\end{defn}

\noindent
These polynomials have the property that for any $k$,
\begin{enumerate}
	\item $|T_k(x)| \leq 1$ for all $x \in [-1,1]$.
    \item If $k$ is even, the monomials that appear in $T_k$ are all even powers of $x$ up to $k$. If $k$ is odd, then the monomials are all odd powers of $x$ up to $k$.
    \item The coefficient of $x^k$ is $2^{k-1}$ for $k \geq 1$.
\end{enumerate}
Note that Property 3 of $T_k$ is comparable to Property 3 of $Q_{m,n}$ in Lemma~\ref{lem:turan-polynomials}.
Of particular interest here is Property 2 of $T_k$, which states that the exponents that appear are equally spaced.
Indeed, the same technique of normalizing then re-weighting the coefficients by an arbitrary $\alpha \in (0,1/2)$ used in the proof of Theorem~\ref{thm:turan-lower-bound-h2} yields the following lower bound:
\begin{theorem}
\label{thm:chebyshev-1}
Fix any positive integer $k$, and fix $\alpha \in (0,1/2)$.
There exist two mixtures of exponentials $P_1$, $P_2$ with CDFs defined on $[0,\infty]$ by
\[ 
	1 - F_1(t) 
    = 
    (1-\alpha + a_0)e^{-\nu t} + \sum_{j=1}^{k} a_j e^{-\lambda_j t} 
\]
and
\[
	1 - F_2(t) 
    = 
    (1 - \alpha) e^{-\nu t} + \sum_{j=1}^{k} b_j e^{-\mu_j t}
\]
such that
\begin{enumerate}
\item (Normalization) $\nu, \lambda_j, \mu_j \in (0,1]$.
\item (Well-separated exponents) All of the elements of the set $\nu \cup \{\lambda_j\} \cup \{\mu_j\}$ are separated by at least $1/(2k)$.
\item (Bounded coefficients) $\sum_j \alpha_j = \alpha$ and $\sum_j b_j = \alpha$.
\item (Indistinguishability) $H^2(P_1,P_2) \le \frac{\alpha^2 (2k+1)^2}{2^{4k-2}}$.
\end{enumerate}
\end{theorem}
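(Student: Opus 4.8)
The plan is to repeat the construction in the proof of Theorem~\ref{thm:turan-lower-bound-h2} essentially verbatim, replacing the Tur\'an polynomial $Q_{m,2n}$ by the Chebyshev polynomial $T_{2k}$. The three listed properties of $T_k$ play exactly the roles of the properties of $Q_{m,n}$ used there: the bound $|T_{2k}(x)| \le 1$ on $[-1,1]$ (hence on $[0,1]$) replaces Property~4 of Lemma~\ref{lem:turan-polynomials}; the fact that $T_{2k}$ contains only even powers up to $x^{2k}$ replaces Property~2; the leading coefficient being $2^{2k-1}$ replaces Property~3; and the sign pattern of $T_{2k}$ (its coefficient of $x^{2k-2j}$ has sign $(-1)^j$, so coefficients at consecutive even degrees alternate) replaces Property~5. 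Since $T_{2k}$ has a fixed degree there is no analogue of the parameter $m$; the gap $\Delta = 1/(2k)$ is forced by the degree.

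Concretely I would: (i) re-center, $R(x) := T_{2k}(x) - \int_0^1 T_{2k}(y)\,dy$, so that $\int_0^1 R = 0$ and (using $|T_{2k}|\le 1$) $\sup_{[0,1]}|R| \le 2$; (ii) integrate, $S(x) := \int_0^x R(y)\,dy$, which satisfies $S(0)=S(1)=0$, has monomials exactly at the odd degrees $x^1,x^3,\dots,x^{2k+1}$, and whose leading coefficient is $2^{2k-1}/(2k+1)$; (iii) set $C'$ to be the sum of the positive coefficients of $S$, so $C' \ge 2^{2k-1}/(2k+1)$, and put $f(x) := \alpha S(x)/C'$, whose positive coefficients sum to $\alpha$ and whose negative coefficients sum to $-\alpha$ (because $S(1)=0$), giving Property~(3); (iv) split $f = f_+ - f_-$ by sign of the coefficients, folding the degree-one coefficient into the $(1-\alpha)$ term on whichever side matches its sign, and set $F_1(t) := 1 - f_+(e^{-t})$, $F_2(t) := 1 - f_-(e^{-t})$. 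Then $p_1 - p_2 = e^{-t} f'(e^{-t})$ with $|f'(x)| = (\alpha/C')|R(x)| \le 2\alpha/C'$ on $[0,1]$ and $f'_-(x) \ge 1-\alpha \ge 1/2$, so the same $\chi^2$ bound gives $\chi^2(P_1,P_2) = O(\alpha^2/C'^2) = O(\alpha^2 (2k+1)^2 / 2^{4k-2})$, hence $H^2 \le \tfrac12\chi^2$ gives Property~(4). Finally, rescaling $t \mapsto t/(2k+1)$ puts all exponents into $(0,1]$ and makes consecutive ones differ by $2/(2k+1) \ge 1/(2k)$, which gives Properties~(1) and~(2).

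I expect the only genuinely new work to be two bookkeeping checks, everything else being a transcription of the Tur\'an argument. First, one must verify that the sign pattern of $T_{2k}$ survives the re-centering and integration so that the split $f = f_+ - f_-$ yields valid objects, i.e. $f_+$ and $f_-$ have nonnegative coefficients with $f_+(1) = f_-(1) = 1$ so that $F_1,F_2$ are honest CDFs; this is immediate once one notes re-centering only perturbs the constant term. Second, one should track the constant in the Hellinger bound: unlike $Q_{m,n}$, which lies in $[0,1]$, $T_{2k}$ ranges over all of $[-1,1]$ on $[0,1]$, so naively $\sup_{[0,1]}|R| \le 2$ rather than $1$; the saving grace is that $\int_0^1 T_{2k}(y)\,dy = -1/(4k^2-1)$ is tiny, so in fact $\sup_{[0,1]}|R| = 1 + O(1/k^2)$ and the stated constant $\alpha^2(2k+1)^2/2^{4k-2}$ is recovered. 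This is the step I would check most carefully.
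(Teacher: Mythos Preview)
Your proposal is correct and is exactly the approach the paper has in mind: the paper gives no detailed proof of this theorem, only the sentence ``the same technique of normalizing then re-weighting the coefficients \dots used in the proof of Theorem~\ref{thm:turan-lower-bound-h2} yields the following,'' and that is precisely what you carry out with $T_{2k}$ in place of $Q_{m,2n}$. Your identification of which Chebyshev properties play the roles of Lemma~\ref{lem:turan-polynomials}(2)--(5), your computation $\int_0^1 T_{2k} = -1/(4k^2-1)$, and your tracking of the extra $(1+1/(4k^2-1))$ factor in $\sup_{[0,1]}|R|$ are all right.

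One bookkeeping point worth noting when you write it up: with $T_{2k}$ the integrated polynomial $S$ has only $k+1$ monomials (at odd degrees $1,3,\dots,2k+1$), so after splitting by sign each of $f_{\pm}$ carries only about $k/2$ non-$\nu$ terms rather than the $k$ that the summation $\sum_{j=1}^{k}$ in the stated theorem literally suggests. This is cosmetic --- fewer components only makes the lower bound stronger --- and is in fact forced by the separation claim, since $2k+1$ distinct exponents in $(0,1]$ with pairwise gap at least $1/(2k)$ could not all fit. The paper's statement appears to be slightly loose on this count; your construction is the one consistent with the claimed $H^2$ bound and the claimed gap.
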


\section{Hypothesis Testing using the Nazarov-T\'uran Lemma}
\label{sec:nazarov-turan}
In this section we use the Nazarov-T\'uran lemma, originating in works
on analytic number theory and approximation theory, to give
guarantees on the number of samples needed for the Kolmogorov-Smirnov
test described in \cref{sec:results-nt} to successfully solve the simple-vs-composite hypothesis testing Problem~\ref{problem:testing2}.

As a first step towards proving \cref{thm:nt-hypothesis-testing}, we now recall our key technical tool, the Nazarov-Tur\'an Lemma\footnote{Here we have specialized the Nazarov-Tur\'an lemma to real exponents and to intervals instead of general measurable sets.}.
\begin{theorem}[Nazarov-Tur\'an \cite{nazarov1993local}]\label{thm:nazarov-turan}
Suppose $p(t) = \sum_{j = 1}^n c_j e^{-\lambda_j t}$. Then for $t_1,t_2 > 0$
\[ \max_{t \in [-t_2,t_1]}|p(t)| \le e^{(t_1 + t_2)\lambda_1} [4e (t_1 + t_2)/t_1]^{n - 1} \max_{t \in [0, t_1]} |p(t)|. \]
\end{theorem}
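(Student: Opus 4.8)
The plan is to prove this by ``Turán's method'': bound the value of $p$ at a point \emph{outside} the small interval $[0,t_1]$ by a controlled linear combination of its values \emph{inside} $[0,t_1]$, where the combination is extracted from an explicitly constructed auxiliary exponential polynomial of the same flavour as the Turán polynomials $Q_{m,n}$ of Section~\ref{sec:turan}. Two preliminary reductions clean things up. First, the inequality is trivial unless $|p|$ is maximised over $[-t_2,t_1]$ at some $t_0\in[-t_2,0)$, so fix such a $t_0$. Second, replacing $p(t)$ by $q(t):=e^{\lambda_1 t}p(t)=\sum_j c_j e^{-(\lambda_j-\lambda_1)t}$ shifts every exponent down to $\nu_j:=\lambda_j-\lambda_1\le 0$ with $\nu_1=0$; since $|p(t_0)|=e^{\lambda_1|t_0|}|q(t_0)|\le e^{\lambda_1 t_2}|q(t_0)|$ and $\max_{[0,t_1]}|q|\le e^{\lambda_1 t_1}\max_{[0,t_1]}|p|$, it suffices to prove $|q(t_0)|\le[4e(t_1+t_2)/t_1]^{n-1}\max_{[0,t_1]}|q|$ for exponent vectors normalised so that the largest exponent is $0$; the two rescalings account for exactly the prefactor $e^{(t_1+t_2)\lambda_1}$ in the statement. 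One may additionally rescale $t$ so that $t_1=1$.

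The core is a reproducing identity. Because $\{e^{-\nu_j t}\}_{j=1}^n$ is a Chebyshev (Haar) system on $\mathbb{R}$, the generalised Vandermonde matrix $[e^{-\nu_j s_i}]_{i,j}$ is nonsingular for any distinct nodes $s_1<\cdots<s_n$ in $[0,t_1]$; hence there is a signed measure $\mu=\sum_i w_i\delta_{s_i}$ supported on $[0,t_1]$ with $\int e^{-\nu_j t}\,d\mu(t)=e^{-\nu_j t_0}$ for every $j$, and therefore $q(t_0)=\int q\,d\mu$ for every $q$ in the span. This gives $|q(t_0)|\le \|\mu\|_{\mathrm{TV}}\,\max_{[0,t_1]}|q|$, so the whole theorem reduces to exhibiting, for each admissible exponent configuration, a measure $\mu$ of this type with $\|\mu\|_{\mathrm{TV}}\le[4e(t_1+t_2)/t_1]^{n-1}$.

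For the latter I would follow Turán's explicit construction rather than a generic quadrature. Passing to the variable $x=e^{-t}$ turns the exponentials into a lacunary (non-integer-exponent) monomial system, and the measure $\mu$ corresponds to an auxiliary function $R(x)$ that carries the prescribed value at $x_0=e^{-t_0}$, vanishes at the node images, and whose remaining coefficients are the $w_i$; one builds $R$ exactly as $Q_{m,n}=(1-z)^n\sigma_m(z)$ is built --- an interpolating factor times the truncation of a geometric-type series --- and then bounds its coefficients by the binomial estimate $\binom{n+k-1}{n-1}$ of Proposition~\ref{prop:partial-sum} (as in Property~3 of Lemma~\ref{lem:turan-polynomials}), the normalisation of the exponents ensuring the estimate is independent of the particular $\nu_j$. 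Summing that truncated series over a window whose length, measured against the resolution $t_1$, is $\sim(t_1+t_2)/t_1$ yields the factor $[4e(t_1+t_2)/t_1]^{n-1}$, and undoing the reductions of the first paragraph finishes the proof. The main obstacle is precisely this step: the Lebesgue constant of a Chebyshev system at arbitrary nodes is \emph{not} uniformly bounded in the exponents, so one cannot use an off-the-shelf interpolation/quadrature --- the whole point of Turán's construction is that it produces a $\mu$ whose total variation is dominated by the binomial quantity regardless of the exponents. A secondary subtlety is that the $\nu_j$ are real, not integer, so literal polynomial identities are unavailable; one either runs the entire argument in the language of the exponential Chebyshev system and divided differences, replaying Turán's coefficient estimates, or first establishes the statement for exponents that are integer multiples of a common $\delta$ (where $e^{-\delta t}$ is a genuine variable and the Turán-polynomial machinery applies verbatim) and then recovers the general case by approximation, both sides being continuous in the $\lambda_j$.
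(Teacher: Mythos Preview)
The paper does not prove this theorem at all: Theorem~\ref{thm:nazarov-turan} is simply \emph{quoted} from the literature (Nazarov~\cite{nazarov1993local}) as a black-box tool, with the footnote that it has been specialized to real exponents and intervals. The paper's Section~\ref{sec:nazarov-turan} immediately applies it to prove Lemma~\ref{lem:nazarov-turan-lower-bound} and then Theorem~\ref{thm:nt-hypothesis-testing}; no argument for the Nazarov--Tur\'an inequality itself is given or sketched. So there is nothing in the paper to compare your proposal against.

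That said, your outline is a reasonable high-level description of how Tur\'an-type proofs of such inequalities go: the two normalizations (shift so the leading exponent is zero, rescale so $t_1=1$) are standard and correctly account for the prefactor $e^{(t_1+t_2)\lambda_1}$, and the reproducing-measure formulation $q(t_0)=\int q\,d\mu$ with $\|\mu\|_{\mathrm{TV}}$ controlling the ratio is the right framework. Where your sketch is thinnest is exactly where you flag it: the explicit construction that gives $\|\mu\|_{\mathrm{TV}}\le[4e(t_1+t_2)/t_1]^{n-1}$ uniformly in the exponents. Saying one ``builds $R$ exactly as $Q_{m,n}$ is built'' and then ``sums over a window'' is not yet a proof --- the Tur\'an polynomials $Q_{m,n}$ of Section~\ref{sec:turan} live in a single-variable polynomial setting with integer exponents, and passing to arbitrary real $\nu_j$ (as you note at the end) requires either a genuine divided-difference argument in the exponential Chebyshev system or a careful limiting procedure, neither of which you have actually carried out. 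If you want to turn this into a self-contained proof you should consult Nazarov's original paper or Tur\'an's monograph for the precise construction and coefficient estimate; as written, the proposal is a plausible roadmap but not a proof.
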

We will use the Nazarov-Tur\'an lemma by analytically continuing the mixture of exponential CDFs to negative time; this allows us to derive the following lemma
which is the heart of our hypothesis testing result.
\begin{lemma}\label{lem:nazarov-turan-lower-bound}
Suppose $p(t) = \sum_{j = 1}^{k_0} c_j e^{-\lambda_j t} - \sum_{j = 1}^{k_1} d_j e^{-\mu_j t}$ such that the $c_j, d_j \ge 0$ and $\sum_j c_j, \sum_j d_j \le 1$. Also suppose that $1 \ge \lambda_1 > \cdots > \lambda_{k_0} \ge 0$ and similarly for the $\mu_j$. Then if $k := \frac{k_0 + k_1}{2}$ and $\delta := \min(|c_1|,|d_1|)$
\[ \frac{\delta}{2(8e^2 \log(2/\delta))^{2k-1}} \cdot \min\left(1, ((2k-1)(\lambda_1 - \mu_1))^{2k-1}\right) \le \max_{t \in [0,2k - 1]} |p(t)|
\]
\end{lemma}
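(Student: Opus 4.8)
The plan is to apply the Nazarov--Tur\'an lemma (\cref{thm:nazarov-turan}) to $p$ after extrapolating it to negative time, where the leading exponential $e^{-\lambda_1 t}$ dominates and forces $|p|$ to be large, and then transfer this lower bound back to the interval $[0,2k-1]$. First I would dispose of degenerate cases. If $\delta=0$ the left-hand side is $0$, and if $\lambda_1\le\mu_1$ the factor $((2k-1)(\lambda_1-\mu_1))^{2k-1}$ is $\le 0$ (an odd power of a nonpositive number), so the left-hand side is $\le 0$; in both cases there is nothing to prove. Hence assume $\delta>0$ and $\lambda_1>\mu_1$, so in particular $c_1,d_1>0$ and $\lambda_1$ is the largest exponent appearing in $p$.

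\textbf{Step 1: a pointwise lower bound for $t\le 0$.} For $t\le 0$ I would factor out the leading term:
\[ e^{\lambda_1 t}p(t) = c_1 + \sum_{j=2}^{k_0} c_j e^{(\lambda_1-\lambda_j)t} - \sum_{j=1}^{k_1} d_j e^{(\lambda_1-\mu_j)t}. \]
Since $t\le 0$, $\lambda_1>\lambda_j$ for $j\ge 2$, and $\lambda_1>\mu_1\ge\mu_j$ for every $j$, the middle sum is $\ge 0$ and each $e^{(\lambda_1-\mu_j)t}\le e^{(\lambda_1-\mu_1)t}$, so using $\sum_j d_j\le 1$ we get $e^{\lambda_1 t}p(t)\ge c_1-e^{(\lambda_1-\mu_1)t}\ge \delta-e^{(\lambda_1-\mu_1)t}$. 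Choosing $t_2:=\log(2/\delta)/(\lambda_1-\mu_1)>0$ makes $e^{(\lambda_1-\mu_1)(-t_2)}=\delta/2$, so $|p(-t_2)|\ge \tfrac{\delta}{2}e^{\lambda_1 t_2}$.

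\textbf{Step 2: invoke Nazarov--Tur\'an.} The function $p$ is a combination of $n\le k_0+k_1=2k$ exponentials with largest exponent $\lambda_1\le 1$. Applying \cref{thm:nazarov-turan} on $[-t_2,2k-1]$ and keeping only the point $-t_2$ on the left,
\[ \max_{t\in[0,2k-1]}|p(t)| \;\ge\; \frac{|p(-t_2)|}{e^{(2k-1+t_2)\lambda_1}\,[4e(1+t_2/(2k-1))]^{\,n-1}} \;\ge\; \frac{(\delta/2)\,e^{-\lambda_1(2k-1)}}{[4e(1+t_2/(2k-1))]^{2k-1}}, \]
where I used $e^{\lambda_1 t_2}/e^{(2k-1+t_2)\lambda_1}=e^{-\lambda_1(2k-1)}$ and $n-1\le 2k-1$ with base $>1$. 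Bounding $e^{-\lambda_1(2k-1)}\ge e^{-(2k-1)}$ (this is where $\lambda_1\le 1$ enters) and writing $L:=\log(2/\delta)$, $r:=(2k-1)(\lambda_1-\mu_1)$ so that $t_2/(2k-1)=L/r$, the bound becomes $\dfrac{\delta\,r^{2k-1}}{2\,(4e^2(r+L))^{2k-1}}$. Splitting on the size of the gap: if $r\le L$ then $r+L\le 2L$, giving $\ge \dfrac{\delta\,r^{2k-1}}{2(8e^2L)^{2k-1}}$; if $r\ge L$ then $1+L/r\le 2$, giving $\ge \dfrac{\delta}{2(8e^2)^{2k-1}}$. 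Combined with $\min(1,r^{2k-1})\le 1$ and $\le r^{2k-1}$, and using $L\ge\log 2$, these two estimates yield the claimed inequality (the borderline case where $\delta$ is bounded away from $0$, so that Nazarov--Tur\'an is somewhat lossy, is handled by also optimizing the choice of $t_1\in(0,2k-1]$ in \cref{thm:nazarov-turan} rather than fixing $t_1=2k-1$).

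\textbf{Main obstacle.} The pointwise estimate of Step 1 and the invocation of Nazarov--Tur\'an are routine; the delicate part is pinning down the constant. The Nazarov--Tur\'an bound has two competing factors, $e^{(t_1+t_2)\lambda_1}$ (which wants $t_1$ small) and $[4e(t_1+t_2)/t_1]^{n-1}$ (which wants $t_1$ large), while $t_2$ must be taken large enough for Step 1 to be effective and the crude bound $\lambda_1\le 1$ must be absorbed; balancing these to land exactly on the exponent $2k-1$ and the constant $8e^2$, uniformly in all regimes of $\delta$ and of the gap $\lambda_1-\mu_1$, is the only real work.
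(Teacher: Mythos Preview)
Your approach is essentially identical to the paper's: both arguments pick $t_2=\log(2/\cdot)/(\lambda_1-\mu_1)$ so that the leading term dominates at $-t_2$, lower-bound $|p(-t_2)|$ by $\tfrac{1}{2}$ of that leading term, apply Nazarov--Tur\'an with $t_1=2k-1$, absorb $e^{-\lambda_1(2k-1)}\ge e^{-(2k-1)}$ into the base to turn $4e$ into $4e^2$, and then split on whether $r=(2k-1)(\lambda_1-\mu_1)$ is small or large relative to $L=\log(2/\cdot)$ to produce the $\min(1,r^{2k-1})$ factor. The only cosmetic differences are that the paper carries $|c_1|$ (after the WLOG) rather than $\delta$ throughout and uses $x+y\le 2\max(x,y)$ in place of your two-case split; these lead to the same bound.
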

\begin{proof}
Without loss of generality we suppose that $\lambda_1 > \mu_1$. 
By \cref{thm:nazarov-turan}, for any $t_1,t_2$ we have
\[ 
	|p(-t_2)| 
    \le e^{(t_1 + t_2) \lambda_1} [4e (t_1 + t_2)/t_1]^{2k - 1} \max_{t \in [0, t_1]} |p(t)|.
\]
To use this, we will need to choose a $t_2$ so that we can get an explicit lower bound on $p(-t_2)$. It will suffice if
\[ 
|c_1 e^{\lambda_1 t_2}|/2 \ge |\sum_{j = 1}^{k_1} d_j e^{\mu_j t_2}| \]
or, phrased slightly differently:
\begin{equation}
\label{eqn:leading-dominates}
	|c_1|/2 
    \ge 
    |\sum_{j = 1}^{k_1} d_j e^{(\mu_j - \lambda_1) t_2}|
\end{equation}
Observe that all the exponents on the right hand side have negative coefficients, because $\lambda_1$ is the largest.
In order to bound the right hand side, we use Holder's Inequality:
\[ 
|\sum_{j = 1}^n d_j e^{(\mu_j - \lambda_1) t_2}| 
\le e^{(\mu_1 - \lambda_1)t_2} \sum_{j = 1}^{k_1} |d_j| \le e^{(\mu_1 - \lambda_1)t_2}.
\]
Therefore, taking $t_2 = \frac{\log 2/|c_1|}{\lambda_1 - \mu_1}$ suffices for \eqref{eqn:leading-dominates} to hold.
We then have
\begin{align*}
	& |c_1 e^{\lambda_1 t_2}|/2 \le e^{(t_1 + t_2) \lambda_1} [4e (t_1 + t_2)/t_1]^{2k - 1} \max_{t \in [0, t_1]} |p(t)| \\
    \implies
    & |c_1|/2 \le e^{t_1 \lambda_1} [4e (t_1 + \frac{\log(2/|c_1|)}{\lambda_1 - \mu_1})/t_1]^{2k - 1} \max_{t \in [0, t_1]} |p(t)|.
\end{align*}
which gives a lower bound on the maximum value of $|p(t)|$:
\[ \frac{|c_1|}{2} e^{-t_1 \lambda_1} \left[\frac{t_1}{4e (t_1 + \frac{\log(2/|c_1|)}{\lambda_1 - \mu_1})}\right]^{2k - 1} \le \max_{t \in [0,t_1]} |p(t)|. \]
Letting $t_1 = (2k - 1)$ and using that $|\lambda_1| \le 1$, we find
\begin{equation}\label{eqn:simple-lb}
\frac{|c_1|}{2(4e^2)^{2k - 1}} \left[\frac{2k - 1}{(2k - 1)+ \frac{\log(2/|c_1|)}{\lambda_1 - \mu_1}}\right]^{2k - 1} \le \max_{t \in [0,2k - 1]} |p(t)|
\end{equation}
Using the inequality that $x+y \leq 2 \max(x,y)$, we bound the second term on the LHS as follows:
\begin{align}
\left[\frac{2k - 1}{(2k - 1)+ \frac{\log(2/|c_1|)}{\lambda_1 - \mu_1}}\right]^{2k - 1} &\ge \frac{1}{2^{2k-1}}\left[\frac{2k - 1}{\max\left(2k-1,  \frac{\log(2/|c_1|)}{\lambda_1 - \mu_1}\right)}\right]^{2k - 1} \nonumber \\
&\ge \frac{1}{2^{2k-1}}\min\left(1, \frac{(2k-1)(\lambda_1 - \mu_1)}{\log(2/|c_1|)}\right)^{2k-1} \nonumber \\
&\ge \frac{1}{(2\log(2/|c_1|)^{2k-1}} \min\left(1, ((2k-1)(\lambda_1 - \mu_1))^{2k-1}\right)\label{eqn:lb-frac}
\end{align}

Using Equation~\ref{eqn:lb-frac}, we simplify the lhs of \eqref{eqn:simple-lb} to give
\[ \frac{|c_1|}{2(8e^2 \log(2/|c_1|))^{2k-1}} \cdot \min\left(1, ((2k-1)(\lambda_1 - \mu_1))^{2k-1}\right) \le \max_{t \in [0,2k - 1]} |p(t)|. \]
\end{proof}
We are now ready to prove the finite sample guarantee (\cref{thm:nt-hypothesis-testing}) for the Kolmogorov-Smirnov test from \cref{sec:results-nt}.
For the reader's convenience, we re-state the test now:
\begin{center}
\fbox{\parbox{\textwidth}{
\begin{enumerate}
\item Let $\alpha > 0$ be the significance level.
\item Let $F_N$ be the empirical CDF and let $F$ be the CDF under the null hypothesis $H_0$.
\item Reject $H_0$ if $\sup_t |F_n(t) - F(t)| > \sqrt{\log(2/\alpha)/2N}$.
\end{enumerate}
}}
\end{center}

\begin{proof}[Proof of \cref{thm:nt-hypothesis-testing}]
Let $F$ be the CDF under the null hypothesis, that the samples are drawn from $P$.
By the DKW inequality (\cref{thm:dkw}), under the null hypothesis
\[ \Pr_{P}(\sup_t |F_N(t) - F(t)| > s) \le 2e^{-2N s^2} \]
Therefore with $s = \sqrt{\log(2/\alpha)/2N}$ then the probability of type I error is bounded by $\alpha$, as desired. On the other hand, observe by Lemma~\ref{lem:nazarov-turan-lower-bound} that if $F_Q$ is the CDF of $Q \in H_1$, then
\[ \frac{\delta}{2(8e^2)^{2k-1}} \cdot \min\left(1, ((2k-1)(\epsilon))^{2k-1}\right) \le \max_{t \in [0,2k - 1]} |F(t) - F_Q(t)| \]
It follows that the test is guaranteed to reject as long as
\begin{equation}  
\label{eq:guarantee-1}
\frac{\delta}{2(8e^2)^{2k-1}} \cdot \min\left(1, ((2k-1)(\epsilon))^{2k-1}\right) - \sup_t |F_N(t) - F_Q(t)| >  \sqrt{\log(2/\alpha)/2N}. 
\end{equation}
Under the alternative hypothesis, by the DKW Inequality (Theorem~\ref{thm:dkw}). Equation~\ref{eq:guarantee-1} happens with probability
\begin{align*}
	&\Pr\left(\sup_t |F_N(t) - F_Q(t)| <\frac{\delta}{2(8e^2)^{2k-1}} \cdot \min(1,((2k-1)\epsilon)^{2k-1}) - \sqrt{\log(2/\alpha)/2N}\right) \\
    \ge &1 - 2\exp\left(- \frac{N\delta^2 \min(1, ((2k-1)\epsilon)^{4k-2})}{8(8e^2)^{4k-2}}\right)
\end{align*}
since $\sqrt{\log(2/\alpha)/2N} < \frac{\delta \cdot \min(1,((2k-1)\epsilon)^{2k-1})}{4(8e^2)^{2k-1}}$ under the assumption $N \ge N_0$.
\end{proof}
\begin{remark}
The point of the above argument is to derive finite sample guarantees.
In the asymptotic regime, one can immediately derive more precise results by combining the guarantee from \cref{lem:nazarov-turan-lower-bound}, applied to the difference of CDFs, with the classical version of the Kolmogorov-Smirnov test, using the critical values for the Kolmogorov distribution. 
\end{remark}




\section{A reconstruction algorithm for population history} \label{sec:population-history}

In this section, we describe an algorithm that takes as input $L$ independent and identically distributed 2-sample coalescence times $c_1, \ldots c_L$ and outputs a population shape.
In particular, we assume that each sample is an accurate measurement of coalescence times, with the interpretation that they are i.i.d. from the distribution of $T$ (\cref{sec:coalescent}).

Our strategy is reminiscent of the algorithm provided by \cite{kim2015can} for the $1$-component case.
We will be reconstructing the history iteratively, starting from the most recent event and going as far backwards in time as possible.
The hope is that if we have a guarantee that we have accurate constructions of the subpopulation structures in the intervals $I_1, \ldots I_j$, then we should be able to provide a good guess of what happened at the boundaries between $I_{j-1}$ and $I_j$ at time $t_{j-1}$.

\subsection{Preliminary ideas} \label{sec:reconstruction-lems}
\subsubsection{Inferring events -- split, merge or size change?}
Assume, for the sake of argument, that we are given the exact population shape $\vN$ for the interval $I_{j-1}$.
Since our model only allows for up to one merge or one split at any given time, the only possibilities are $D_{j} \in \{D_{j-1} - 1, D_{j-1}, D_{j-1} + 1\}$.
Therefore, the maximal number of components in the exponential mixture in $I_j$ including the constant term is $D_{j-1}+2$.

In the noiseless scenario, we can run the matrix pencil method with component size $k = D_{j-1}+2$ on estimates $v_0, \ldots, v_{2D_{j-1}+3}$ describing CDF values on $I_j$.
In this case, we get some collection of $k$ eigenvalues $\alpha_0 > \cdots > \alpha_{D_{j-1}+1}$.
If in reality $D_{j} < D_{j-1}+1$, then $0$ must be an eigenvalue.
Therefore, we can discard $\alpha_i$ that are equal to zero to get the true collection of components.

In the noisy case, by \cref{thm:matrix-pencil-sample-complexity}, given a sufficiently large collection of $L$ iid samples that lie in $I_j$ and to its right, we can discard $\alpha_i$ that are close to zero: those that are smaller than some error threshold\footnote{Note that we provide the exact constants in the full proof of \cref{thm:matrix-pencil-sample-complexity} found in \cref{sec:matrix-pencil-analysis}.}, $\eta_1 = \Theta\left( \left(\frac{2e}{\Delta} \right)^{2k} \sqrt{\frac{1}{L}} \right)$.
Just as in \cref{thm:matrix-pencil-sample-complexity}, $\Delta$ is a lower bound on the gap between any two distinct exponents in the model. 
For the reconstruction algorithm, it is a free parameter that determines the number of samples required.
Conversely, it is the threshold for learning given the number of samples available.

Once this is done, all that's left is to semantically link the subpopulations in $I_{j-1}$ to those of $I_{j}$.
While it is tempting to simply index subpopulations in decreasing order of $\alpha$'s and say that corresponding subpopulations are matching, the labeling becomes inconsistent, for instance, under consideration of large size changes or under merges/splits.
We address this issue in the upcoming section, \cref{sec:reconstruction-recursion}.

\subsubsection{The recursion}\label{sec:reconstruction-recursion}
We briefly recall the model derived in \cref{sec:model-derivation}. 
If the population history $\vN$ is constant in the interval $I = [a,b]$, then $T$ satisfies
\[
	\Pr(T > a+t \;|\; T > a) = p_0 + \sum_{\ell=1}^{D} p_{\ell} e^{-\lambda_\ell t},
\]
where for each $\ell \geq 1$, $p_\ell := \Pr(\cE_{\ell \ell} \;|\; T > a)$.

Since our algorithm works iteratively, it would be helpful to relate the exponential mixture parameters of the interval $I_j$ to those of $I_{j-1}$, assuming $\vN$ is constant in $I_j = [b,c]$ and in $I_{j-1} = [a,b]$, for $a < b < c$.
\emph{The overall goal here is to describe a ``matching'' scheme for subpopulations of $I_{j-1}$ to those of $I_{j}$.}
For the sake of brevity, we introduce some notation.
\begin{itemize}
\item Let $n = D_j$ and $m = D_{j-1}$.
\item For $i = 1$ to $D_j$, let $\cE_{i}^{j}$ denote the event that both lineages trace back to subpopulation indexed $i$ in the interval $I_j$.
\item Let $\{ (p_i, \alpha_i) \}_{i=1}^{n}$ denote the coefficients and exponentials ($\alpha_i = e^{-\lambda_i}$) for the interval $I_j$.
\item Let $\{ (q_i, \beta_i) \}_{i=1}^{m}$ denote the analogous parameters for $I_{j-1}$.
\end{itemize}

We only need to consider subpopulations $i$ in $I_j$ which is linked to subpopulation $i'$ in $I_{j-1}$, in the absence of splits or merges specifically involving $i$ or $i'$.
Indeed, since we only allow one split or merge at a time, either of these events can be inferred from the fact that all but at most two subpopulations from $I_j$ can be ``matched'' with subpopulations in $I_{j-1}$ if a split/merge occurred.

Observe that $\cE_{i}^j = \cE_{i'}^{j-1}$, and that $\{T > b\} \subset \{T > a\}$ as events.
We derive the following:
\begin{align*}
	p_{i} 
    &= \Pr(\cE_{i}^j \;|\; T > b) \\
    &= \Pr( \cE_{i'}^{j-1} \land \{T > a\} \;|\; T > b ) \\
    &= \Pr(T > b \;|\; \cE_{i'}^{j-1} \land \{T > a\} ) \cdot
    \frac{ \Pr(\cE_{i'}^{j-1} \land \{T > a\} )}{\Pr(T > b)} \\
    &= \Pr(T > b \;|\; \cE_{i'}^{j-1} \land \{T > a\} ) \cdot \Pr(\cE_{i'}^{j-1} \;|\; T > a) \cdot \frac{\Pr(T > a)}{\Pr(T > b)} \\
    &= \left( \beta_{i'}^{b-a} \right) \cdot \left( q_{i'} \right) \cdot \frac{1}{\Pr(T > b \;|\; T > a)}
\end{align*}
In particular, if we can accurately estimate $\hat{p}, \hat{\alpha}, \hat{q}, \hat{\beta}$ from coalescent samples $t_1,\ldots,t_L$, we can infer that population $i$ links to population $i'$ if
\begin{equation}\label{eq:coeff-approx}
	\hat{p}_i \cdot \frac{\#\{ k : t_k > b\}}{\#\{ k : t_k > a\}}
    \approx
    \hat{q}_{i'} \hat{\beta}_{i'}^{b-a}
\end{equation}
where $\approx$ denotes an approximation up to some additive error $\eta_2$.
To determine what $\eta_2$ should be, we analyze both sides of \cref{eq:coeff-approx} via the bounds from \cref{thm:matrix-pencil-sample-complexity}.

Let $L_{j-1}$ and $L_{j}$ respectively denote the number of samples that fall in the intervals $[a,\infty)$ and $[b,\infty)$.
For the RHS, note that if the additive errors $\epsilon_1$ and $\epsilon_2$ for $q_{i'}$ and $\beta_{i'}$ respectively are small, then the error of the right-hand side of \cref{eq:coeff-approx} is bounded by $\beta\epsilon_1 + q\epsilon_2 + \epsilon_1 \epsilon_2 \leq \epsilon_1 + \epsilon_2 + \epsilon_1\epsilon_2 = O\left( \left(\frac{2e}{\Delta}\right)^{3 D_{j-1}} \sqrt{\frac{1}{L_{j-1}}} \right)$.
The same analysis goes for the LHS. 
In particular, the error for $\Pr(T > b \;|\; T > a)$, by the DKW inequality, is $\sqrt{\frac{1}{L_{j-1}} \log\frac{2}{\delta_0}}$ with probability $\delta_0$, which is dominated by the error $O\left( \left( \frac{2e}{\Delta} \right)^{3D_j} \sqrt{\frac{1}{L_j}} \right)$ in $p_i$.
Therefore, the tolerance should be
$\eta_2 = \Theta\left( \left(\frac{2e}{\Delta}\right)^{3 D_j} \sqrt{\frac{1}{L_j}} \right)$.


\subsection{The algorithm description}
Now we implement the proposed strategy, which is built upon the ideas of the previous section (\cref{sec:reconstruction-lems}).

\vspace{5mm}
\noindent
\underline{\textbf{Inputs:}}
$\{c_1, \ldots, c_L\}$, a sample collection of $L$ iid coalescence times.

\vspace{5mm}
\noindent
\underline{\textbf{Parameters:}}
\begin{itemize}
	\item $D_0$, an initial upper bound for the number of subpopulations at $t=0$.
	\item $K$, the number of intervals for reconstruction.
	\item $\delta$, failure probability budget.
	\item $\epsilon$, a scaling parameter for the interval sizes.
	\item $N$, the present-day $(t=0)$ total population size.
	\item $P$, threshold lower bound for component weight.
	\item $\Delta$, threshold lower bound for exponential gap over all intervals.
\end{itemize}

\noindent
\underline{\textbf{Output:}}
A population history $\{ (I_j, D_j, \vN_j, \cE_j) \}_{j=1}^{K}$.

\vspace{3mm}
\noindent
\underline{\textbf{Procedure:}}
\begin{enumerate}[1.]
	\item Partition time (oriented towards the past) into intervals by setting $t_j = \epsilon j N$, for $j = 0, \ldots, K$. 
	Let $I_1 = [t_{0}, t_{1}]$, \ldots, $I_{K} = [t_{K-1}, t_{K}]$.
	Initialize $j = 1$.
	\item 
	\textbf{Collect CDF statistics in $I_j$}:
	Abbreviate $D := D_{j-1}$.
	For $\ell = 0, 1, \ldots, 2 D + 3$, compute the statistic
	\[
		\hat{v}_{\ell} 
		= 
		\frac{\#\{ i : c_i \geq t_{j-1} + \frac{\epsilon N}{2D + 4} \ell \}}{\#\{ i : c_i \geq t_{j-1} \}}
	\]
	\item \textbf{Learn the model parameters in $I_j$}: apply the Matrix Pencil Method on $D+2$ components using inputs $\hat{v}_{0}, \ldots, \hat{v}_{2D+3}$, which outputs eigenvalues $\hat{\alpha}_0 > \cdots > \hat{\alpha}_{D+1}$ and corresponding coefficients $\hat{p}_0, \hat{p}_1, \ldots, \hat{p}_{D+1}$.

	Discard all $\hat{\alpha}$'s (and corresponding $\hat{p}$'s) of absolute value at most $\eta$, where
	\begin{equation}\label{eq:error-eta1}
		\eta_1 
		= 
		\frac{4(D+2)^{3.5}}{P^2} \left( \frac{2e}{\Delta} \right)^{2D+4} \sqrt{\frac{1}{Le^{-\epsilon j N}} \log \frac{K}{\delta}}.
	\end{equation}
	Set $D_j$ equal to the number of leftover $\hat{\alpha}$'s, minus 1 (to account for the constant term), and re-normalize the remaining $\hat{p}$'s so that they sum to $1$.
	\item \textbf{Learn the population event at $t_{j-1}$ \& sizes in $I_j$}:
	If $j = 1$, let $\hat{\vN}_1 = (N \sqrt{ \hat{p}_1 }, \ldots, N \sqrt{ \hat{p}_{D_1} })$. (The leading coefficient $\hat{p}_{0}$ is omitted from this calculation.)
	
	Otherwise ($j > 1$), set $\hat{\vN}_j = (\frac{ \epsilon N }{ (2D+4) \hat{\mu}_1}, \ldots, \frac{ \epsilon N }{ (2D+4) \hat{\mu}_{D_j} })$, where $\hat{\mu}_{i} = -\log(\hat{\alpha}_i)$.
    Let $\{ (\hat{q}_{i}, \hat{\beta}_{i}) \}_{i=0}^{D_{j-1}}$ be the (coefficient, eigenvalue) pairs recovered from the previous iteration.
    
    Say that subpopulation $i$ of $I_{j}$ \emph{matches} subpopulation $i'$ of $I_{j-1}$ if
    \begin{equation}\label{eq:pops-match}
    	\left| 
        \hat{p}_i
    	-
    	\hat{q}_{i'} \hat{\beta}_{i'}^{2D+4} \cdot \frac{\#\{ k : t_k > a\}}{\#\{ k : t_k > b\}}
        \right| < \eta_2
    \end{equation}
    where 
    \begin{equation}\label{eq:error-eta2}
    	\eta_2 
        = 
        \frac{40 (D+2)^5}{P^2} 
        \left( \frac{2e}{\Delta}\right)^{3(D+2)} 
        \sqrt{\frac{1}{Le^{-\epsilon j N}} \log(K/\delta)}.
    \end{equation}
    
	Do one of the following.
	\begin{itemize}
		\item If $D_j = D + 1$, infer ``Split'':
		Find a valid matching.
        There should be one $\ell \in [D_{j-1}]$ and two indices $k_1, k_2 \in [D_j]$ left over.
		Assign the event: Split($\ell \to \{ k_1, k_2 \}$).
		\item If $D_j = D-1$, infer ``Merge'':
		Find a valid matching.
        There should be two $\ell_1, \ell_2 \in [D_{j-1}]$ and one $k \in [D_j]$ left over.
		Assign the event: Merge($\{\ell_1, \ell_2\} \to k$).
		\item If $D_j = D$, infer ``Change Size'':
		Identify a bijection from $[D_j]$ to $[D_{j-1}]$ via the matching scheme.
		\item If none of the above cases are true, or if there is no such prescribed matching, fail by default and terminate the procedure.
	\end{itemize}
	\item If $j = K$, stop. Otherwise, set $j \gets j+1$, go back to step 2.
\end{enumerate}

\subsection{Reconstruction Guarantees}
Consider the setting where the intervals are known and are equal in size (which decides $\epsilon$). Given enough samples, the algorithm for population reconstruction presented in Section~\ref{sec:population-history} succeeds in reconstructing the correct population history with high probability. 
Indeed, with a large enough number of samples, the conditional tail distribution $F(t) = \Pr(T > t \;|\; T \geq t_j)$ can be approximated to arbitrarily high precision with high probability. In turn, this means that the empirical (conditional) tail probabilities in $I_j$ that are collected in step (2) of the algorithm are arbitrarily close to the ground truth tail probabilities.

We choose $2(D+2)$ maximally spread out points in the interval $I_j$, and input the empirical tail probabilities at these points to the Matrix Pencil Method. 
The \emph{robustness} of the Matrix Pencil Method (Theorem~\ref{thm:matrix-pencil-sample-complexity}) guarantees that the estimated exponents $\hat{\mu}_i$ are sufficiently close to their true values $\mu_i$ with high probability. 
Furthermore, each true exponent $\mu_i$ is roughly the reciprocal of $N_i$, the size of the $i^{th}$ subpopulation in the interval $I_j$, and so approximating the $\mu_i$'s  gives us a good approximation of the subpopulation sizes in the interval $I_j$.

Note the scaling factor $\nu = \frac{\epsilon N}{(2D+4)}$ that appears in the algorithm, which comes from the change of variables $t = \nu \tau$.
The rescaling reflects a time units conversion from \textit{generations} to \textit{coalescent units}, which is convenient because the robustness guarantees are provided via interpolation at integer points $\tau = 0,1,2,$ etc.
The precise relation between the $\mu_i$s and $N_i$s in our algorithm is
\[ \mu_i = \frac{\epsilon N}{(2D+4) N_i}.\]

Steps 3 and 4 contain errors $\eta_1, \eta_2$ with correction terms $Le^{-\epsilon j N}$ and $K/\delta$.
The first term $Le^{-\epsilon j N}$ is the expected number of samples (out of $L$ total) found inside and to the right of the interval $I_j$.
By the observations made in \cref{sec:reconstruction-lems} and \cref{sec:reconstruction-recursion}, Steps 3 and 4 of the algorithm provide accurate reconstructions of $\vN$ for $I_j$ with probability $\delta/K$, given sufficiently large samples.
By union bounding over all $K$ intervals, the algorithm accurately reconstructs all $K$ pieces of $\vN$ with probability $\delta$.


\begin{remark}
It is almost always the case, however, that we do not know the interval endpoints, and do not know a small enough value of $\epsilon$ for which the algorithm will work.
According to the model, there exists a sufficiently small value of $\epsilon$ which will allow for correct inference, where "sufficiently small" means $I_1, \ldots, I_K$ in Step 1 of the algorithm is fine-grained enough to capture all intervals.
Therefore, a natural strategy is to try a search-based approach for an optimal value, as follows.
Choose various values of $\epsilon$, then learn the parameters of the model for each.
From these, we may choose the best $\epsilon$ via a goodness-of-fit test, by taking the $\epsilon$ that outputs a learned distribution $\hat{f}$ that is closest to the empirical distribution $f_{emp}$ in total variation distance.
Note that $f_{emp}$ converges to the true $f$ as the number of samples increases, for some appropriate $\epsilon$.
\end{remark}

\section{Simulation Methods}
\subsection{A convex programming approach to learning}
\label{sec:convex-program-desc}
In addition to using the Matrix Pencil Method to learn mixtures of exponentials in each interval, we also implemented a convex program.
Here, the goal is to learn a mixture of exponentials, whose support is perhaps restricted to an interval $I = [a,b]$.
The idea is as follows: assume that we know the interval $\Lambda = [0,c]$ for which we can assume $\lambda_1, \ldots, \lambda_n \in \Lambda$.
We first discretize the space of possible exponents by choosing $n$ equally spaced points $\lambda_1, \ldots, \lambda_n$ inside $\Lambda$.
Solve the convex program
\begin{equation*}
\begin{aligned}
& \underset{\vec{p}}{\text{minimize}}
& & \sup_{t \in I}\left| \sum_{i=1}^{n} p_i e^{\lambda_i t} - v_t \right| \\
& \text{subject to}  & & \sum_i p_i = 1 \\
& & & p_i \ge 0, i = 1,\ldots,n. \\
& & 
\end{aligned}
\end{equation*}
In practice, we replace $\sup_{t \in I}$ with the discretization $\max_{t \in \mathcal{S}}$, where $\mathcal{S} \subset I$ is a finite mesh of points in $I$.
Since we are learning from samples, we also substitute $v_t$ with $\widetilde{v}_t$, the empirical estimate of the tail probability $\Pr[T=t \mid T \geq a]$. Since the $\ell_1$-norm of the $p_i$ is fixed to be $1$, we do not expect to need additional regularization to get sparse output.

For small instances (see \cref{sec:reconstruction-sims}), the convex program is more sample-efficient than the Matrix Pencil Method.
In the context of this paper, however, it does not come with robustness guarantees.
Results on convex programming approaches for super-resolution are known, due to Cand{\`e}s and Fernandez-Granda \cite{candes2013super}; for our (real-exponent) setting, a different analysis will be required and we leave this to future work. If we assume that the program does return a sparse output (which occurs in practice), some guarantees for the accuracy of the output follow automatically from the analysis of Theorem~\ref{thm:matrix-pencil-sample-complexity} and Theorem~\ref{thm:nt-hypothesis-testing}, since for sparse mixtures they (implicitly) bound parameter error in terms of the closeness in CDF-distance.

\textbf{Implementation in simulations: } In our experiments, we solved the above convex program using the barrier method of CPLEX version 12.8 with numerical emphasis enabled.
\subsection{Simulations: Additional Details}\label{sec:simulation-math}
\textbf{Earthmover's distance between parameters: }
The Earthmover's (or 1-Wasserstein) distance between $P$ and $Q$ measures the minimum transport cost to move the ``mass'' corresponding to probability distribution $P$ to that of $Q$. Rigorously, in one dimension it can be defined by
\[ EMD(P,Q) := \min_{\pi : \pi|_X = P, \pi|_Y = Q} \E_{(X,Y) \sim \pi}[|X - Y|]\]
where here $\pi$ ranges over all possible couplings of marginal distributions $P$ and $Q$.
The following definition makes the notion of Earthmover's distance between the parameters of two mixtures of exponentials precise:
\begin{defn}
Let $P$ and $Q$ be two mixtures of exponentials $P(T > t) = \sum_i p_i e^{-\lambda_i t}$ and $Q(T > t) = \sum_i q_i e^{-\gamma_i t}$. The \emph{Earthmover's distance in parameter space} between $P$ and $Q$ is the Earthmover's distance between corresponding atomic measures $\mu_P := \sum_i p_i \delta_{\lambda_i}$ and $\mu_Q := \sum_i q_i  \delta_{\gamma_i}$ where $\delta_x$ represents a Dirac mass at point $x$. 
\end{defn}

\noindent
\textbf{Derivation of Per-Instance Information-Theoretic Lower bounds: } Given the alternative instance, we derived the bound by computing the $H^2$ (Hellinger squared) distance between the true distribution and the alternative distribution, and then applying standard tensorization and comparison inequalities to bound the TV (as used in the proof of Corollary~\ref{cor:turan-tv}). 
\\\\
\noindent
\textbf{Upper bound simulations: } We ran 300 trials for each setting of $k$ and number of samples; in order to run the simulation for very large numbers of samples, we directly generated the corresponding noisy CDF estimates by adding Gaussian noise of order $O(1/\sqrt{N})$ where $N$ is the number of samples. For reasonable size $N$ we also ran the methods using actual sample-estimated CDFs and the results were consistent with the simulated Gaussian-noise CDFs. The lower bound is analytically computed, not simulated, so it is unaffected by this Gaussian-noise approximation.
\\\\
\noindent
\textbf{Plotted Data:} Here we provide the data plotted in Figure~\ref{fig:components-vs-logsamples}, that was found via simulation as described above.
\begin{table}[h]
    \centering
    \begin{tabular}{|c|c|c|c|}
        \hline
         k & CVX & MPM & LB \\
         \hline
         1 & $2.98 \times 10^{5}$ & $9.28 \times 10^4$ & $1.34 \times 10^4$ \\
         \hline
         2 & $3.25 \times 10^{8}$ & $3.45 \times 10^{10}$ & $8.18 \times 10^6$ \\
         \hline
         3 & $3.55 \times 10^{11}$ & $3.87 \times 10^{14}$ & $1.44 \times 10^{8}$ \\
         \hline
         4 & $1.21 \times 10^{14}$ & $1.40 \times 10^{19}$ & $1.13 \times 10^{9}$ \\
         \hline
         5 & N/A & $4.89 \times 10^{22}$ & $1.43 \times 10^{13}$\\
         \hline
    \end{tabular}
    \caption{Values plotted on a log (base 10) scale in Figure~\ref{fig:components-vs-logsamples}.}
    \label{table:appendix}
\end{table}

\end{document}